    \pgfplotsset{compat=1.3}
\pgfplotsset{width=5.5cm,compat=1.9}
\newtheorem{theorem}{Theorem}
\newtheorem{lemma}{Lemma}
\newtheorem{remark}{Remark}
\newtheorem{observation}{Observation}
\newtheorem{corollary}{Corollary}
\newenvironment{keyword}
{\par\noindent\textbf{Keywords:}}
{\par}
\title{Driver Identification and PCA-Augmented Selection–Shrinkage Framework for Nordic System Price Forecasting}
\author{Yousef Adeli Sadabad, Mohammad Reza Hesamzadeh,\\ György Dán, Matin Bagherpour, Darryl R. Biggar\footnote{Y. Adeli Sadabad (yoas@kth.se), M. R. Hesamzadeh (mrhesa@kth.se), G. Dán (gyuri@kth.se) are with KTH Royal Institute of Technology (Sweden), M. Bagherpour (Matin.Bagherpour@nordpoolgroup.com) is with Oslo University and Nord Pool (Norway), and D. R. Biggar (Darryl.Biggar@monash.edu) is with Monash University (Australia).}}
\date{}
\begin{document}

\maketitle

\begin{abstract}
The System Price (SP) of the Nordic electricity market serves as a key reference for financial hedge contracts such as Electricity Price Area Differentials (EPADs) and other risk management instruments. Therefore, the identification of drivers and the accurate forecasting of SP are essential for market participants to design effective hedging strategies. This paper develops a \textit{systematic framework} that combines interpretable drivers analysis with robust forecasting methods. It proposes an interpretable feature engineering algorithm to identify the main drivers of the Nordic SP based on a novel combination of K-means clustering, Multiple Seasonal-Trend Decomposition (MSTD), and Seasonal Autoregressive Integrated Moving Average (SARIMA) model. Then, it applies principal component analysis (PCA) to the identified data matrix, which is adapted to
the downstream task of price forecasting to mitigate the issue of imperfect multicollinearity in the data. Finally, we propose a multi-forecast selection-shrinkage algorithm for Nordic SP forecasting, which selects a subset of complementary forecast models based on their bias-variance tradeoff \textit{at the ensemble level} and then computes the optimal weights for the retained forecast models to minimize the error variance of the combined forecast. Using historical data from the Nordic electricity market, we demonstrate that the proposed approach outperforms individual input models uniformly, robustly, and significantly, while maintaining a comparable computational cost. Notably, our systematic framework produces superior results using simple input models, outperforming the state-of-the-art Temporal Fusion Transformer (TFT). Furthermore, we show that our approach also exceeds the performance of several well-established practical forecast combination methods. 
\end{abstract}

\begin{keyword}

Nordic system price, Feature engineering, Multi-forecast algorithm, Principal component analysis, Complementary models. 

\end{keyword}

\section{Introduction}
\label{Introduction}
Electricity prices are highly volatile and exhibit complex underlying dynamics \cite{bunn2004modelling}. These dynamics are driven by factors such as supply–demand imbalances, weather-dependent generation, fuel-price fluctuations~\cite{afanasyev2021fundamental}, and uncertainties in physical infrastructure~\citep{mosquera2019drivers, maciejowska2020assessing}. As a result, even day-ahead price forecasting remains a persistent challenge for system operators and market participants~\cite{karakatsani2008forecasting}. At the same time, understanding the main driving factors and developing improved forecasting algorithms can significantly enhance decision-making by increasing the predictability of future electricity prices~\cite{EI2006,EI2016,busch2023development}.

Electricity-price forecasting algorithms can be broadly categorized into statistical regression models and machine learning approaches \cite{weron2014electricity}. Despite numerous methods proposed in the literature, several challenges exist, particularly regarding (1) feature selection, (2) multicollinearity, and (3) the design of theoretically justified and robust ensemble models. These key challenges have been explored to varying degrees in the existing literature. 

Authors in \cite{lago2021forecasting} provide an in-depth review of electricity price forecasting techniques. Notably, some reviewed research suffers from misspecification problems, using concurrent relationships and reporting surprisingly low errors. Authors in~\cite{raviv2015forecasting} and~\cite{maciejowska2015forecasting} demonstrate the usefulness of using intraday hourly prices for forecasting the average daily day-ahead prices in the Nordic and the PJM markets. Their analyses, however, do not extend to forecasting across different delivery periods. Reference~\cite{marcjasz2018selection} compares the performance of models with univariate and multivariate structures, without features such as load and production categories, and shows a minor edge in the multivariate framework. However, they show that it does not uniformly outperform the univariate one. Our correlation analysis in Section \ref{sec:correlation} shows non-trivial dependencies across delivery periods, and it is crucial to use past values of other delivery periods to predict the price of a specific delivery period. This motivates the use of a panel data approach for system price forecasting.

In recent years, researchers have demonstrated the superiority of regression models with a large number of input features that utilize regularization techniques as implicit feature selection methods~\cite{ziel2018day}, and~\cite{ ziel2016forecasting}. The authors in~\cite{uniejewski2016automated} show that lasso and elastic net regressions outperform standard regression models by implicitly selecting features. \cite{MIRAKYAN2017228} acknowledge the issue of multicollinearity in electricity price forecasting and apply ridge regression to mitigate its effects. However, the paper does not provide a detailed analysis or explanation of the sources and mechanisms through which multicollinearity arises in electricity market data. In the current article, we show that the fundamental reason for using regularization techniques is the presence of imperfect multicollinearity in the panel data approach to electricity price forecasting. We also show that feature selection
and regularization can be separated, with an emphasis on interpretable feature selection and integrating regularization with the downstream task.

The authors in \cite{pourdaryaei2024new} use a hybrid feature selection method that combines Mutual Information (MI) and a Neural Network (NN). The MI is first applied to filter relevant, non-redundant input variables. A NN then selects the final optimal subset of features from this filtered set. While MI is a technique that can quantify the relationship between variables, the use of a neural network for the final selection is a nonlinear process, which can make the final feature choices less directly interpretable. Therefore, the overall feature selection process is not interpretable. 

Another important direction in the price-forecasting algorithms is to combine individual forecasting models. References \cite{hubicka2018note} and \cite{marcjasz2018selection} propose forecast combinations based on short-long calibration windows. Authors in \cite{nowotarski2014empirical} did a comprehensive empirical study on forecast averaging, demonstrating the usefulness of forecast averaging methods. They found that equal-weighting is a simple yet effective method when no single predictor dominates, while the Constrained Least Squares (CLS) method offers a good balance between robustness and accuracy. In contrast, methods such as Ordinary Least Squares (OLS) and Bayesian Model Averaging (BMA) were shown to be unsuitable for day-ahead price forecasting. However, there is no clear conclusion on which methods are superior, how to develop proper sub-models, or how to weight them. More importantly, the final model should be conclusive and perform uniformly across different delivery periods, which is not the case in \cite{nowotarski2014empirical}. In our paper, we suggest an approach that produces robust results in all delivery periods; none of the reviewed papers report that the final model outperforms the input models or the baseline across different delivery periods uniformly and robustly. Similarly, \citet{MIRAKYAN2017228} propose a composite approach combining ridge regression, neural networks, and support vector regression with weighting schemes such as inverse RMSE (IRMSE) and CLS to improve seasonal robustness. However, their model selection and combination strategies lack theoretical justification, leaving open the question of how to design ensembles that are both interpretable and theoretically grounded, an issue we explicitly address in our framework.

\citet{marcos2020short} and \citet{nitka2021forecasting} suggest that electricity price time series exhibit recurrent regimes, using $k$-means clustering \cite{marcos2020short} and $k$-nearest neighbors clustering \cite{nitka2021forecasting} to identify these regimes and then calibrate models on data segments resembling current conditions. However, given the substantial changes in the generation mix over the past decade, as well as evolving market characteristics such as increased price volatility and the occurrence of negative prices, calibrating models solely on historical segments risks producing biased factor estimates. In contrast, our approach uses $k$-means clustering not to restrict model training to past segments but to identify short-term and long-term drivers. The models are then allowed to determine the statistical significance and relative importance of each driver autonomously, thereby avoiding bias and enhancing interpretability.

Motivated by these research gaps, in our paper, we suggest a systematic approach that produces robust results in all delivery periods. Since the prices for all delivery periods are jointly determined, we employ the panel data approach. We show that regularization at the model level addresses the imperfect multicollinearity in panel data forecasting, and feature selection and regularization should be separated, with an emphasis on interpretable feature selection and integrating regularization with the downstream forecasting task. The Nordic SP is a reference price for several financial hedge contracts, including Electricity Price Area Differential (EPAD) contracts. Hence, identifying its drivers and improving its forecasting accuracy are essential for market participants seeking to manage price risks.   

With this background, the main contributions of the current paper are as follows:

\begin{enumerate}
    \item It provides a detailed statistical exploration of the Nordic System Price (SP), presenting the Nordic SP stylized facts through several observations. The null hypothesis of the long-term stationarity of the Nordic SP is checked using the Augmented Dickey-Fuller (ADF) and Kwiatkowski-Phillips-Schmidt-Shin (KPSS) tests, and the stability of the seasonal pattern in the Nordic SP is examined through the Canova-Hansen (CH) test.
    
    \item An interpretable feature-engineering algorithm is proposed to find the drivers of the Nordic SP. This algorithm is developed based on the K-means clustering, MSTD method, and the SARIMA model.
    
    \item The multi-forecast selection-shrinkage algorithm is proposed to forecast the Nordic SP. In the selection phase, it identifies the complementary forecast models from an ensemble of models, based on Theorem \ref{thm1} and Corollary \ref{crl1}. In the shrinkage phase, the optimal weights for combining the selected forecast models are computed via Theorem \ref{thm2}.
    
    \item We address the imperfect multicollinearity problem using Principal Component Analysis (PCA), and we show that PCA indeed leads to improved forecasting results. Additionally, as an alternative approach to the commonly used elbow method, we propose to find the optimal number of components based on the downstream forecasting task. Together, these contributions establish a \textit{systematic approach} to driver identification and robust forecasting of the Nordic system prices as shown in Fig. \ref{fig:model_flowchart}. 
\end{enumerate}

Table~\ref{tab:comparison} compares our contributions with existing papers in the literature. In most cases, the aspects checked are only partially addressed in the papers. The table is structured around seven key aspects, which we review in turn below. 

For the multicollinearity case (first column), we consider the application of regularization methods in regressions at the model level. However, none of the reviewed papers clearly state or analyze why and how multicollinearity arises in the electricity market data. 

For interpretable feature selection (second column), there are a few papers on electricity price forecasting addressing this aspect in detail, and, as we mentioned above, they mostly rely on regularization as implicit feature selection, which is a problem for practitioners who are interested in understanding the dynamics. 

In the panel data column (third column), papers that consider 24 separate univariate models are included. Yet, as we mentioned above, the complex relationship between delivery periods necessitates a panel data approach, while the reviewed studies account only for specific interconnections through selected variables. 

In the fourth column, the separation of feature selection from regularization, as we mentioned, has not been previously addressed. In the literature, we see regularization and feature selection typically integrated; this reduces the effect of regularization and also undermines interpretability. We will show that, when carefully designed, interpretable feature selection significantly reduces the need for regularization.

In the case of uniform performance (fifth column), none of the reviewed papers reports consistent outperformance across all delivery periods relative to input models or baselines. 

Regarding model selection (sixth column), we note that none of the reviewed studies explicitly provides a theoretical framework for model development and selection in the context of ensemble forecasting. Instead, the works listed in this column typically focus on combining a diverse set of well-established models or describing heuristic procedures for model aggregation.

In the case of seasonal stationarity analysis (seventh column), this aspect is overlooked in all studied papers. We therefore included only those studies that use dummy variables for seasonality. In electricity markets, seasonality can occur daily, weekly, or yearly. In the day-ahead market, since the data frequency is daily, the most relevant seasonality component for short-term predictions is weekly seasonality. Consequently, statistical models should explicitly test for and account for stationarity at seasonal frequencies.

As Table \ref{tab:comparison} clearly highlights, while existing studies partially address some aspects, none provide a unified and systematic approach. Our paper contributes by addressing all seven aspects simultaneously.

\begin{table}[ht]
\centering
\footnotesize 
\resizebox{\textwidth}{!}{%
\begin{tabular}{p{5cm} >{\centering\arraybackslash}p{2cm} >{\centering\arraybackslash}p{2cm} >{\centering\arraybackslash}p{2cm} >{\centering\arraybackslash}p{3cm} >{\centering\arraybackslash}p{2cm} >{\centering\arraybackslash}p{2cm} >{\centering\arraybackslash}p{2.2cm}}
\toprule
\textbf{Reference} & \textbf{\parbox{2cm}{Addressing\\Multi-\\collinearity}} & \textbf{\parbox{2cm}{Interpretable\\Feature\\Selection}} & \textbf{\parbox{2cm}{Panel\\Data}} & \textbf{\parbox{2.5cm}{Separation\\ of Feature\\Selection\\ from Regularization}} & \textbf{\parbox{2cm}{Uniform\\Performance}} & \textbf{Model Selection} & \textbf{\parbox{2cm}{Seasonal\\(Stationarity) Analysis}} \\
\midrule
\cite{raviv2015forecasting} & & & & & $\checkmark$ & & $\checkmark$ \\
\cite{ziel2018day} & & & $\checkmark$ & & & & \\
\cite{marcjasz2018selection} & & & & & $\checkmark$ & $\checkmark$ & \\
\cite{hubicka2018note} & & & & & & & \\
\cite{lago2018forecasting} & & & & & &$\checkmark$ &$\checkmark$ \\
\cite{ziel2016forecasting} & $\checkmark$ & & & & $\checkmark$ & & \\
\cite{lehna2022forecasting} & & & & & & $\checkmark$ & $\checkmark$ \\
\cite{hong2012day} & & & & & $\checkmark$& & $\checkmark$ \\ 
\cite{olivares2023neural} & &$\checkmark$ & & &$\checkmark$ & & $\checkmark$\\
\cite{marcos2020short} & & & & & & & \\
\cite{pourdaryaei2024new} & & & & &$\checkmark$ & & $\checkmark$\\
\cite{uniejewski2016automated} & $\checkmark$ & & $\checkmark$ & & $\checkmark$ & & \\
\cite{MIRAKYAN2017228} & $\checkmark$ & & & & & $\checkmark$ & \\
\cite{kitsatoglou2024ensemble} & & & & &$\checkmark$ &$\checkmark$ &$\checkmark$ \\
\cite{jiang2023multivariable} & & & & & $\checkmark$&$\checkmark$ &$\checkmark$ \\
\cite{nowotarski2014empirical} & & & & & & & $\checkmark$ \\
\cite{nitka2021forecasting} & & & $\checkmark$ & & & & $\checkmark$ \\
\textbf{Our paper} & $\checkmark$ & $\checkmark$ & $\checkmark$ & $\checkmark$ & $\checkmark$ & $\checkmark$ & $\checkmark$ \\
\bottomrule
\end{tabular}%
}
\caption{Comparison of our contributions with existing papers in the literature: A checkmark ($\checkmark$) indicates that the aspect is addressed. }
\label{tab:comparison}
\end{table}

This paper is organized as follows: In Section \ref{sec:section2}, background and problem formulation are presented. Statistical characterization of the Nordic SP is discussed in Section \ref{sec:stat}. Section \ref{sec:Section IV} explains the forecast-optimized feature engineering approach. The multi-forecast selection-shrinkage algorithm is proposed in Section \ref{sec:multi-forecast}. Section \ref{sec:Numerical} provides a comprehensive analysis of the numerical results. Conclusions are presented in Section \ref{sec:conc}, while Appendix \ref{app:1} provides a concise background on the CH test. 

\section{Background and Problem Formulation}
\label{sec:section2}
The Nordic day-ahead electricity market is coupled with the European market through the Single Day-Ahead Coupling (SDAC) model. At 10:00 CET, the Transmission System Operator (TSO) publish the day ahead available transmission capacities (ATC), and market participants have until 12:00 CET to submit bids to Nord Pool. The Nordic SP is calculated on the basis of the results of the SDAC model, and all results are published at 12:45 CET. After validation of these results, the SP is confirmed around 13:00 CET.

Thus, the day-ahead hourly prices for delivery of electricity at day $d$ computed on day $d-1$ are best modeled as a 24-dimensional price vector $\textbf{p}_d=(p_{d,1}, p_{d,2},...,p_{d,24})$.
We denote the SP dataset up to day $d$ by $\textbf{P}_d = (\textbf{p}_1^T,...,\textbf{p}_d^T)^T$.
Importantly, information after 12:00 CET is not applicable for forecasting day-ahead SP. Let us denote the data available before 12:00 CET on the day $d-1$ concerning day $d$ by $\textbf{Y}_{d-1} = (\textbf{y}_1^T,...,\textbf{y}_{d-1}^T)^T$, e.g., daily transmission capacities, daily day-ahead consumption for the day $d-1$ which is calculated on day $d-2$, gas prices, etc., i.e., all daily data that can be used for forecasting $\textbf{p}_d$, with $\textbf{y}_d = (y_{d,1},...,y_{d,m})$, where $m$ is number of raw features. Accordingly, we have a significant-features vector $\textbf{x}$ with the representation $\textbf{x}_d = (x_{d,1},...,x_{d,n})$, where $n < m$ is the number of significant features and the matrix $\textbf{X}_d = (\textbf{x}_1^T,...,\textbf{x}_d^T)^T $ is the matrix of significant features. Essentially, significant features are raw features with significant explanatory power on the SP (identified in Section~\ref{sec:Section IV}). Finally, we have a reduced feature matrix $\textbf{Z}$ that is obtained after applying PCA to the matrix $[\textbf{P}|\textbf{X}]$.

Our objective is to solve the problem 
\begin{equation}
    \min_{\textbf{f},\phi}\sum_{d=L+1}^{D}(\textbf{p}_d-\textbf{f}(\phi(\textbf{Y}_{d-1}^L)))^2,
\end{equation}
where $L$ is the look back parameter and $\textbf{Y}_{d-1}^L$ is the sub-matrix of $\textbf{Y}_{d-1}$ consisting of the data of the last $L$ days, i.e., days $d-L$ to $d-1$. The output of the $\phi(\textbf{Y}_{d-1}^L)$ is either $\textbf{Z}_{d-1}^L=\phi(\textbf{Y}_{d-1}^L)$, which is the input for machine learning models, and we call it the reduced features data matrix with look back $L$, or $[\textbf{P}|\textbf{X}]$, which is used as input to statistical models, and $\textbf{f}$ is the forecasting model. We are interested in how to design $\phi$ and $\textbf{f}$.

\section{Statistical characterization of the Nordic SP} \label{sec:stat}

\subsection{Data set}
We use historical Nordic SP data from January 1, 2015, to May 31, 2024, covering 3,439 days with 24 hourly observations per day. Table~\ref{tab:DESCRIPTIVE-STATISTICS} summarizes the descriptive statistics, and Fig.~\ref{fig:SystemPrice_Plot} shows the historical SP series, which exhibits frequent price spikes and sustained periods of high volatility. Except for the first six delivery periods and the last one, the standard deviation of hourly prices exceeds the mean, indicating substantial dispersion and fat-tailed behavior—suggesting that volatility dynamics dominate mean dynamics in this market. Moreover, the mean values are close to the 75th percentile, implying that when deviations occur, they tend to be large. These stylized facts highlight the need for forecasting methods that can capture volatility-driven price dynamics rather than focusing solely on mean prediction.

\begin{figure}[t]
\centering
\includegraphics[width=0.7\columnwidth]{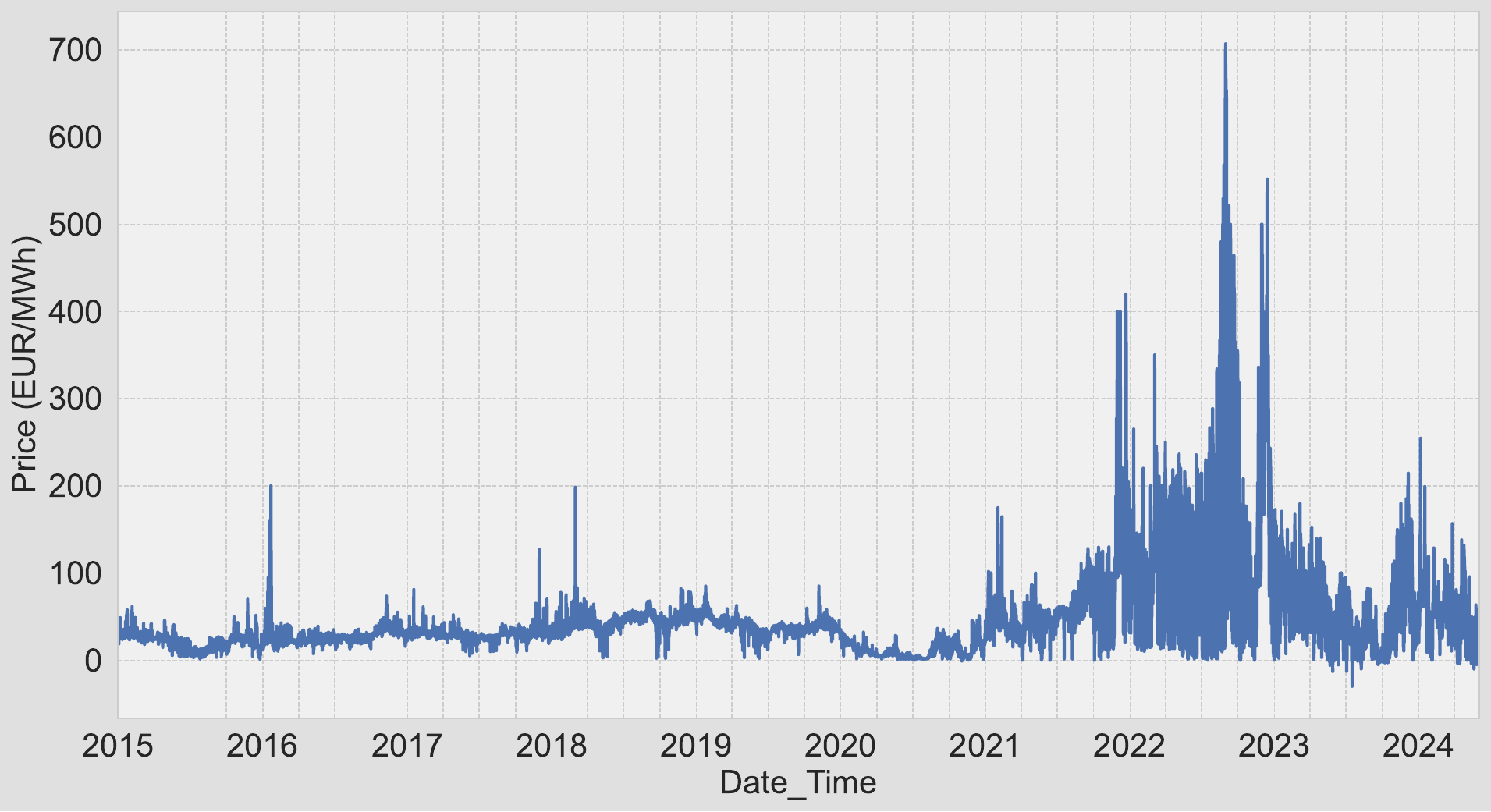}
\caption{Historical hourly values of the  Nordic system price.}
\label{fig:SystemPrice_Plot}
\end{figure}

\begin{table}[t]
\centering
\caption{\sc System Price Descriptive Statistics}
\label{tab:DESCRIPTIVE-STATISTICS}
\resizebox{\textwidth}{!}{%
\begin{tabular}{c c c c c c c c c c c c c c c c c c c c c c c c}
\toprule
& 0-1 & 1-2 & 2-3 & 3-4 & 4-5 & 5-6 & 6-7 & 7-8 & 8-9 & 9-10 & 10-11 & 11-12 \\
\hline
\midrule
mean & 37.16 & 35.22 & 33.97 & 33.38 & 34.10 & 36.96 & 42.30 & 48.40 & 51.83 & 51.05 & 49.37 & 47.45 \\
std & 33.61 & 31.27 & 29.90 & 29.30 & 30.23 & 33.62 & 43.02 & 52.35 & 57.70 & 56.68 & 53.93 & 50.43 \\
min & -4.38 & -5.49 & -6.06 & -5.71 & -4.74 & -3.07 & -2.58 & -1.19 & -2.78 & -9.14 & -11.74 & -13.03 \\
25\% & 22.73 & 21.32 & 20.44 & 20.06 & 20.26 & 22.09 & 24.51 & 26.59 & 27.73 & 27.77 & 27.35 & 26.77 \\
50\% & 29.80 & 28.85 & 28.22 & 27.98 & 28.44 & 29.91 & 32.13 & 34.99 & 36.45 & 36.06 & 35.58 & 35.03 \\
75\% & 39.85 & 38.47 & 37.71 & 37.49 & 37.94 & 40.08 & 44.01 & 49.65 & 52.04 & 51.23 & 49.60 & 47.80 \\
max & 381.98 & 339.84 & 326.18 & 284.31 & 283.29 & 304.96 & 561.49 & 688.32 & 700.00 & 679.93 & 651.68 & 560.48 \\
\hline
\midrule
& 12-13 & 13-14 & 14-15 & 15-16 & 16-17 & 17-18 & 18-19 & 19-20 & 20-21 & 21-22 & 22-23 & 23-0 \\
\hline
\midrule
mean & 45.57 & 44.11 & 43.56 & 44.30 & 46.39 & 50.14 & 51.33 & 50.63 & 48.45 & 46.25 & 43.38 & 38.97 \\
std & 47.31 & 46.03 & 45.75 & 47.18 & 50.18 & 55.57 & 57.10 & 57.19 & 54.31 & 49.71 & 43.78 & 36.08 \\
min & -19.96 & -25.10 & -29.90 & -25.04 & -11.06 & -2.74 & -1.69 & -1.80 & -1.91 & -2.06 & -2.57 & -3.31 \\
25\% & 26.07 & 25.52 & 25.00 & 25.12 & 25.65 & 26.77 & 27.34 & 27.12 & 26.50 & 26.02 & 25.14 & 23.79 \\
50\% & 34.33 & 33.73 & 33.19 & 33.38 & 34.13 & 35.64 & 36.19 & 35.48 & 34.40 & 33.59 & 32.33 & 30.61 \\
75\% & 46.41 & 44.98 & 44.80 & 45.38 & 47.49 & 50.47 & 50.96 & 49.93 & 47.50 & 45.76 & 43.77 & 40.58 \\
max & 534.26 & 520.18 & 533.10 & 549.91 & 563.69 & 660.60 & 689.84 & 706.87 & 677.79 & 648.01 & 542.26 & 375.79 \\
\bottomrule
\end{tabular}
}
\end{table}

The first occurrence of negative pricing was recorded on November 2, 2020, an event not observed in the price history dating back to 2012. The sharp price surge from late 2021 through 2023 can be partly attributed to the dramatic rise in European natural gas prices during this period. As shown in Section~\ref{sec:Section IV}, the Nordic region itself is not heavily reliant on natural gas for electricity production. Still, its price is influenced by cross-border coupling with countries where gas-fired generation is more significant. Therefore, the period from late 2021 to late 2023 is atypical for the Nordic SP and does not fully reflect the market’s underlying dynamics. When modeling the entire sample, regularization techniques are necessary to prevent overfitting this exceptional regime and ensure that forecasts generalize well.

Fig. \ref{fig:Daily} and Fig. \ref{fig:Yearly} show a box plot of the hourly Nordic SP per hour and per month, respectively, and show that there is considerable seasonality at both daily and yearly cycles. On an hourly basis, there is a morning peak between 8:00 am and 10:00 am and an evening peak between 5:00 pm and 7:00 pm. To remove daily seasonality, we subtract the average price of delivery period $h$, $\bar{p}_{.,h}=\frac{1}{D}\sum_{d=1}^{D}p_{d,h}$, from each hourly price $p_{d,h}$, and then from each resulting hourly price we subtract the average price of the corresponding month.

\begin{figure}[t!]  
    \centering
    
    \begin{minipage}{0.48\textwidth}
        \centering
        \includegraphics[width=\linewidth,height=5cm,keepaspectratio]{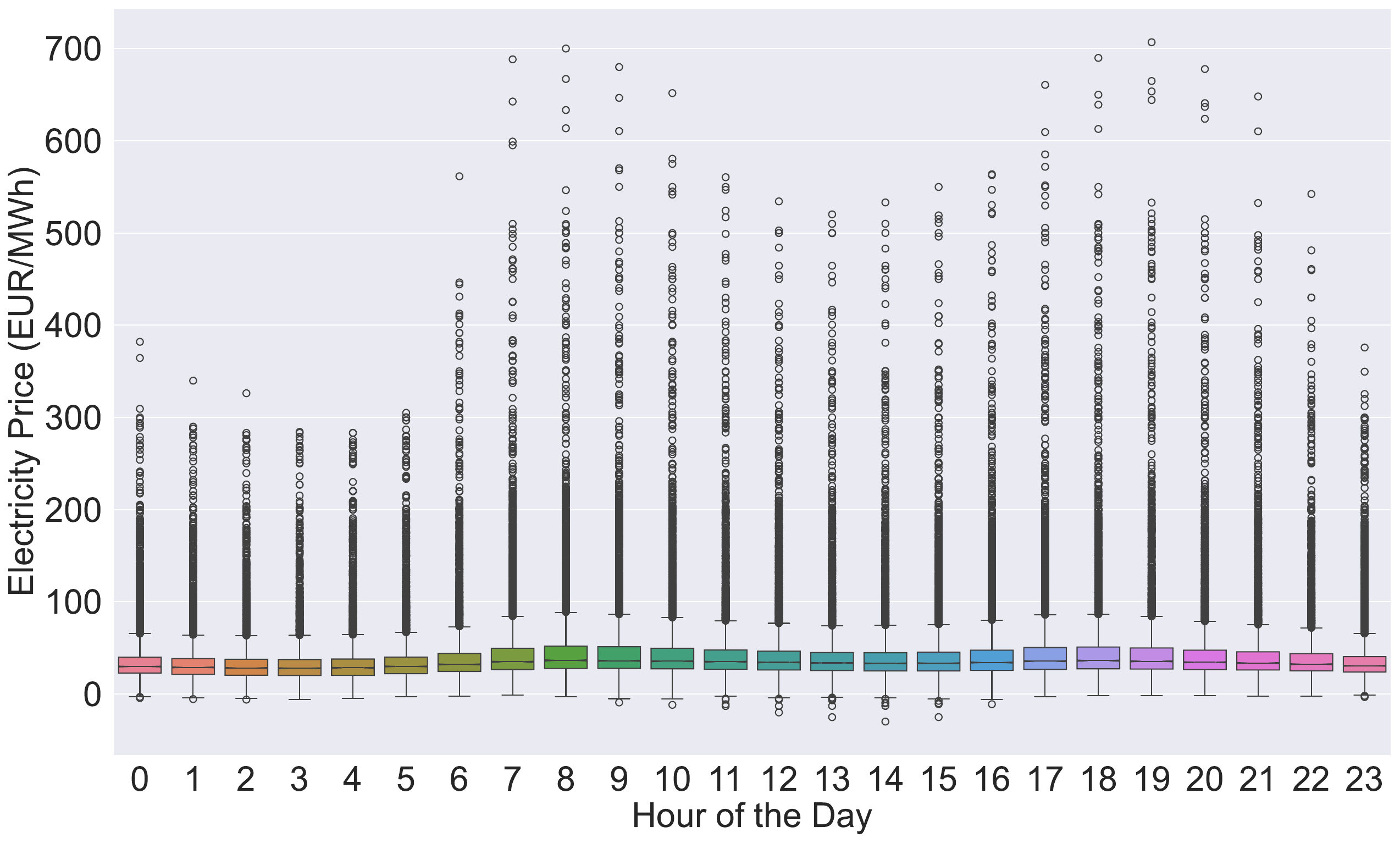}
        \subcaption{Daily seasonality}
        \label{fig:Daily}
    \end{minipage}%
    \hfill
    \begin{minipage}{0.48\textwidth}
        \centering
        \includegraphics[width=\linewidth,height=5cm,keepaspectratio]{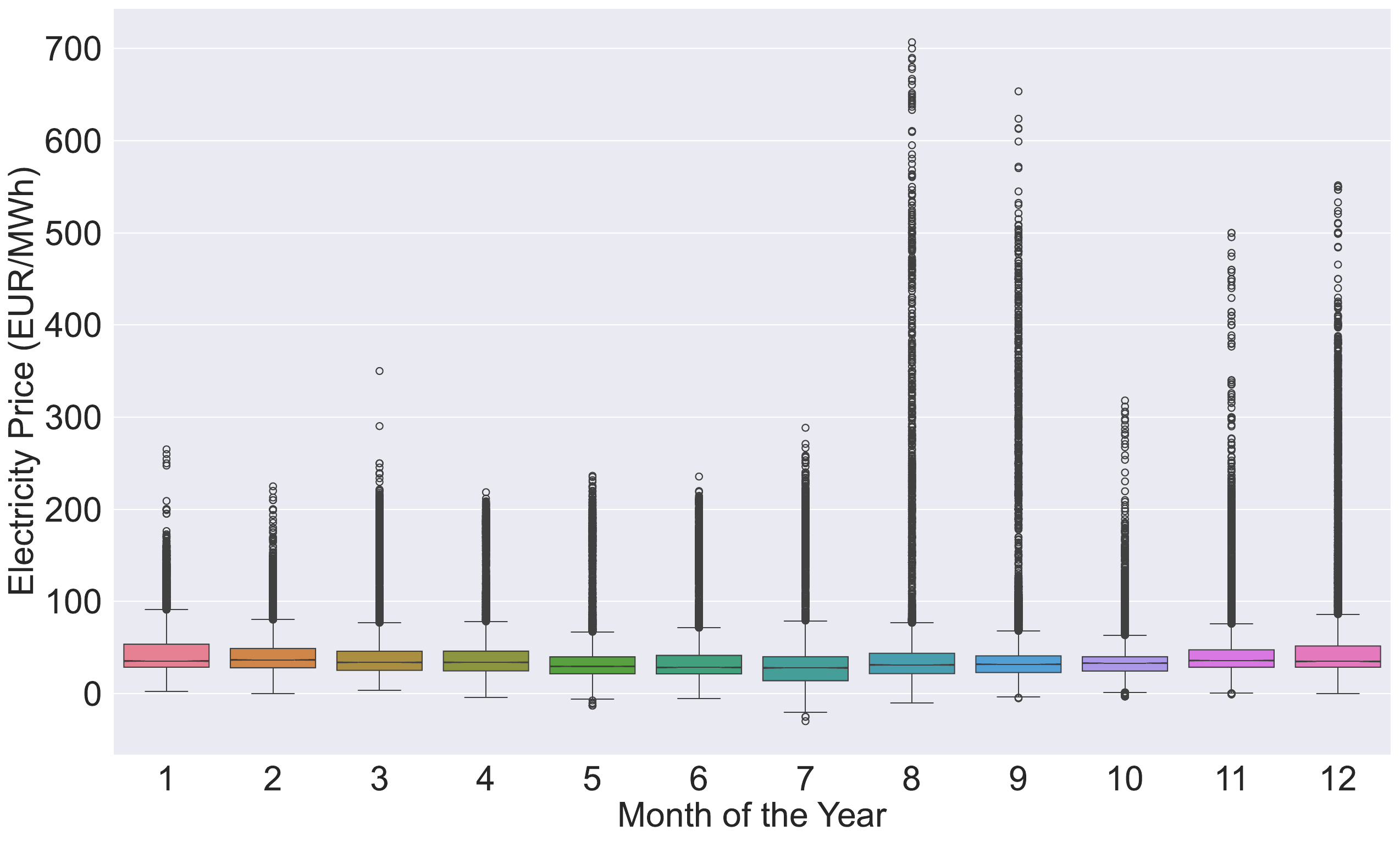}
        \subcaption{Yearly seasonality}
        \label{fig:Yearly}
    \end{minipage}
    
    \vspace{10pt}
    \begin{minipage}{\textwidth}
        \centering
        \includegraphics[width=0.8\linewidth,height=5cm,keepaspectratio]{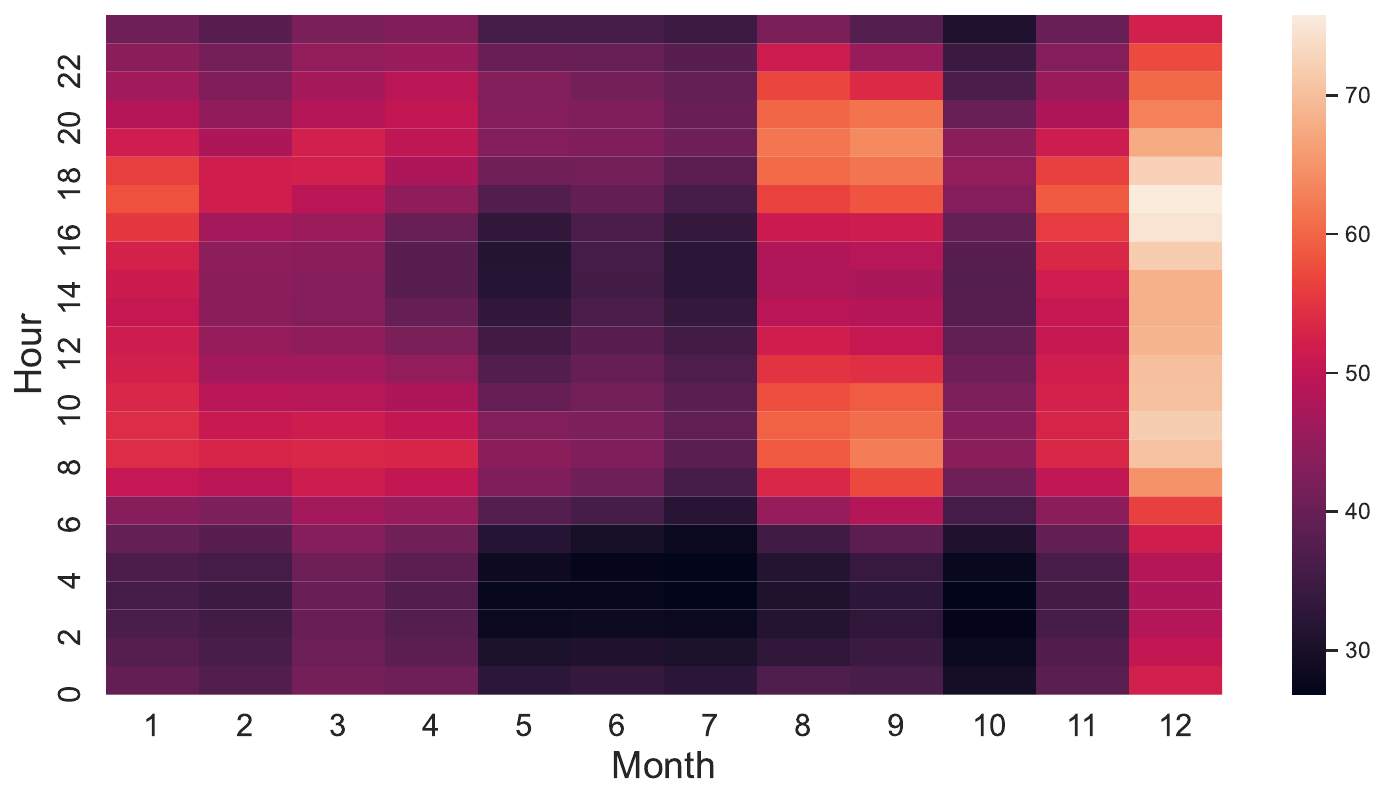}
        \subcaption{Heatmap of hour-Month}
        \label{fig:Heatmap}
    \end{minipage}
    
    \caption{Visualization of Nordic system price patterns: (a) shows daily seasonality, (b) displays yearly seasonality, and (c) presents a heatmap of monthly average hourly prices.}
    \label{fig:Seasonality}
\end{figure}

Fig. \ref{fig:Heatmap} shows a heat map of the monthly average hourly SP of each delivery period. The figure shows that the hourly SP exhibits an annual pattern. Specifically, prices across all delivery periods tend to be lower during May, June, and July. In contrast, prices in December are significantly higher than in other months, indicating significant seasonal variation.

We note that removing outliers should be performed cautiously in price analysis, as removing the highest or lowest 5\% of SP data would alter key statistical characteristics. In our work, we thus do not remove outliers.

\subsection{Correlation structure}
\label{sec:correlation}
Fig. \ref{fig:Correlation Structure} shows the correlation structure of the hourly prices over 24 hours. Fig. \ref{fig:Corr} shows the same day autocorrelation matrix, between prices of different hours within the same day. The figure shows a cluster that goes down after a jump at 6 am along the leading diagonal. Also, the two daily peaks are apparent from the off-diagonal clusters, and the lowest correlation is between the delivery periods 2-5 and the delivery periods 18-23. Fig. \ref{fig:Corr1} to Fig. \ref{fig:Corr7} show the autocorrelation matrices computed over prices with a lag of $i$ days. They show the correlation between $\textbf{p}_d$ (horizontal axis) and $\textbf{p}_{d-i}$ (vertical axis). From the lag-one autocorrelation matrix, we infer that, for example, for the first delivery period (hour zero to one), the lagged values of hours 12 to 23 have a more substantial impact on the price than itself. A similar pattern has been found before for the EPEX spot price for Germany and Austria, and the APX spot price for the Netherlands by \cite{ziel2016forecasting}. We also see the more complex pattern of the same kind in higher-order lags; for example, we can observe in Fig. \ref{fig:Corr7} that lagged values of the sixth delivery period have a higher correlation coefficient with the first four delivery periods than their own correlation. Therefore, in predicting the price of the specific delivery period of a day, we should use the lagged values of the prices of the other delivery periods. The analysis reveals that for the first six delivery periods of the day, the lagged values of some proceeding hours exhibit a higher correlation with the hourly price of that delivery period than the lagged values of the same delivery period. This finding suggests that, when forecasting hourly electricity prices, it is essential to consider the interconnections between the prices in different delivery periods rather than focusing solely on the lagged values of the target delivery period. Another important observation from the correlation structure of the Nordic system price is the coupling between hours 7 to 10 with hours 17 to 22; lagged values of one cluster are highly correlated with the current values of another cluster.

\begin{figure}[H]
    \centering
    \begin{subfigure}[b]{0.48\textwidth}
        \centering
        \includegraphics[width=\textwidth,height=4cm]{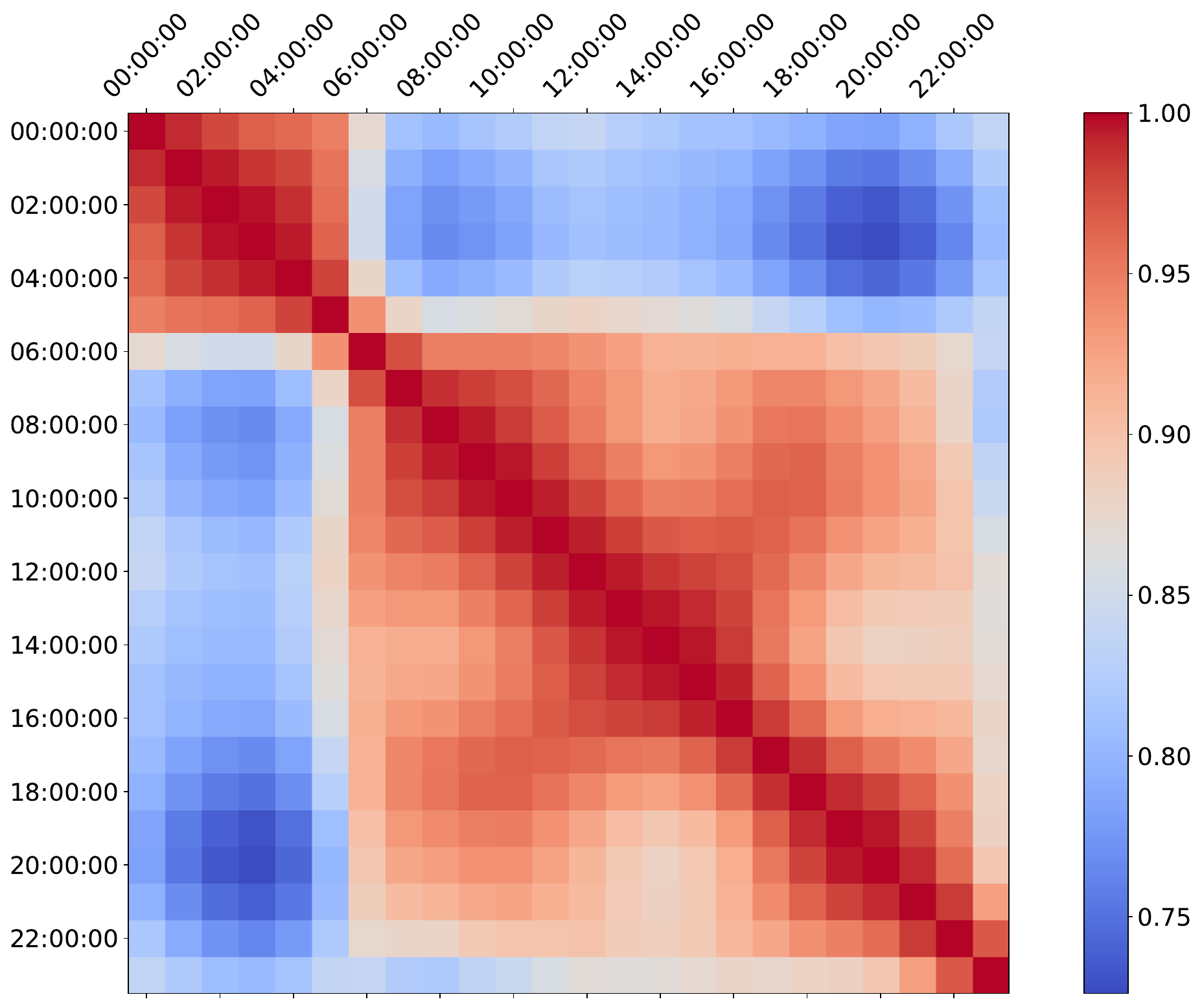}
        \caption{Correlation}
        \label{fig:Corr}
    \end{subfigure}%
    \hfill
    \begin{subfigure}[b]{0.48\textwidth}
        \centering
        \includegraphics[width=\textwidth,height=4cm]{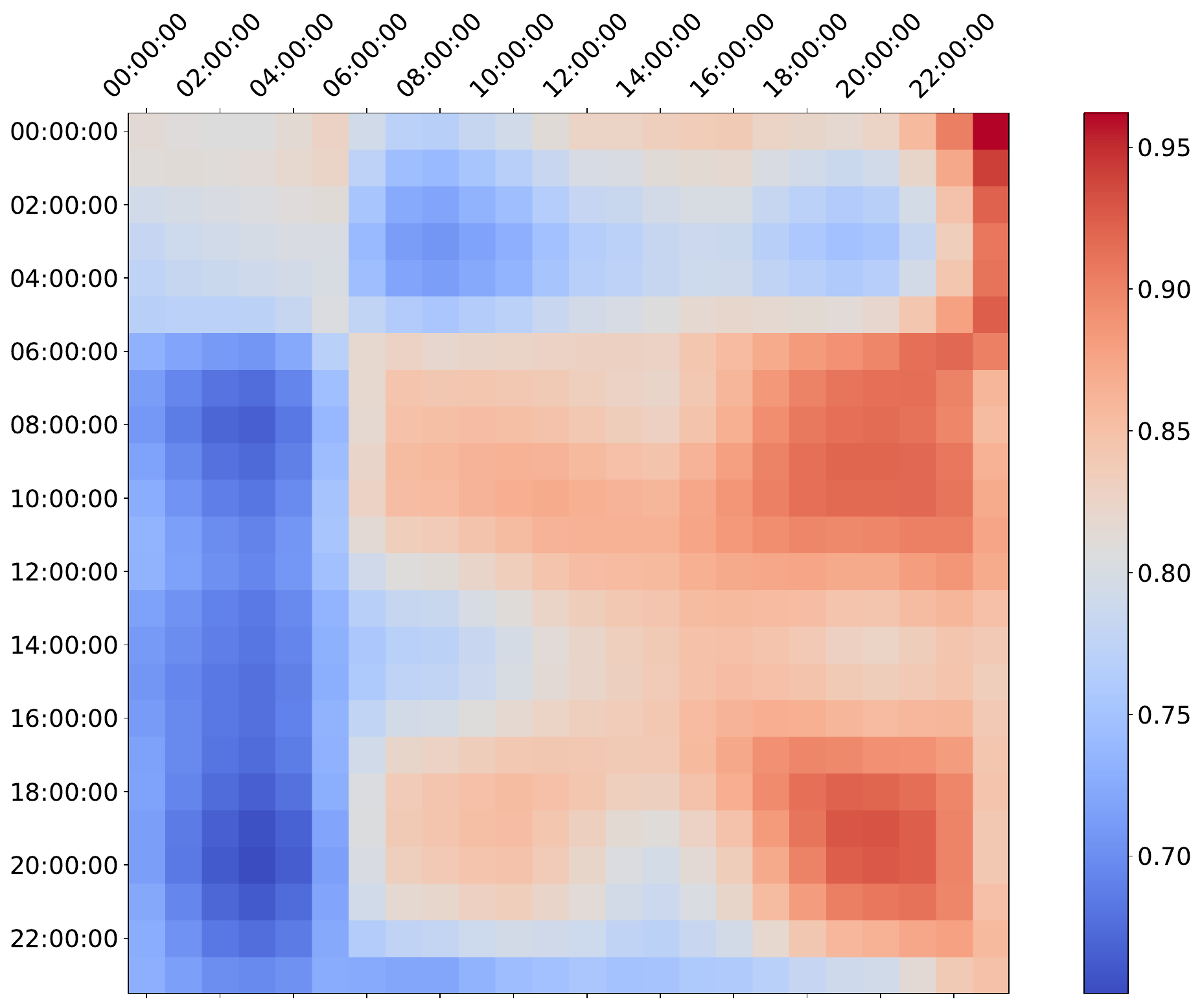 }
        \caption{Lag 1 correlation}
        \label{fig:Corr1}
    \end{subfigure}
    
    \vspace{0.3cm}
    \begin{subfigure}[b]{0.48\textwidth}
        \centering
        \includegraphics[width=\textwidth,height=4cm]{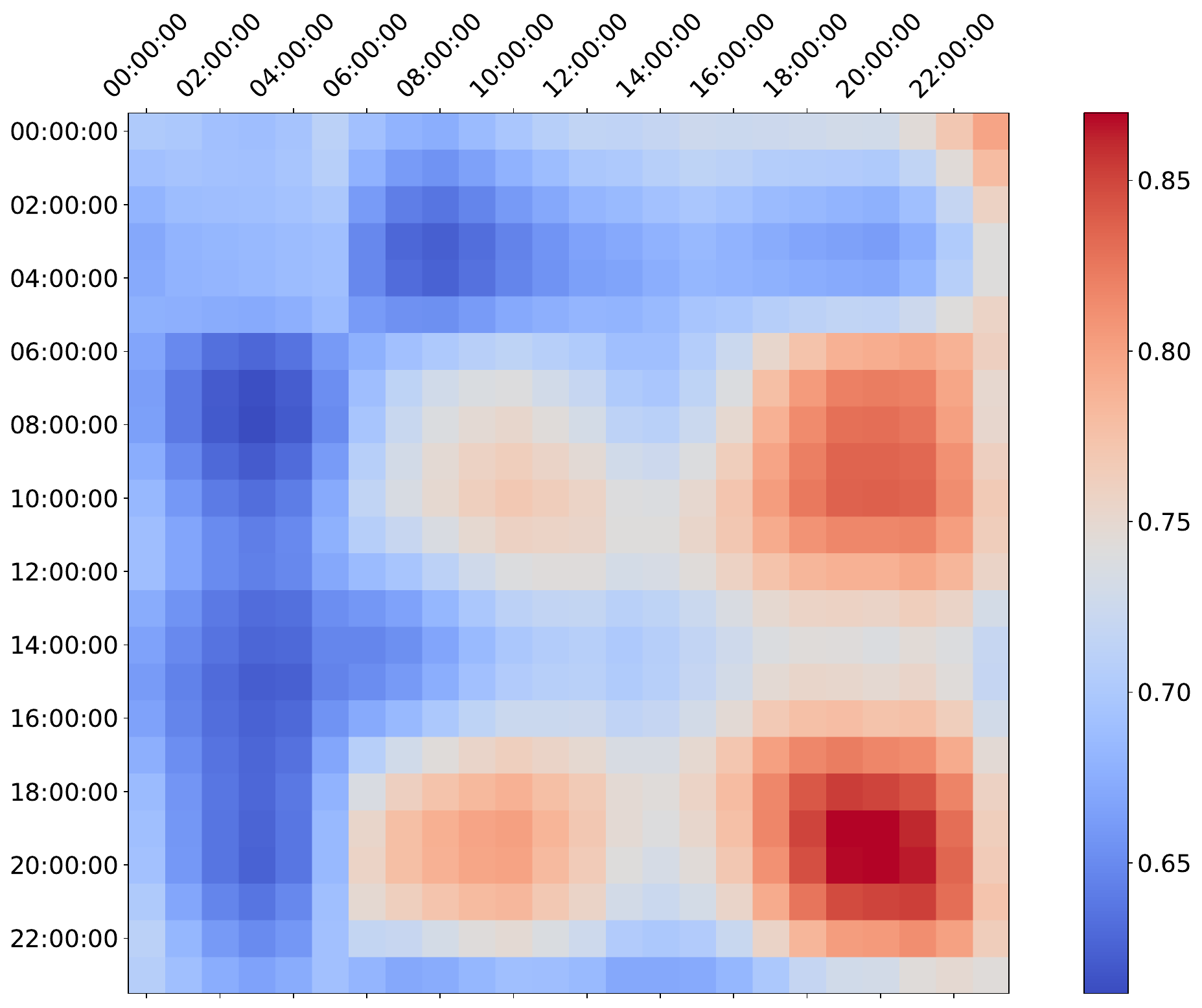}
        \caption{Lag 2 correlation}
        \label{fig:Corr2}
    \end{subfigure}%
    \hfill
    \begin{subfigure}[b]{0.48\textwidth}
        \centering
        \includegraphics[width=\textwidth,height=4cm]{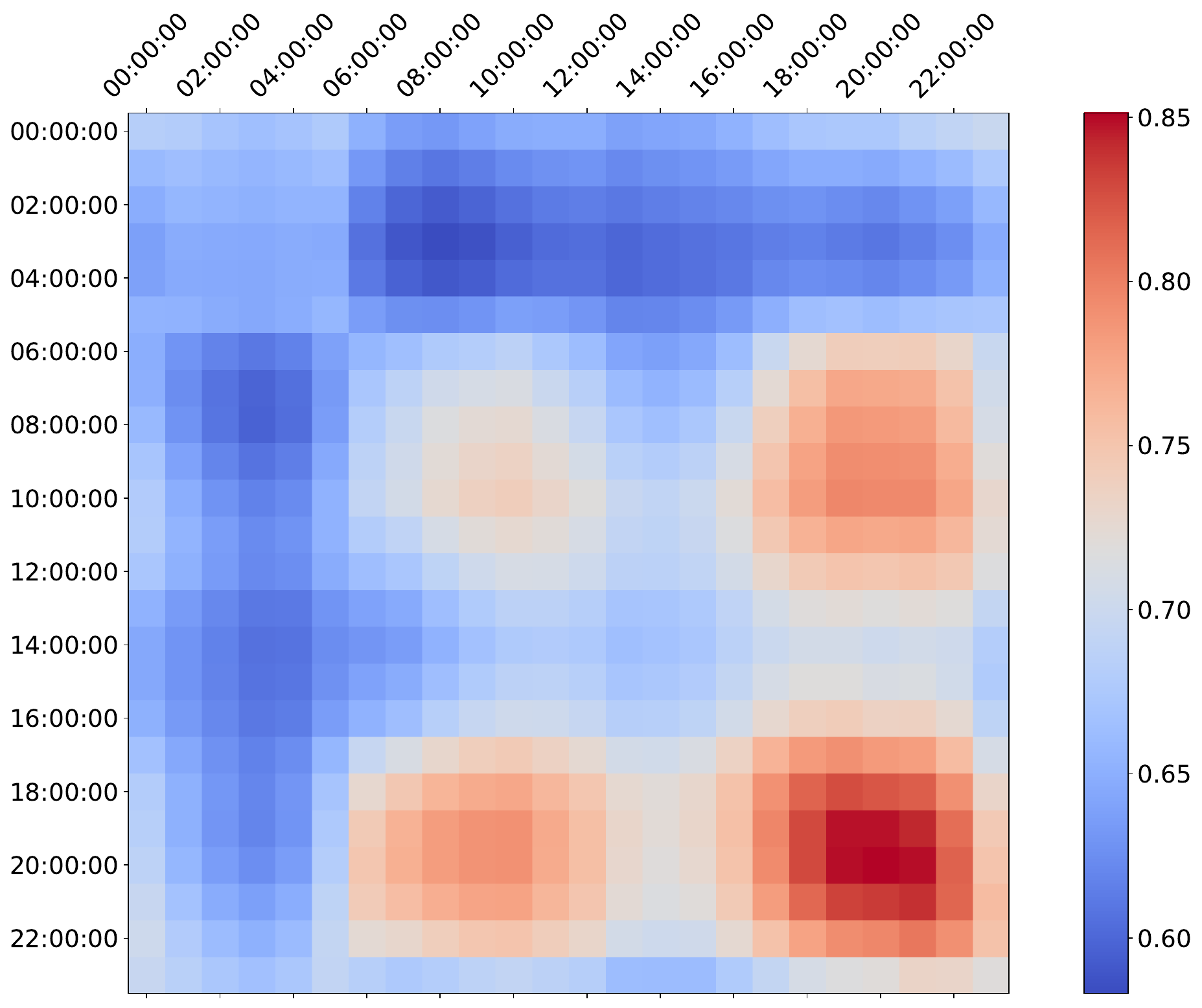}
        \caption{Lag 3 correlation}
        \label{fig:Corr3}
    \end{subfigure}
    
    \vspace{0.3cm}
    \begin{subfigure}[b]{0.48\textwidth}
        \centering
        \includegraphics[width=\textwidth,height=4cm]{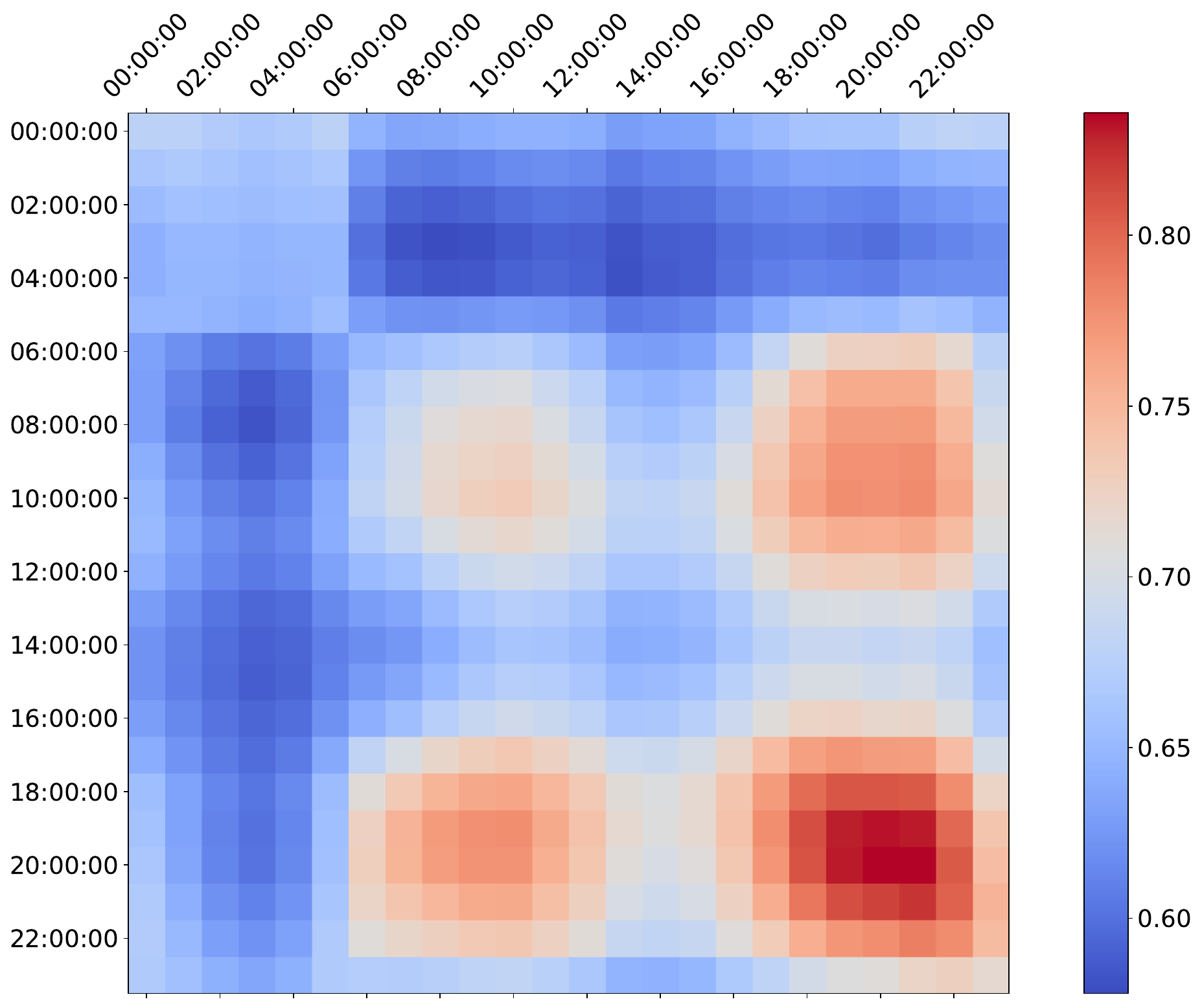}
        \caption{Lag 4 correlation}
        \label{fig:Corr4}
    \end{subfigure}%
    \hfill
    \begin{subfigure}[b]{0.48\textwidth}
        \centering
        \includegraphics[width=\textwidth,height=4cm]{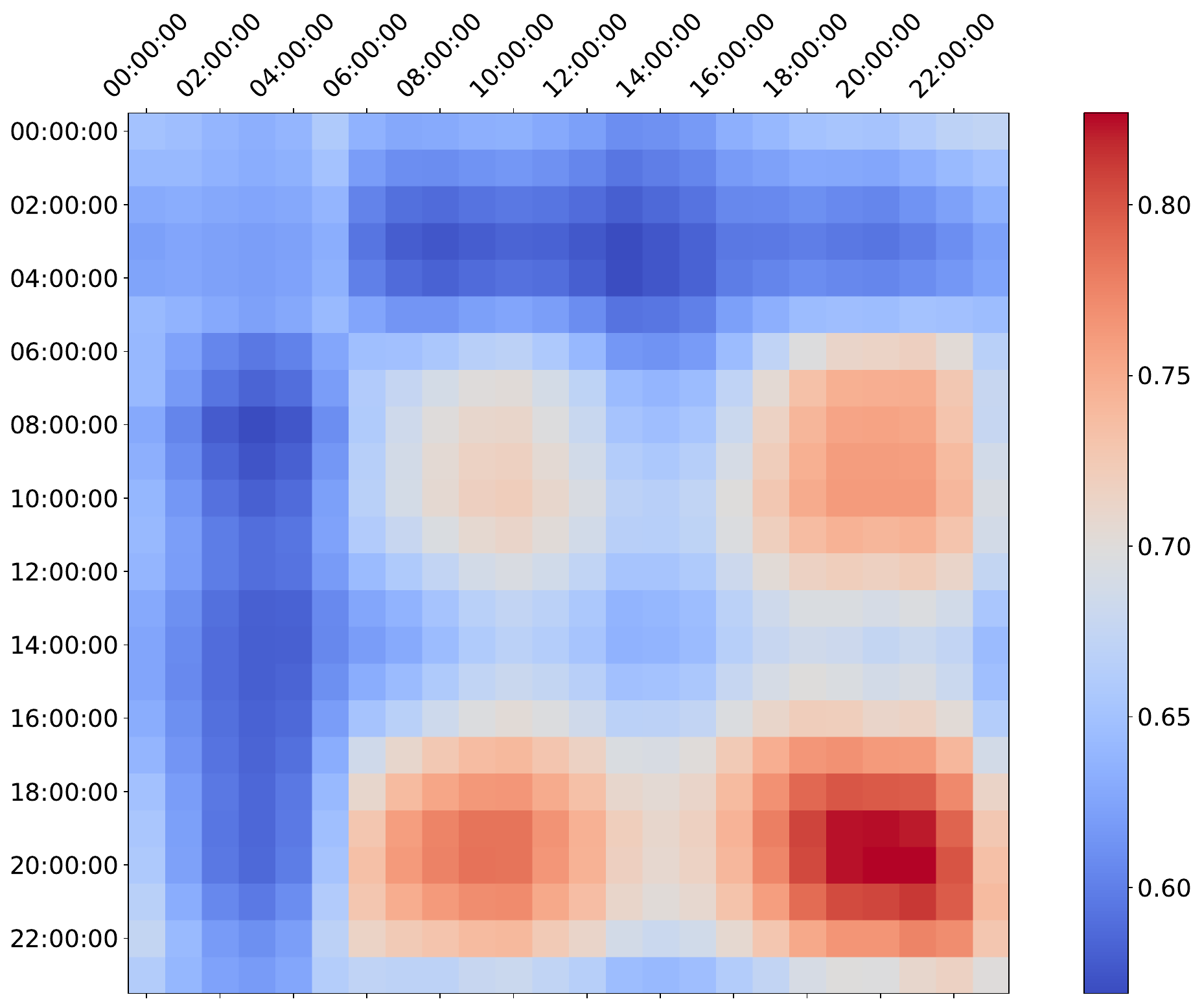}
        \caption{Lag 5 correlation}
        \label{fig:Corr5}
    \end{subfigure}
    
    \vspace{0.3cm}
    \begin{subfigure}[b]{0.48\textwidth}
        \centering
        \includegraphics[width=\textwidth,height=4cm]{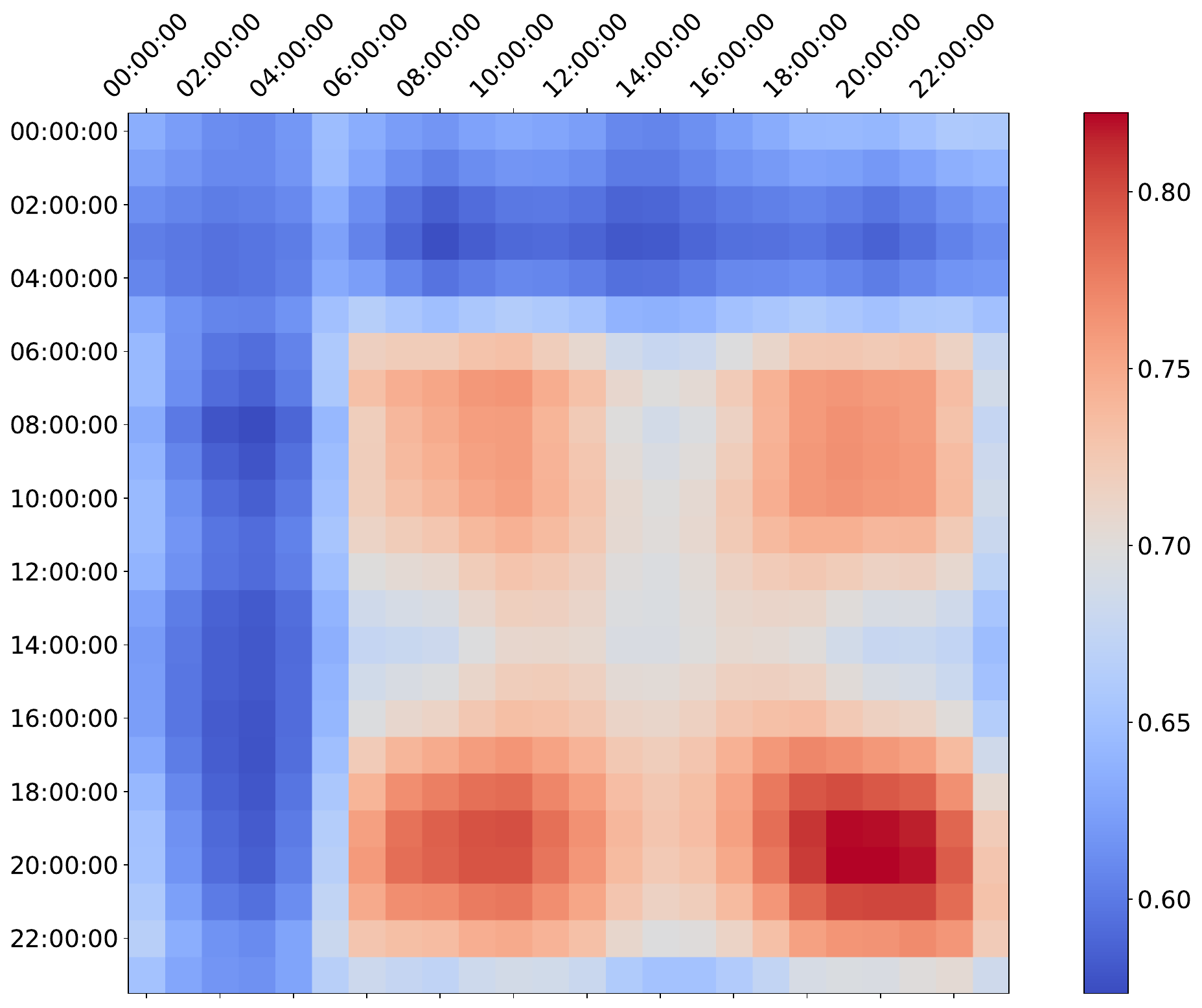}
        \caption{Lag 6 correlation}
        \label{fig:Corr6}
    \end{subfigure}%
    \hfill
    \begin{subfigure}[b]{0.48\textwidth}
        \centering
        \includegraphics[width=\textwidth,height=4cm]{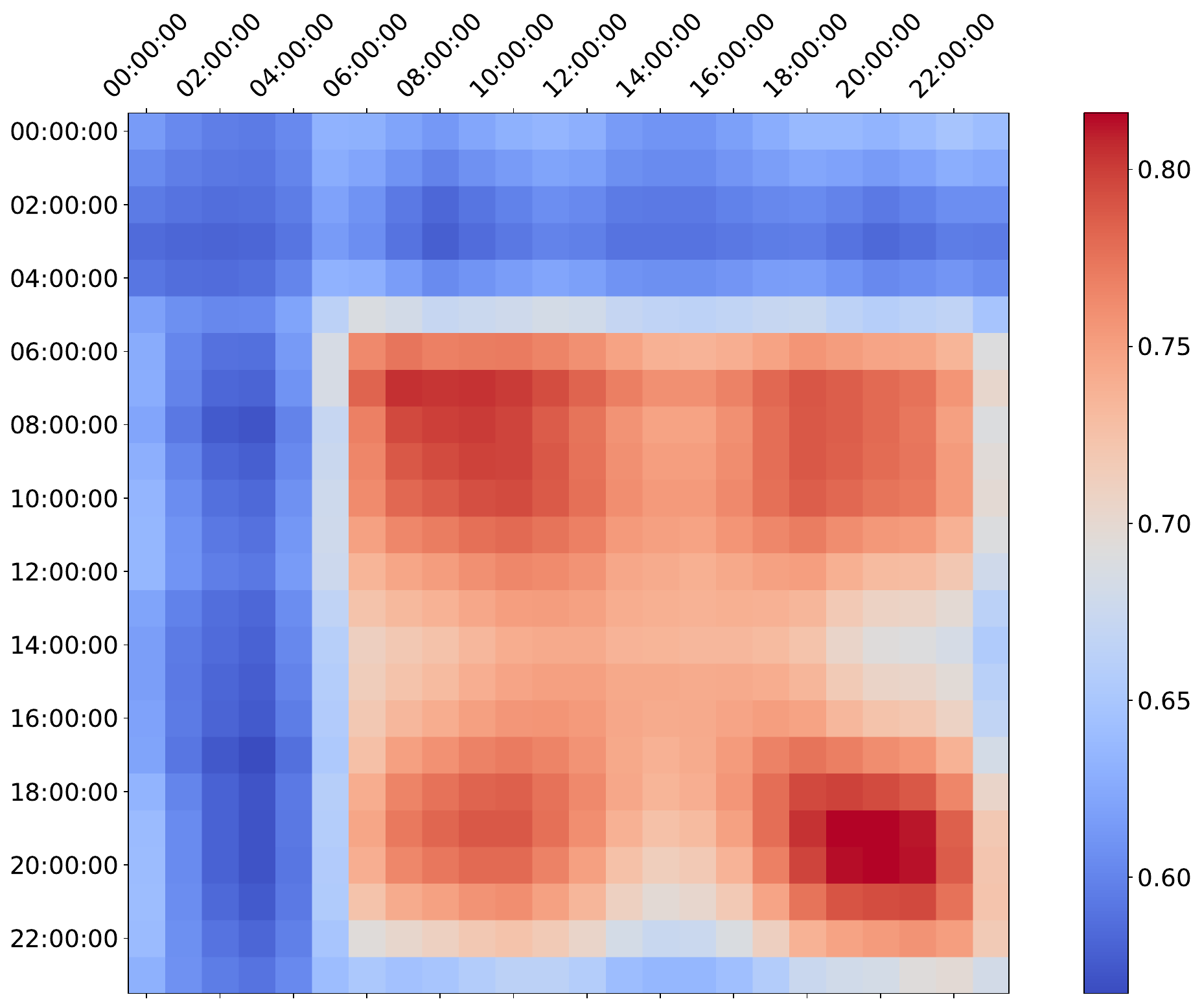}
        \caption{Lag 7 correlation}
        \label{fig:Corr7}
    \end{subfigure}
    
    \caption{Correlation matrices of hourly prices, with a lag of 0 to 7 days.}
    \label{fig:Correlation Structure}
\end{figure}

\subsection{Stationarity Analysis}
\begin{figure}
\centering
\includegraphics[height=0.5\columnwidth]{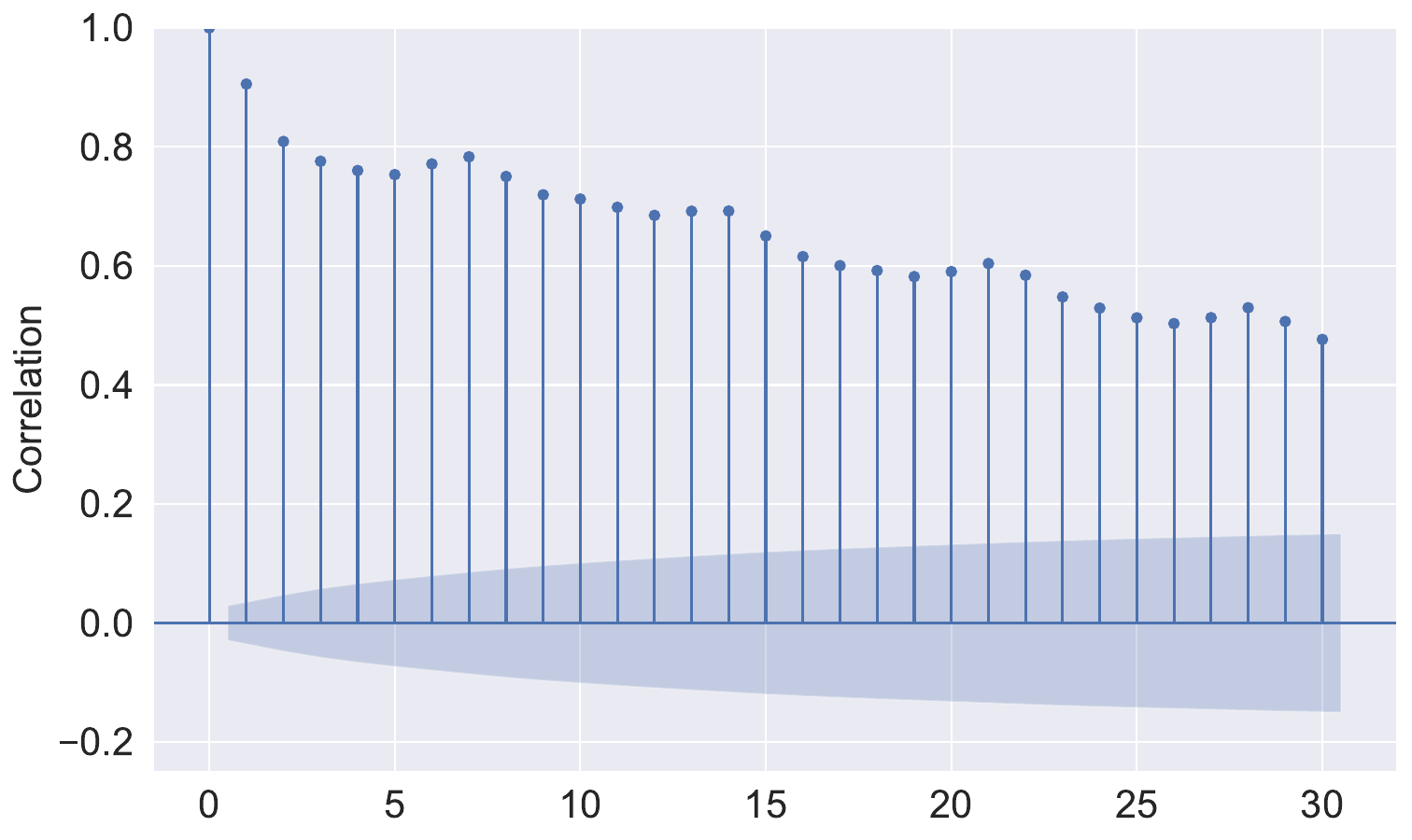}
\caption{ACF of the daily average of the Nordic SP.}
\label{fig:Weekly seasonality}
\end{figure}

We first applied the Augmented Dickey–Fuller (ADF) test~\cite{said1984testing,dickey1979distribution} in its three standard variants (no constant or trend, constant only, constant and trend) to the Nordic SP for each delivery period. In all cases, the null hypothesis of a unit root was rejected, suggesting stationarity.

To complement this, we also applied the Kwiatkowski–Phillips–Schmidt–Shin (KPSS) test~\cite{kwiatkowski1992testing}, which uses stationarity as the null hypothesis and tests for trend-stationarity. This distinction is crucial: a process can have no unit root and still be non-stationary but trend-stationary. When a shock occurs, a trend-stationary process reverts to its trend, whereas a unit-root process experiences a permanent mean shift. Using both the constant-only and trend-stationary versions of the KPSS test, we rejected the null of stationarity for all delivery periods. Thus, the ADF and KPSS tests yield contradictory conclusions.

Standard ADF and KPSS tests do not assess stationarity at seasonal frequencies. Figure~\ref{fig:Weekly seasonality} shows the autocorrelation function (ACF) of the daily average Nordic SP, revealing significant correlations at lags of multiples of seven days (weekly seasonality). To formally test for seasonal unit roots, we applied the Canova–Hansen (CH) test~\cite{canova1995seasonal}\footnote{A concise mathematical background is provided in appendix \ref{app:1}.}, which uses seasonal stationarity as the null and employs nonparametric methods to detect general seasonal patterns. The CH test indicated the presence of a weekly seasonal unit root in all delivery periods except the first four.

We conjecture that the apparent ADF–KPSS contradiction is due to these weekly seasonal unit roots. To validate this, we re-applied both tests after weekly differencing and confirmed stationarity across all delivery periods. The exceptions in the first four periods are likely due to the structural break observed in late 2021, which affects the mean but not the seasonal component.

While strict stationarity is not required for neural network forecasting, it is critical for consistent coefficient estimation in statistical models such as VAR. Accordingly, we use weekly-differenced series for all statistical models to ensure valid inference and mitigate the risk of spurious regression.

\section{Forecast-Optimized Feature Engineering Approach}
\label{sec:Section IV}
Our proposed feature-engineering framework is designed to produce input features that are both interpretable and optimized for predictive performance. It consists of three key steps. First, we perform interpretable feature selection by identifying explanatory variables for which a shock is transmitted to the system price (SP), allowing market participants to understand which drivers truly matter. Second, we apply principal component analysis (PCA) to the selected variables and the SP to mitigate imperfect multicollinearity, ensuring a stable and well-conditioned input space. Finally, rather than relying on heuristic approaches such as the elbow method, we integrate PCA with the downstream forecasting task by selecting the number of components that minimizes the root mean squared error (RMSE) of the forecast, as detailed in Section~\ref{sec:multi-forecast}. This approach explicitly links feature engineering to forecast accuracy, ensuring that regularization serves the ultimate goal of improving predictive performance.

\subsection{Interpretable feature selection}
\label{subsec:Interpretable feature engineering}
In this subsection, we present our approach for obtaining the feature vector $\textbf{x}$ from the raw feature vector $\textbf{y}$. As raw features, i.e., columns of the $\textbf{Y}$ matrix, we use the production categories shown in Table~\ref{tab:Production Profile} (15 raw features), daily system volumes, i.e. consumption (1 raw feature), auction capacities for import and export (2 raw features) and the natural gas price  (1 raw feature), i.e., in total 19 raw features. 

The novelty of our feature selection approach is twofold. First, motivated by the intuition that the main drivers may vary with the price level, we cluster the SP before selecting features. Second, we identify features based on whether shocks to them are transmitted to the SP. In addition, we remove seasonal and autocorrelation effects. This is a crucial step for recognizing actual drivers of the system price, since neglecting this filtering process will generate fake correlations between variables. 

We use the daily volume-weighted average of the 24 hourly system prices $\overline{\textbf{p}}_d=\frac{\textbf{c}_d.\textbf{p}_d}{\textbf{c}_d.\textbf{1}}$ as a representative of the daily Nordic SP, where $\textbf{c}_d = (c_{1d},...,c_{24d})$ is the hourly system volume (consumption) vector on day $d$ (MWh).

To identify the main drivers at different price levels, we propose to cluster the Nordic SP based on the daily volume-weighted average SP, using K-means clustering. We used the elbow method to identify the number of clusters, resulting in $K=3$ clusters in terms of price, corresponding to low, moderate, and high price levels, as shown in Fig.~\ref{fig:Price_Clusters}.

\begin{figure}
\centering
\includegraphics[width=0.7\columnwidth]{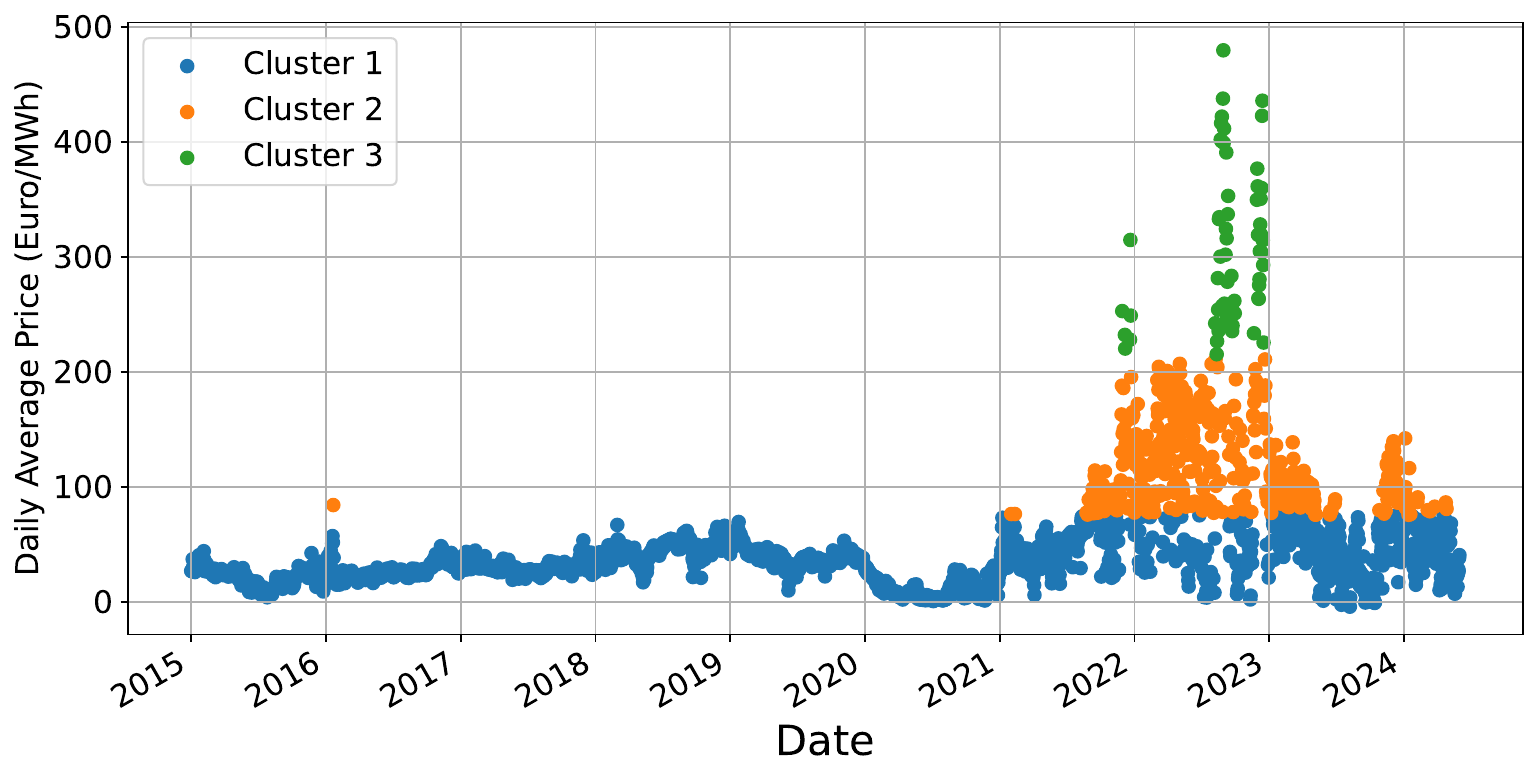}
\caption{Price clusters of the Nordic SP for $K=3$.}
\label{fig:Price_Clusters}
\end{figure}

\begin{table}
\centering
\caption{The Nordic production profile in 2023}
\label{tab:Production Profile}
\resizebox{\textwidth}{!}{  
\begin{tabular}{c c c c c c c c c c c c c c c c c c c c c c c c}
\toprule
& Fossil Peat & Hydro Reservoir & Gas & Nuclear & Wind Onshore & Waste & Fossil Oil & Biomass \\
\midrule
Total (TWh) & 2.27 & 172.56 & 5.2 & 79.26 & 73.22 & 1.32 & 0.33 &9.06\\
Share \%  & 0.55 & 41.67 & 1.26 & 19.14 & 17.68 & 0.32 & 0.08 & 2.19  \\
Variation \%  & 75.64 & 26.06 & 41.06 & 14.68 & 49.56& 17.84 &34.91 &38.97 \\
\midrule
& Wind Offshore  & River And Poundage & Solar & Other Renewable & Pumped Storage & Hard Coal & Other\\
\midrule
Total (TWh) & 8.29 & 42.25 & 4.94 & 0.46 & 1.75 & 5.27 & 7.89 \\
Share \% & 2.0 & 10.2 & 1.19 & 0.11 & 0.42 & 1.27 & 1.91  \\
Variation \% & 58.51 & 18.2 & 82.72 &29.73 & 103.56 &76.76 & 51.42 \\
\bottomrule
\end{tabular}
}
\end{table}

Table~\ref{tab:Production Profile} summarizes the production profile for the Nordic region in 2023, and Fig.~\ref{fig:Consumption-Production} shows the daily electricity consumption and production by major category. The main production sources are \textit{Hydro Water Reservoir}, \textit{Nuclear}, \textit{Wind Onshore}, and \textit{Hydro Run-of-River and Poundage}, while all other categories each contribute less than 3\% of total production. The third row of the table reports the coefficient of variation for daily production, revealing substantial volatility across all categories. Nuclear generation is the most stable, followed by waste, hydro run-of-river, and poundage.  

Because these variables exhibit both seasonal and non-seasonal autoregressive patterns, we first remove seasonality and cyclical components from the daily volume-weighted average SP and from the 19 raw features using the MSTD algorithm~\cite{bandara2025mstl}. The MSTD is well suited for time series with multiple seasonalities, such as the Nordic SP, and produces an additive decomposition into trend, seasonal, and residual components.  

We retain the residual-plus-trend components for each time series and fit a SARIMA model to remove remaining autocorrelation. We then keep the SARIMA residuals and, within each SP cluster, regress the SP residuals on the feature residuals. For all features except import and export auction capacities, we use lagged values from day~$d-1$; for import and export capacities, we use day~$d$ values. This process yields de-seasonalized and de-autocorrelated drivers, allowing us to isolate the statistically significant relationships between SP and its underlying production features.

\begin{figure}[htbp]
    \centering
    \subfloat[Daily Consumption]{%
        \includegraphics[width=0.3\textwidth, height=3cm]{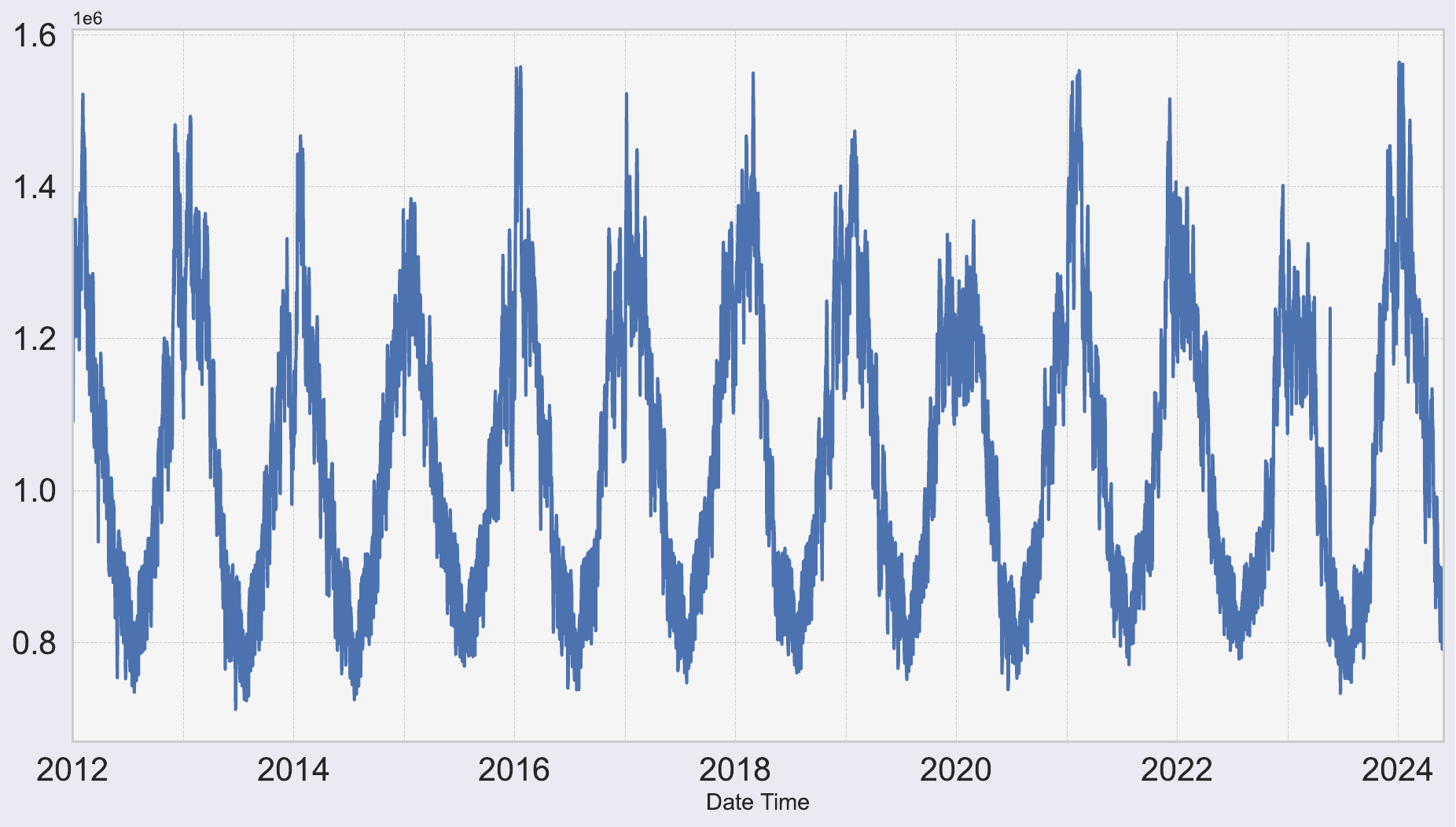}%
        \label{fig:Consumption}}
    \hfill
    \subfloat[Daily Production]{%
        \includegraphics[width=0.3\textwidth, height=3cm]{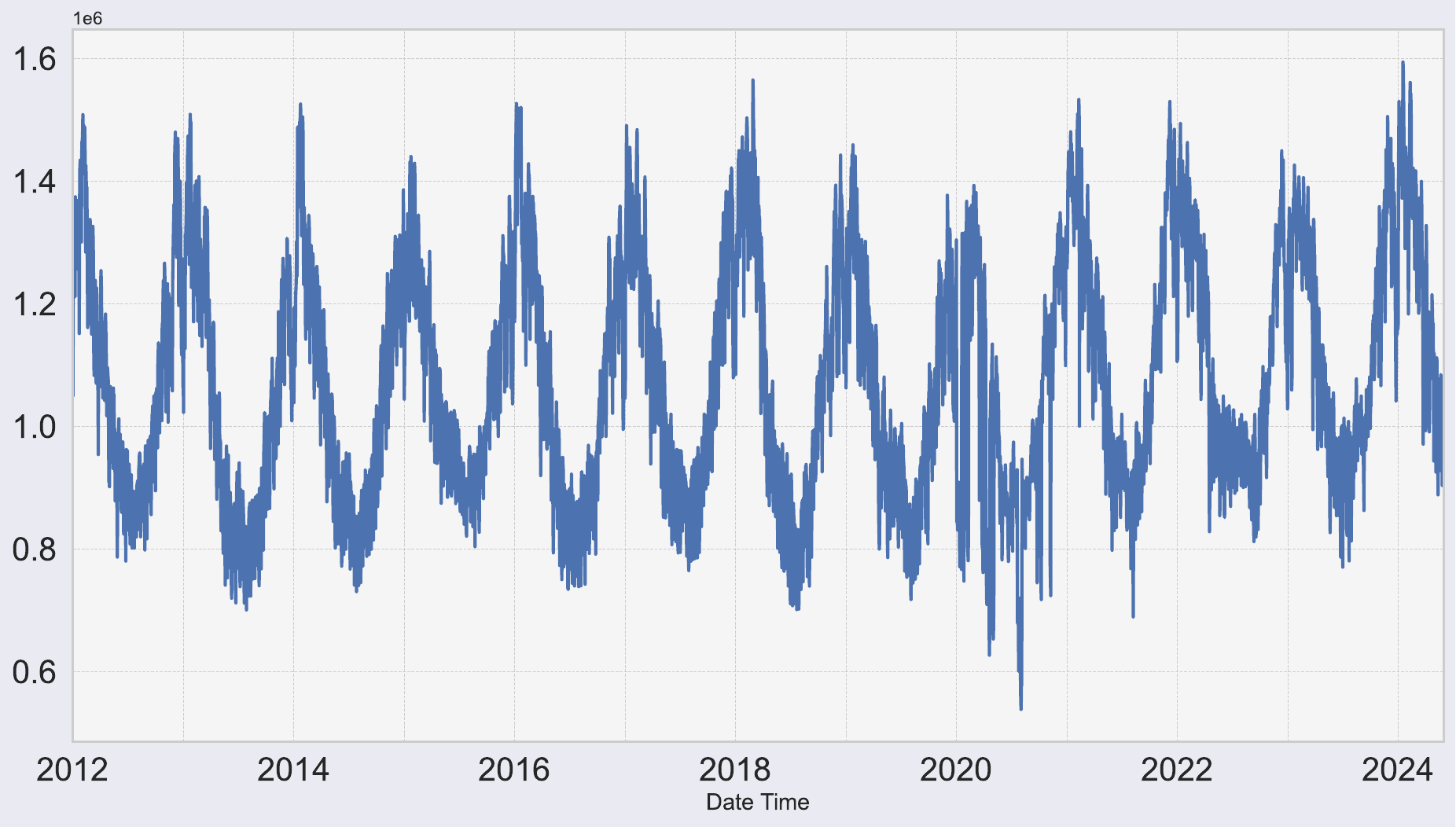}%
        \label{fig:Production}}
    \hfill
    \subfloat[Reservoir Production]{%
        \includegraphics[width=0.3\textwidth, height=3cm]{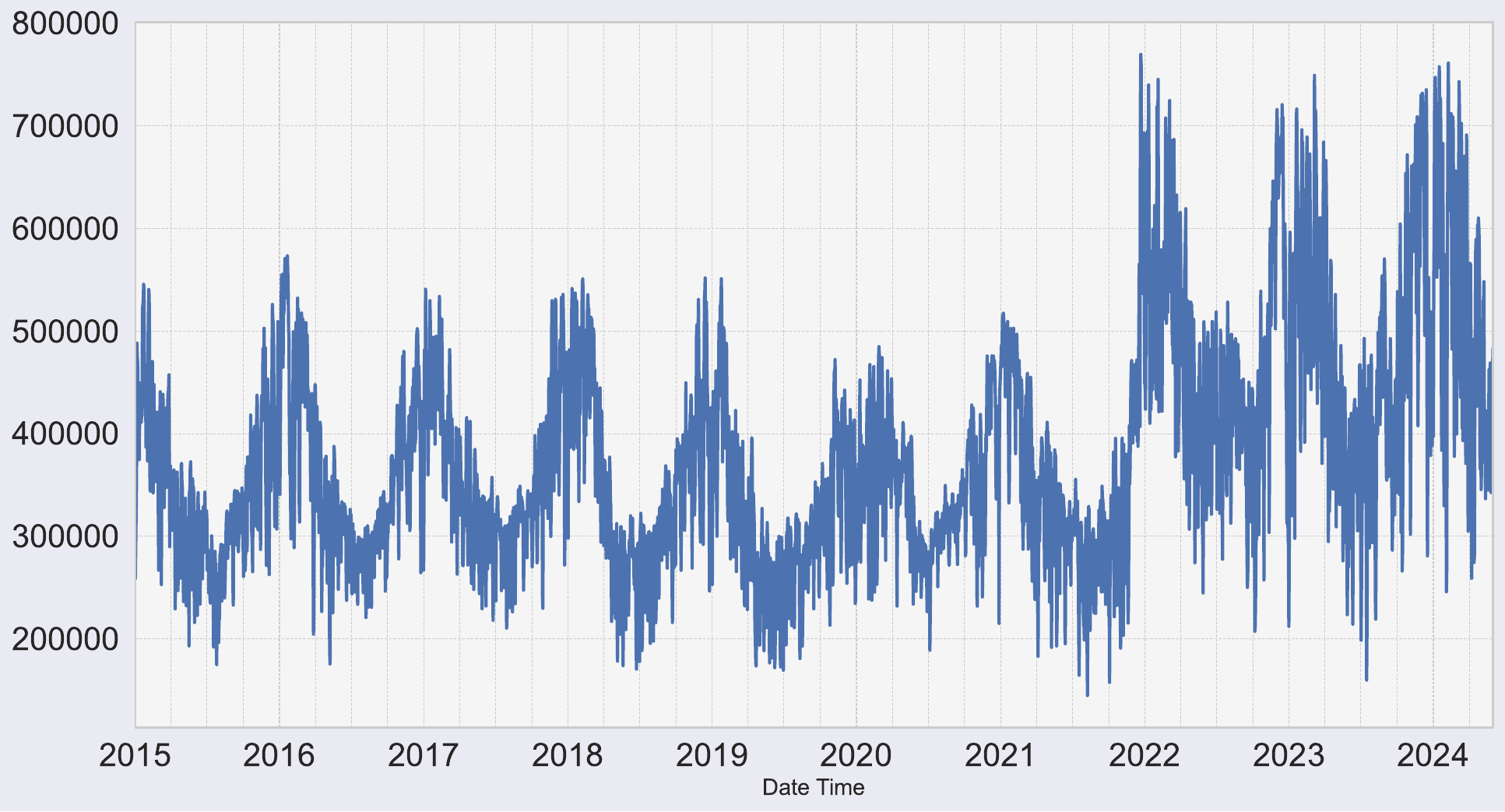}%
        \label{fig:Reservoir}}
    
    \subfloat[Wind Onshore Production]{%
        \includegraphics[width=0.3\textwidth, height=3cm]{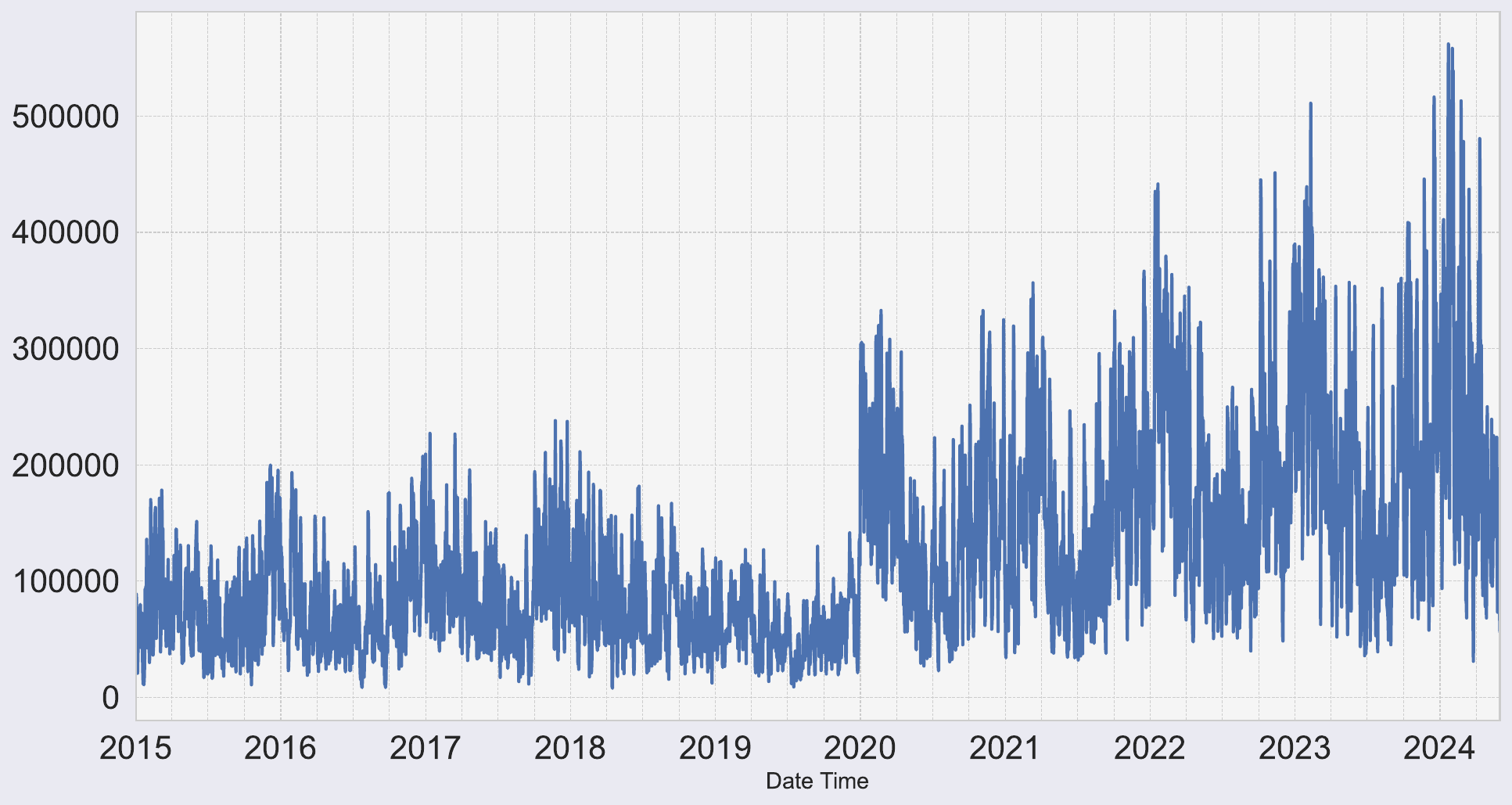}%
        \label{fig:Wind_Onshore}}
    \hfill
    \subfloat[Nuclear Production]{%
        \includegraphics[width=0.3\textwidth, height=3cm]{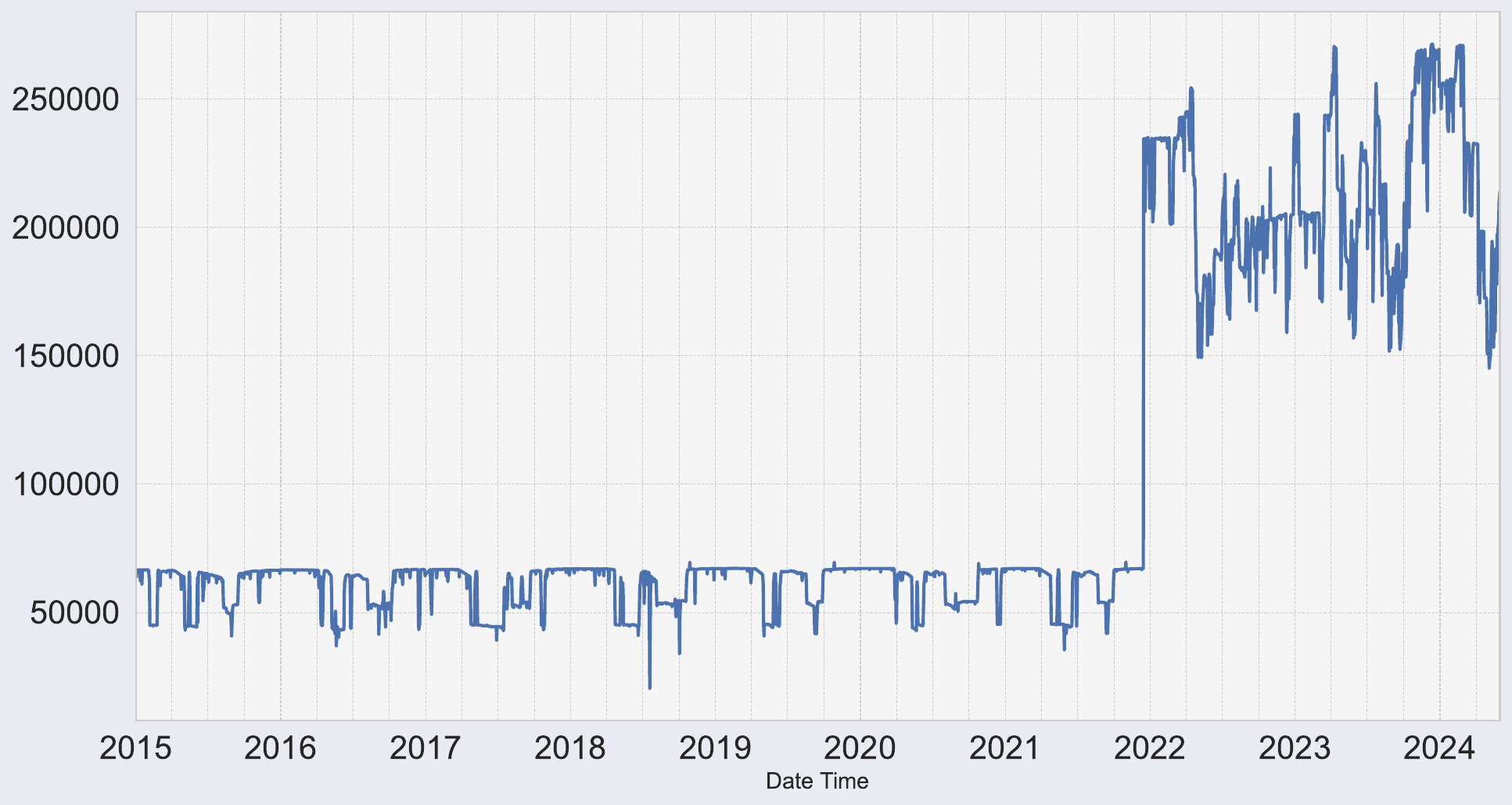}%
        \label{fig:Nuclear}}
    \hfill
    \subfloat[River and Poundage Production]{%
        \includegraphics[width=0.3\textwidth, height=3cm]{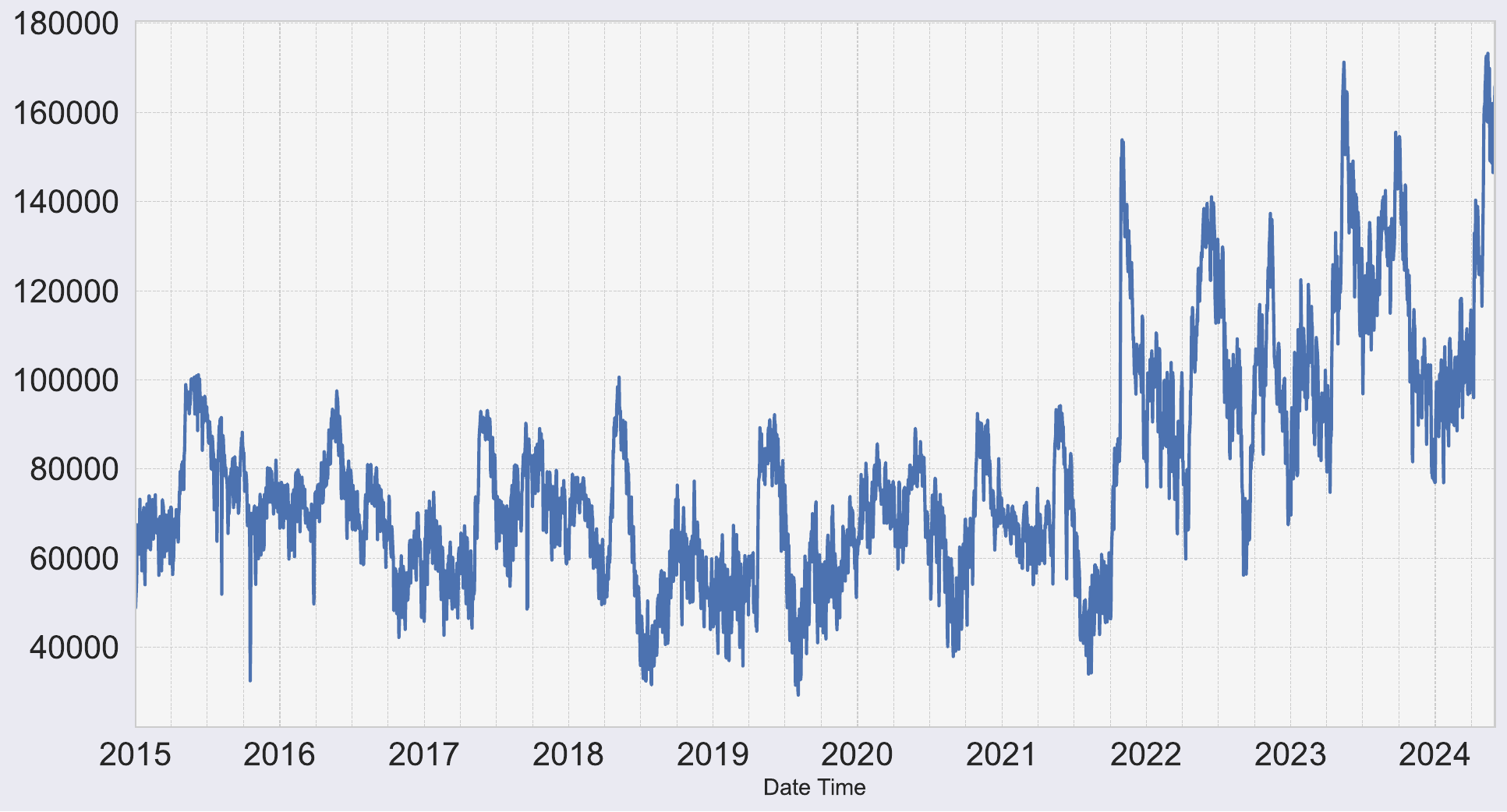}%
        \label{fig:RiverPoundage}}
    
    \subfloat[Fossil Gas Production]{%
        \includegraphics[width=0.3\textwidth, height=3cm]{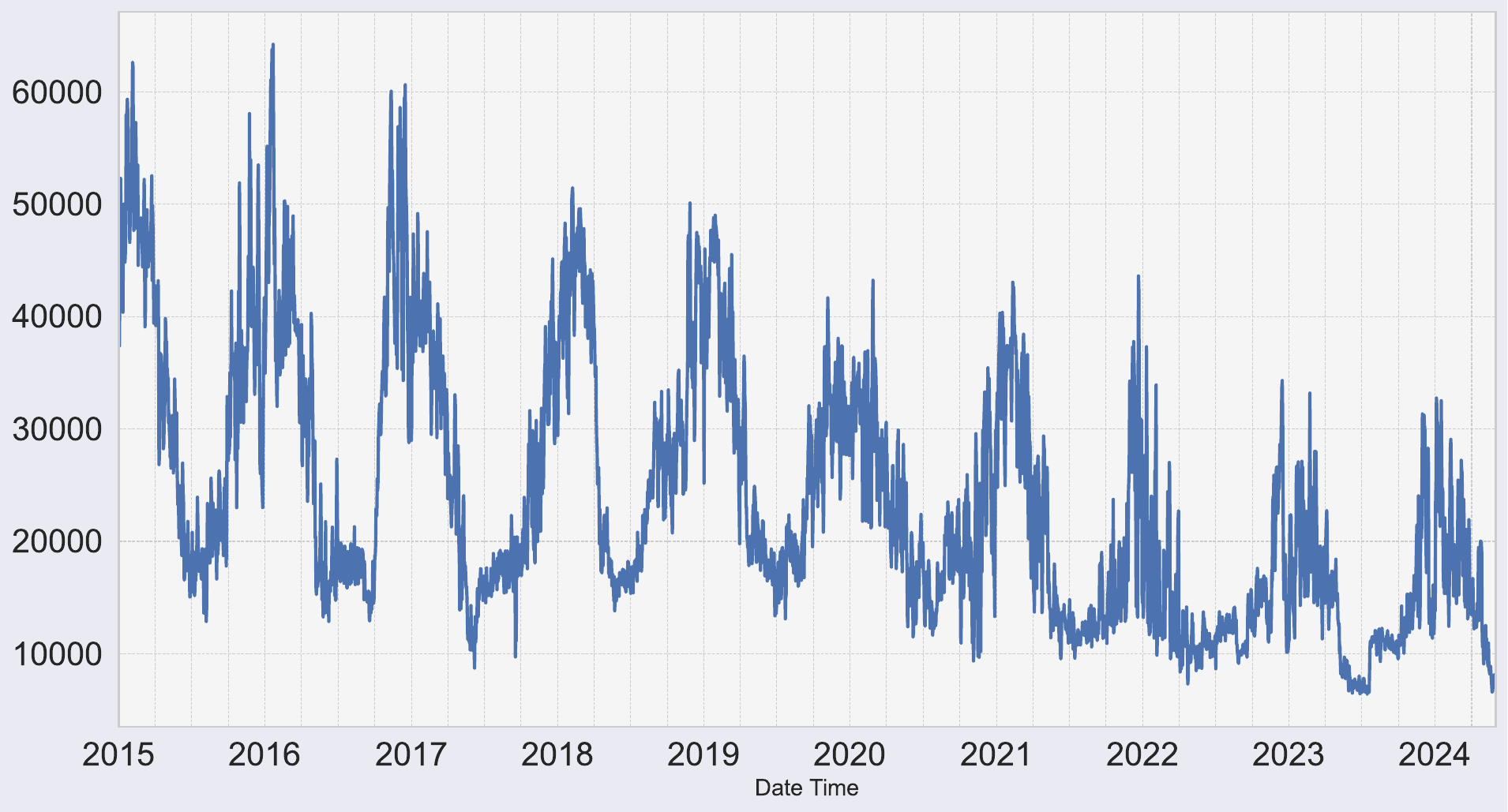}%
        \label{fig:GAS}}
    \hfill
    \subfloat[Biomass Production]{%
        \includegraphics[width=0.3\textwidth, height=3cm]{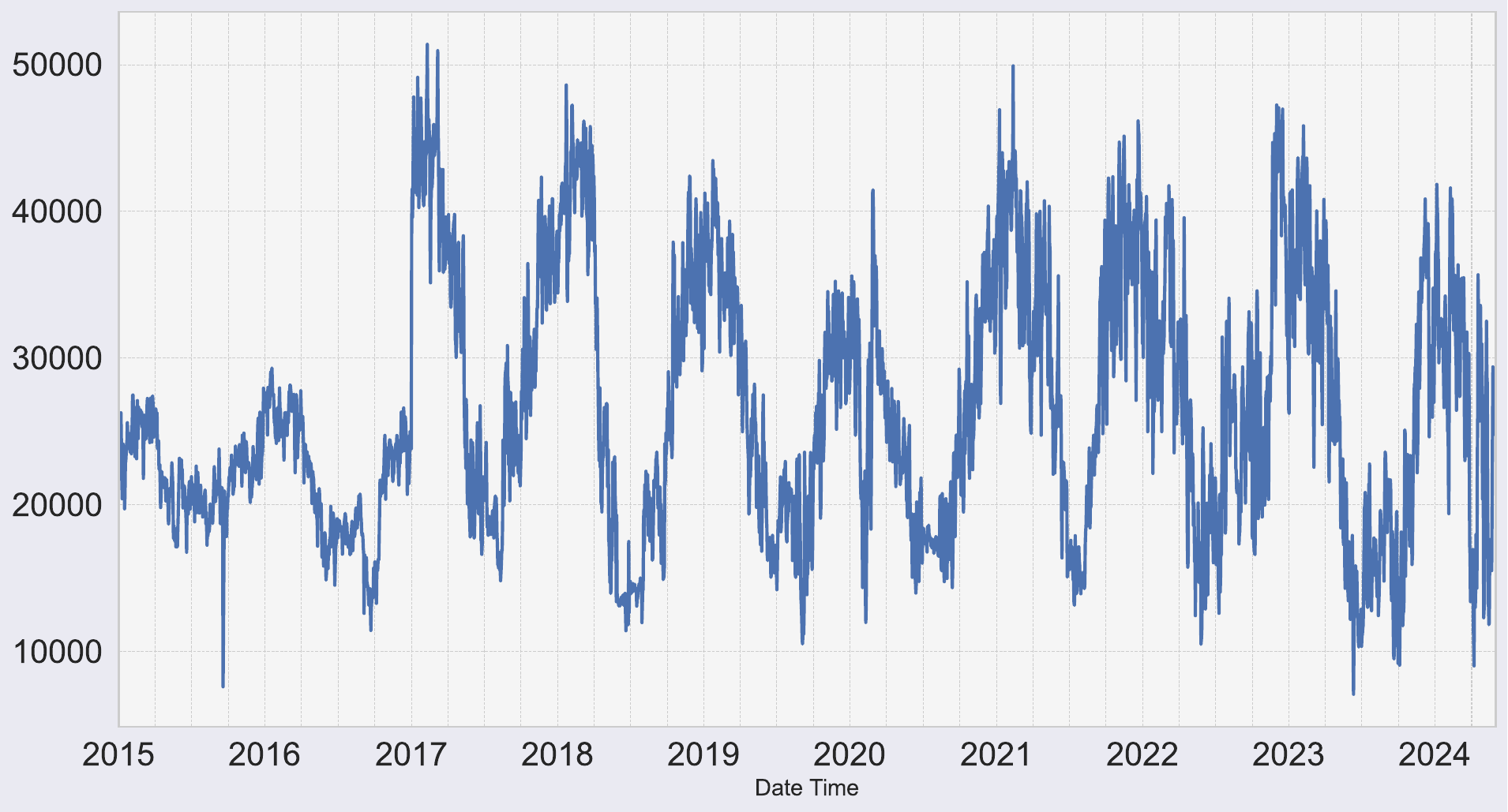}%
        \label{fig:Biomass}}
    \hfill
    \subfloat[Solar Production]{%
        \includegraphics[width=0.3\textwidth, height=3cm]{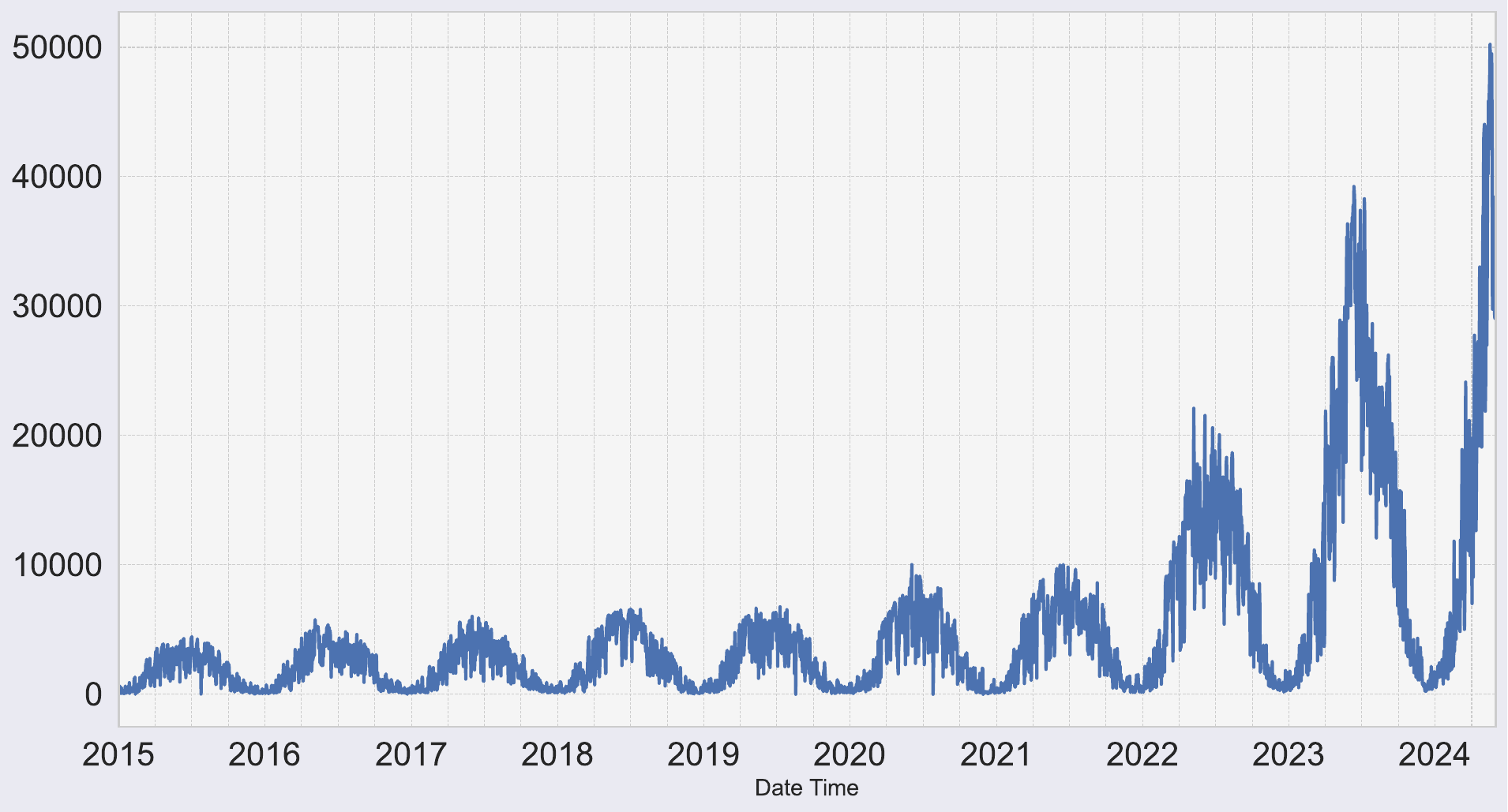}%
        \label{fig:Solar}}
    
    \caption{Consumption and production categories (MWh).}
    \label{fig:Consumption-Production}
\end{figure}

After obtaining the linear regression coefficients for all raw features, we kept those with a $p$ value less than $5$\%, and we refer to these as \textbf{\textit{features}}. The proposed feature-engineering algorithm is shown in Fig. \ref{fig:feature_engineering}. 

Our results show that the set of retained features is cluster-dependent. Table~\ref {tab:ols_results_cluster_First} shows that in the first cluster only Biomass, Nuclear, Hydro Water Reservoir, Wind Onshore, Fossil Gas, Other (non-categorized production), and Auction Capacity for Import are significant at $5$\% level. This leads us to the following observation.

\begin{figure}[H]
    \centering
    \resizebox{\textwidth}{!}{\begin{tikzpicture}[
    node distance=1cm,
    startstop/.style={ellipse, minimum width=3cm, minimum height=1cm, text centered, draw=black},
    process/.style={rectangle, rounded corners, minimum width=3cm, minimum height=1cm, text centered, draw=black},
    io/.style={trapezium, trapezium left angle=70, trapezium right angle=110, minimum width=1.2cm, minimum height=0.7cm, text centered, draw=black},
    arrow/.style={-stealth, thick}
]

\node (vwasp) [startstop] {VWASP};
\node (raw_features) [startstop, right=4cm of vwasp] {Raw features};
\node (kmeans) [process, below=1cm of vwasp] {K-means clustering};
\node (price_cluster1) [io, below=1.5cm of kmeans, xshift=-2.5cm] {Price Cluster 1};
\node (dots1) [below=1.8cm of kmeans] {...};
\node (price_clusterK) [io, below=1.5cm of kmeans, xshift=2.5cm] {Price Cluster K};
\node (regress) [process, below=1.75cm of dots1, minimum width=6cm] {Regression of price on raw features in different clusters};
\node (sf_cluster1) [io, below=1.5cm of regress, xshift=-2.5cm] {S.F. Cluster 1};
\node (dots2) [below=1.8cm of regress] {...};
\node (sf_clusterK) [io, below=1.5cm of regress, xshift=2.5cm] {S.F. Cluster K};
\node (mstd) [process, below=1cm of raw_features] {MSTD};
\node (tr) [io, below=1.5cm of mstd] {$  T+R  $};
\node (sarima) [process, below=1.5cm of tr] {SARIMA};
\node (residuals) [io, below=1.62cm of sarima] {Residuals};


\draw [arrow] (raw_features) -- (mstd);

\coordinate (lane_vm) at ($(vwasp.east)!0.5!(raw_features.west)$);
\draw [arrow] (vwasp.east) -- (lane_vm)
                      -- (lane_vm |- mstd.west)
                      -- (mstd.west);

\draw [arrow] (mstd) -- (tr);
\draw [arrow] (tr) -- (sarima);
\draw [arrow] (sarima) -- (residuals);

\coordinate (resid_start) at (residuals.north west);               
\coordinate (q1) at ($(resid_start)+(0,0.8)$);                      
\coordinate (q2) at ($(q1 -| regress.south east)+(-0.6,0)$);        
\coordinate (target) at (regress.south -| q2);                      
\draw [arrow] (resid_start) -- (q1) -- (q2) -- (target);

\draw [arrow] (vwasp) -- (kmeans);

\coordinate (h_line_kmeans) at ($(kmeans.south)!.5!(price_cluster1.north)$);
\draw [arrow] (kmeans.south) -- (kmeans.south |- h_line_kmeans); 
\draw (h_line_kmeans -| price_cluster1.north) -- (h_line_kmeans -| price_clusterK.north); 
\draw [arrow] (h_line_kmeans -| price_cluster1.north) -- (price_cluster1.north); 
\draw [arrow] (h_line_kmeans -| price_clusterK.north) -- (price_clusterK.north);
\draw (h_line_kmeans) -- (h_line_kmeans -| dots1.north);

\coordinate (h_line_price_regress) at ($(price_cluster1.south)!.5!(regress.north)$);
\draw [arrow] (price_cluster1.south) -- (price_cluster1.south |- h_line_price_regress);
\draw [arrow] (price_clusterK.south) -- (price_clusterK.south |- h_line_price_regress);
\draw (h_line_price_regress -| price_cluster1.south) -- (h_line_price_regress -| price_clusterK.south);
\draw [arrow] (h_line_price_regress -| regress.north) -- (regress.north);

\coordinate (h_line_sf) at ($(regress.south)!.5!(sf_cluster1.north)$);
\draw [arrow] (regress.south) -- (regress.south |- h_line_sf);
\draw (h_line_sf -| sf_cluster1.north) -- (h_line_sf -| sf_clusterK.north);
\draw [arrow] (h_line_sf -| sf_cluster1.north) -- (sf_cluster1.north);
\draw [arrow] (h_line_sf -| sf_clusterK.north) -- (sf_clusterK.north);
\draw (h_line_sf) -- (h_line_sf -| dots2.north);

\end{tikzpicture}}
    \caption{The proposed interpretable feature-engineering algorithm, VWASP = Volume Weighted Average SP: $\overline{\textbf{p}}_d$, T+R: Trend and Residual components, and S.F.: Significant Features.}
    \label{fig:feature_engineering}
\end{figure}
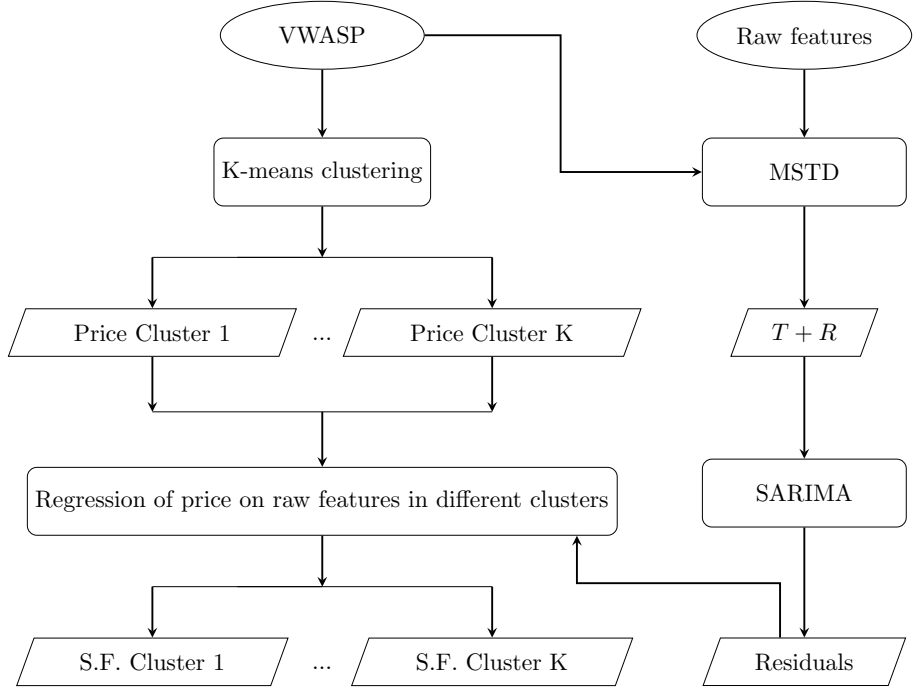

\begin{table}[t]
\caption{OLS Regression Results for First Cluster}
\centering
\setlength\tabcolsep{3.5pt} 
\renewcommand{\arraystretch}{1.15}
\small

\begin{adjustbox}{max width=\columnwidth}
\begin{tabular}{lclc}
\toprule
\textbf{Dep. Variable:}    & y & \textbf{$R^2$ (uncentered):} & 0.325 \\
\textbf{Model:}            & OLS & \textbf{Adj. $R^2$ (uncentered):} & 0.324 \\
\textbf{Method:}           & Least Squares & \textbf{F-statistic:} & 65.24 \\
\textbf{Date:}             & Mon, 21 Jul 2025 & \textbf{Prob (F-statistic):} & 1.08e-87 \\
\textbf{Time:}             & 21:33:31 & \textbf{Log-Likelihood:} & -10085. \\
\textbf{No. Observations:} & 2951 & \textbf{AIC:} & 2.018e+04 \\
\textbf{Df Residuals:}     & 2944 & \textbf{BIC:} & 2.023e+04 \\
\textbf{Df Model:}         & 7 & & \\
\textbf{Covariance Type:}  & HC3 & & \\
\bottomrule
\end{tabular}
\end{adjustbox}

\vspace{0.6em}

\begin{adjustbox}{max width=\columnwidth}
\begin{tabular}{lcccccc}
 & \textbf{coef} & \textbf{std err} & \textbf{t} & \textbf{P$>|t|$} & \textbf{[0.025} & \textbf{0.975]}  \\
\midrule
\textbf{Biomass}                   & 0.0004  & 0.000     & 3.875  & 0.000 & 0.000 & 0.001 \\
\textbf{Nuclear}                   & -0.0001 & 6.07e-05 & -2.182 & 0.029 & -0.000 & -1.35e-05 \\
\textbf{Hydro\_Water\_Reservoir}   & 0.0001  & 9.53e-06 & 15.057 & 0.000 & 0.000 & 0.000 \\
\textbf{Wind\_Onshore}             & -3.18e-05 & 6.55e-06 & -4.854 & 0.000 & -4.46e-05 & -1.9e-05 \\
\textbf{Fossil\_Gas}               & -0.0002 & 8.27e-05 & -2.369 & 0.018 & -0.000 & -3.38e-05 \\
\textbf{Other}                     & -9.91e-06 & 4.59e-06 & -2.158 & 0.031 & -1.89e-05 & -9.08e-07 \\
\textbf{Auction\_Capacity\_Import} & -0.0002 & 9.95e-05 & -1.992 & 0.046 & -0.000 & -3.23e-06 \\
\bottomrule
\end{tabular}
\end{adjustbox}

\vspace{0.6em}

\begin{adjustbox}{max width=\columnwidth}
\begin{tabular}{lclc}
\textbf{Omnibus:} & 1083.796 & \textbf{Durbin--Watson:} & 1.978 \\
\textbf{Prob(Omnibus):} & 0.000 & \textbf{Jarque--Bera (JB):} & 22375.109 \\
\textbf{Skew:} & -1.230 & \textbf{Prob(JB):} & 0.00 \\
\textbf{Kurtosis:} & 16.264 & \textbf{Cond. No.:} & 26.9 \\
\bottomrule
\end{tabular}
\end{adjustbox}

\footnotesize
\medskip

\emph{Notes:} [1] $R^2$ is computed without centering (uncentered) since the model does not contain a constant. [2] Standard errors are heteroscedasticity-robust (HC3).
\label{tab:ols_results_cluster_First}
\end{table}

\begin{observation} \label{observation1}
    When the Nordic SP is low, the auction capacity for import has explanatory power for price variations (shocks to auction capacity for import result in price shocks), but export capacity does not. The intuition is that since demand is relatively stable, any shock to this variable affects the price.
\end{observation}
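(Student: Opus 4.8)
The plan is to establish Observation~\ref{observation1} as a direct consequence of applying the feature-engineering algorithm of Fig.~\ref{fig:feature_engineering} to the low-price cluster identified by K-means. Since the observation is an empirical statement about which capacity variables transmit shocks to the SP, the argument reduces to verifying (i) that import capacity survives the significance filter at the $5\%$ level within the first cluster while export capacity does not, and (ii) that this asymmetry reflects a genuine structural relationship rather than a spurious correlation.

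First I would carry out the de-seasonalization and de-autocorrelation steps exactly as prescribed: apply MSTD to both $\overline{\textbf{p}}_d$ and the two capacity series, retain the trend-plus-residual components, and then fit SARIMA models to strip remaining autocorrelation, keeping only the SARIMA residuals. This filtering is the essential preliminary, because, as the text emphasizes, regressing raw, seasonally-correlated series would manufacture fake correlations and invalidate any claim about genuine drivers. I would then restrict attention to the days assigned to the low-price cluster and regress the SP residuals on the feature residuals, using day-$d$ values for the import and export capacities (consistent with their availability before 12:00 CET) and lagged day-$(d-1)$ values for the production features.

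The second step is the inferential core: read off the estimated coefficients and their heteroscedasticity-robust (HC3) standard errors, and apply the $p<5\%$ retention rule. Table~\ref{tab:ols_results_cluster_First} already reports that \textbf{Auction\_Capacity\_Import} enters with $p=0.046<0.05$, so it is retained as a genuine driver, whereas \textbf{Auction\_Capacity\_Export} fails the threshold and is therefore absent from the significant-feature set of the first cluster. This establishes the statistical content of the observation. The economic interpretation then follows: in the low-price regime demand is comparatively inelastic and stable, so the binding determinant of price formation is the supply side augmented by imports; a shock to import capacity shifts the effective supply stack and is transmitted to the SP, while additional export capacity is not binding when prices are already low.

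I expect the main obstacle to be defending claim (ii), namely that the retained significance of import capacity is structural rather than an artifact. The residual diagnostics in Table~\ref{tab:ols_results_cluster_First} show strong non-normality (kurtosis $16.26$, highly significant Jarque--Bera), so the validity of the $p$-value rests on the HC3 correction delivering reliable inference under fat tails, and on the MSTD/SARIMA pre-filtering having removed the common seasonal component that would otherwise couple capacity and price mechanically. The delicate point is therefore not any calculation but the justification that the near-threshold significance ($p=0.046$) is robust to the pre-processing choices and to the clustering boundary; a sensitivity analysis varying the SARIMA order and the cluster cutpoints would be the natural way to close this gap.
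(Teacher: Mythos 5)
Your proposal is correct and takes essentially the same route as the paper: Observation~\ref{observation1} is justified empirically by the cluster-1 regression in Table~\ref{tab:ols_results_cluster_First} (after the MSTD/SARIMA filtering and K-means clustering), where Auction\_Capacity\_Import survives the $5\%$ retention rule at $p=0.046$ while export capacity does not, together with the stable-demand intuition. Your added caveats about the near-threshold $p$-value and sensitivity to preprocessing go beyond what the paper offers but do not alter the argument.
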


By contrast, Table. \ref{tab:ols_results_Second_cluster} shows that in the second cluster only Hydro Water Reservoir and Fossil Hard Coal are significant at 5\% significance level. For the third cluster, Table. \ref{tab:ols_results_Third_cluster} shows that solar, wind, offshore, and fossil hard coal are significant. This analysis allows us to make the following observations.

\begin{table*}[t]
\centering
\caption{OLS Regression Results for Second Cluster}
\label{tab:ols_results_Second_cluster}
\setlength\tabcolsep{3.5pt}
\renewcommand{\arraystretch}{1.15}
\small

\begin{adjustbox}{max width=\textwidth}
\begin{tabular}{lclc}
\toprule
\textbf{Dep. Variable:} & y & \textbf{$R^2$ (uncentered):} & 0.522 \\
\textbf{Model:}         & OLS & \textbf{Adj.\ $R^2$ (uncentered):} & 0.520 \\
\textbf{Method:}        & Least Squares & \textbf{F-statistic:} & 175.8 \\
\textbf{Date:}          & Mon, 21 Jul 2025 & \textbf{Prob (F-statistic):} & 2.38e-56 \\
\textbf{Time:}          & 22:24:13 & \textbf{Log-Likelihood:} & -1811.4 \\
\textbf{No. Observations:} & 427 & \textbf{AIC:} & 3627 \\
\textbf{Df Residuals:}  & 425 & \textbf{BIC:} & 3635 \\
\textbf{Df Model:}      & 3   &               &      \\
\textbf{Covariance Type:} & HC3 &             &      \\
\bottomrule
\end{tabular}
\end{adjustbox}

\vspace{0.6em}

\begin{adjustbox}{max width=\textwidth}
\begin{tabular}{lcccccc}
 & \textbf{coef} & \textbf{std err} & \textbf{t} & \textbf{P$>|t|$} & \textbf{[0.025} & \textbf{0.975]} \\
\midrule
\textbf{Hydro\_Water\_Reservoir} & 0.0003 & 2.25e-05 & 13.683 & 0.000 & 0.000 & 0.000 \\
\textbf{Fossil\_Hard\_Coal}      & 0.0011 & 0.000    & 3.918  & 0.000 & 0.001 & 0.002 \\
\bottomrule
\end{tabular}
\end{adjustbox}

\vspace{0.6em}

\begin{adjustbox}{max width=\textwidth}
\begin{tabular}{lclc}
\textbf{Omnibus:}       & 164.096 & \textbf{Durbin--Watson:} & 1.665 \\
\textbf{Prob(Omnibus):} & 0.000   & \textbf{Jarque--Bera (JB):} & 1162.067 \\
\textbf{Skew:}          & -1.468  & \textbf{Prob(JB):} & 4.58e-253 \\
\textbf{Kurtosis:}      & 10.529  & \textbf{Cond. No.:} & 16.2 \\
\bottomrule
\end{tabular}
\end{adjustbox}

\footnotesize
\medskip

\emph{Notes:} [1] $R^2$ is computed without centering (uncentered) since the model does not contain a constant. [2] Standard errors are heteroscedasticity-robust (HC3).
\end{table*}

\begin{table*}[t]
\centering
\caption{OLS Regression Results for Third Cluster}
\label{tab:ols_results_Third_cluster}
\setlength\tabcolsep{3.5pt}
\renewcommand{\arraystretch}{1.15}
\small

\begin{adjustbox}{max width=\textwidth}
\begin{tabular}{lclc}
\toprule
\textbf{Dep. Variable:}    & y   & \textbf{$R^2$ (uncentered):}        & 0.386 \\
\textbf{Model:}            & OLS & \textbf{Adj.\ $R^2$ (uncentered):}  & 0.354 \\
\textbf{Method:}           & Least Squares & \textbf{F-statistic:}      & 10.59 \\
\textbf{Date:}             & Mon, 21 Jul 2025 & \textbf{Prob (F-statistic):} & 1.19e-05 \\
\textbf{Time:}             & 16:16:35 & \textbf{Log-Likelihood:}    & -298.50 \\
\textbf{No. Observations:} & 61  & \textbf{AIC:}                 & 603.0 \\
\textbf{Df Residuals:}     & 58  & \textbf{BIC:}                 & 609.3 \\
\textbf{Df Model:}         & 3   &                                &       \\
\textbf{Covariance Type:}  & HC3 &                                &       \\
\bottomrule
\end{tabular}
\end{adjustbox}

\vspace{0.6em}

\begin{adjustbox}{max width=\textwidth}
\begin{tabular}{lcccccc}
 & \textbf{coef} & \textbf{std err} & \textbf{t} & \textbf{P$>|t|$} & \textbf{[0.025} & \textbf{0.975]} \\
\midrule
\textbf{Solar}         & 0.0095  & 0.004 &  2.427 & 0.015 & 0.002     & 0.017 \\
\textbf{Wind\_Onshore} & -0.0003 & 0.000 & -2.451 & 0.014 & -0.000    & -5.13e-05 \\
\textbf{Fossil\_Hard\_Coal} & 0.0060  & 0.002 &  3.443 & 0.001 & 0.003     & 0.009 \\
\bottomrule
\end{tabular}
\end{adjustbox}

\vspace{0.6em}

\begin{adjustbox}{max width=\textwidth}
\begin{tabular}{lclc}
\textbf{Omnibus:}       & 2.554 & \textbf{Durbin--Watson:}  & 1.780 \\
\textbf{Prob(Omnibus):} & 0.279 & \textbf{Jarque--Bera (JB):} & 2.001 \\
\textbf{Skew:}          & -0.063 & \textbf{Prob(JB):}       & 0.368 \\
\textbf{Kurtosis:}      & 3.878 & \textbf{Cond. No.:}       & 36.3 \\
\bottomrule
\end{tabular}
\end{adjustbox}

\footnotesize
\medskip

\emph{Notes:} [1] $R^2$ is computed without centering (uncentered) since the model does not contain a constant. [2] Standard errors are heteroscedasticity-robust (HC3).
\end{table*}

\begin{observation} \label{observation2}
    Controlling for specific production categories (types), the gas price has no direct effect on the Nordic SP.
\end{observation}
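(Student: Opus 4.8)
The plan is to establish this observation empirically within the feature-selection regression framework developed above, treating the natural gas price as one of the 19 raw features that enters each within-cluster regression alongside the production categories. The claim is not that gas is irrelevant to the Nordic SP, but rather that its effect is \emph{mediated}: conditional on the gas-fired production category (Fossil Gas), the gas price carries no additional partial explanatory power.

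First I would run the full pipeline of Section~\ref{subsec:Interpretable feature engineering} on all 19 raw features in each of the three price clusters: MSTD to remove the multiple seasonalities, SARIMA to strip the remaining autocorrelation, and then OLS of the SP residuals on the feature residuals using HC3 standard errors. The empirical content of the observation is that the gas-price coefficient fails to clear the 5\% significance threshold in every cluster—this is why it is absent from the retained-feature lists in Tables~\ref{tab:ols_results_cluster_First}, \ref{tab:ols_results_Second_cluster}, and \ref{tab:ols_results_Third_cluster}—whereas Fossil Gas \emph{production} does survive the selection in the first (low-price) cluster.

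The interpretive step that turns insignificance into the stated claim is a standard mediation argument: the gas price influences the Nordic SP only through the channel of gas-fired generation, a channel already captured by the Fossil Gas production variable. Once that variable is included in the conditioning set, the residual effect of the price itself is absorbed, so its partial coefficient is statistically indistinguishable from zero. This is consistent with the market structure noted earlier, namely that the Nordic system relies only marginally on gas and is exposed to gas prices indirectly through cross-border coupling.

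The main obstacle is ruling out the alternative explanation that insignificance arises from imperfect multicollinearity between gas price and gas production—which would inflate the gas-price standard error and mask a genuine direct effect. To close this gap, I would lean on three safeguards built into the pipeline: the de-seasonalization and de-autocorrelation steps remove the common trends (most notably the 2021--2023 gas-price surge) that would otherwise manufacture a spurious price–production comovement; the reported condition numbers are small (e.g., 26.9 in the first cluster), indicating that severe collinearity is not present among the selected regressors; and the HC3-robust standard errors guard against heteroscedasticity-driven distortions. Together these make the case that the near-zero partial effect reflects a true absence of a direct channel rather than an artifact of the estimation.
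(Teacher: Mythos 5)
Your proposal matches the paper's own justification: Observation~\ref{observation2} is an empirical claim that follows from running the cluster-wise MSTD--SARIMA--OLS pipeline on all 19 raw features (the natural gas price among them) and noting that the gas-price coefficient never clears the 5\% threshold in any of the three clusters, so it is absent from Tables~\ref{tab:ols_results_cluster_First}--\ref{tab:ols_results_Third_cluster} while Fossil Gas production survives in the first cluster. Your added mediation interpretation and the multicollinearity safeguards go slightly beyond what the paper states explicitly, but the core argument is the same.
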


\begin{observation} \label{observation3}
Fossil hard coal is significant in the second and third clusters. When cheap generation sources cannot meet the demand for electricity, the market price follows the price of the next, more expensive source, hard coal.
\end{observation}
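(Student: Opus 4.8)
The plan is to treat Observation~\ref{observation3} as a conjunction of a statistical claim---that the Fossil Hard Coal residual carries significant explanatory power for the de-seasonalized SP residual in the moderate- and high-price clusters---and an economic claim---that this significance reflects the merit-order (marginal-cost pricing) mechanism. I would prove the first part directly from the regression evidence produced by the feature-engineering pipeline, and justify the second by a supply-stack argument that simultaneously explains why coal is absent from the low-price cluster.

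First I would establish the significance. Because the pipeline has already removed multiple seasonalities via MSTD and residual autocorrelation via SARIMA, the cluster-wise regressions underlying Tables~\ref{tab:ols_results_Second_cluster} and~\ref{tab:ols_results_Third_cluster} regress the SP residual on the feature residuals, so their coefficients isolate shock-to-shock transmission rather than co-movement driven by common seasonal trends. I would then read off the HC3-robust statistics: in the second cluster Fossil Hard Coal has $t\approx 3.92$ ($p<0.001$) and in the third cluster $t\approx 3.44$ ($p=0.001$), both comfortably below the $5\%$ threshold used to \emph{define} features. Two further facts matter for the mechanism: both coefficients are positive, and the estimate grows from $0.0011$ in cluster~2 to $0.0060$ in cluster~3.

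Next I would supply the mechanism. In a uniform-price day-ahead auction the clearing price equals the marginal cost of the price-setting unit. Ordering Nordic and coupled-market generators by increasing marginal cost places hydro, nuclear, and wind near the bottom of the stack and hard coal well above them. In the low-price cluster the marginal unit sits among the cheap technologies, so a coal shock cannot transmit to price---consistent with coal being insignificant in Table~\ref{tab:ols_results_cluster_First}. In the moderate- and high-price clusters, scarcity or high residual demand pushes the marginal unit up to hard coal; a positive shock to dispatched coal then signals that coal is price-setting, and because it is a high-cost technology the clearing price rises, yielding the positive coefficients observed. The increase in magnitude from cluster~2 to cluster~3 reinforces this reading, since the more binding the scarcity, the further up the merit order the marginal unit climbs. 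This step jointly delivers the numerical claim and the stated intuition.

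The hard part will be the causal interpretation rather than the significance, which is immediate from the tables. I would need to argue that the MSTD/SARIMA filtering strips out the common seasonal and trend components that would otherwise generate a spurious coal--price association, so that the surviving residual correlation is attributable to merit-order transmission and not to shared cyclicality. I would also need to address the endogeneity that price and coal dispatch are jointly determined inside the auction, arguing that under competitive marginal-cost bidding the economically identified direction is coal availability $\to$ clearing price in high-demand states. Formalizing this beyond the reduced-form regression---in particular, ruling out reverse causation from price to coal output---is where the genuine difficulty lies, and I would treat the shock-transmission framing of the feature-selection algorithm as the device that makes this interpretation defensible.
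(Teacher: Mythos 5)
Your proposal matches the paper's own justification: the significance claim is read directly off the HC3-robust statistics in Tables~\ref{tab:ols_results_Second_cluster} and~\ref{tab:ols_results_Third_cluster} (with coal absent from Table~\ref{tab:ols_results_cluster_First}), and the mechanism is the same merit-order argument the paper states as the intuition. The additional discussion of endogeneity and reverse causation goes beyond what the paper attempts, but the core approach is identical.
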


Table~\ref{tab:Drivers} summarizes the significant features for the three price clusters. These features are placed in matrix $\textbf{X}$ as defined in Section~\ref{sec:section2}. The following observation highlights the importance of cluster-based feature extraction. 

\begin{observation} \label{observation4}
    Different features have the power to explain Nordic SP variations in different price clusters. Recognizing these features improves prediction results.
\end{observation}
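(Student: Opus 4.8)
The plan is to justify Observation~\ref{observation4} in two parts, mirroring its two claims. The first claim---that different features explain SP variation in different price clusters---follows directly from comparing the three cluster-specific OLS regressions in Tables~\ref{tab:ols_results_cluster_First}, \ref{tab:ols_results_Second_cluster}, and~\ref{tab:ols_results_Third_cluster}. First I would tabulate the retained features (those with $p<5\%$) for each cluster and exhibit that the three significant-feature sets differ substantially: the low-price cluster is driven by a broad set including Biomass, Nuclear, Hydro Water Reservoir, Wind Onshore, Fossil Gas, Other, and Auction Capacity for Import; the moderate-price cluster reduces to Hydro Water Reservoir and Fossil Hard Coal; and the high-price cluster shifts to Solar, Wind Onshore, and Fossil Hard Coal. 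The small overlap across these sets is itself the evidence that feature relevance is regime-dependent, so the first sentence of the observation is established essentially by direct inspection of the significance patterns.

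Next I would supply the mechanistic interpretation that makes this statistically established pattern economically coherent, drawing on Observations~\ref{observation1}--\ref{observation3}. The appearance of import auction capacity only in the low-price cluster reflects the near-inelastic demand at low prices (Observation~\ref{observation1}), while the emergence of Fossil Hard Coal in the two upper clusters reflects the merit-order effect, whereby the marginal technology setting the price migrates to more expensive thermal generation as demand rises (Observation~\ref{observation3}). This step does not prove anything new, but it anchors the cluster-dependence in market fundamentals and rules out the possibility that the differing feature sets are mere artifacts of the MSTD de-seasonalization and SARIMA de-autocorrelation filtering that precedes the regressions.

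The second claim---that recognizing these features improves prediction---is the harder part, and I would treat it as a claim to be validated rather than derived in closed form. The key argument is by contrast: a single pooled regression over all days forces one common coefficient vector across regimes, which, given the heterogeneity just documented, induces misspecification, since regime-specific drivers are either omitted or collapse into the residual and bias the retained coefficients. Constructing the feature matrix $\textbf{X}$ cluster-by-cluster instead retains exactly the drivers that carry explanatory power in the relevant regime, so the downstream forecasting model receives a more informative, lower-noise input space. I expect the main obstacle to be that this predictive-improvement claim cannot be certified from the in-sample $R^2$ values alone---an in-sample fit gain does not guarantee out-of-sample accuracy---so its rigorous confirmation must be deferred to the forecasting experiments of Section~\ref{sec:Numerical}, where the cluster-based features feed the multi-forecast selection--shrinkage pipeline and their incremental value over a pooled specification can be measured directly on held-out data.
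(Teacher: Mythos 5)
Your proposal matches the paper's own justification: the first claim is established by direct comparison of the cluster-specific OLS results (Tables~\ref{tab:ols_results_cluster_First}--\ref{tab:ols_results_Third_cluster}, summarized in Table~\ref{tab:Drivers}), and the second claim is validated empirically in the ablation study of Section~\ref{sec:Numerical}, where the pooled-feature variant (\textbf{PCA-SS}) is compared against the cluster-based \textbf{C-PCA-SS} on held-out data. Your deferral of the predictive-improvement claim to exactly that out-of-sample comparison is precisely how the paper handles it, so the approach is essentially identical.
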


\begin{table}[t!]
\centering
\caption{Significant Features of Clusters}
\label{tab:Drivers}
\begin{tabular}{lp{8cm}}  
\toprule
\textbf{Cluster} & \textbf{Significant Features} \\  
\midrule
Cluster 1 & Biomass, Nuclear, Hydro Water Reservoir, Wind Onshore, Fossil Gas, Other, Auction Capacity Import \\
Cluster 2 & Hydro Water Reservoir, Fossil Hard Coal \\
Cluster 3 & Solar, Wind Onshore, Fossil Hard Coal \\
\bottomrule
\end{tabular}
\end{table}

\subsection{Addressing imperfect multicollinearity using Principal Component Analysis (PCA)}
The correlation structure of the hourly SPs shown in Fig. \ref{fig:Corr} indicates that the correlation is very high between certain delivery periods. Such multicollinearity is known to be detrimental to calculating regression model coefficients, as the data set matrix would be close to singular. 

To mitigate multicollinearity in the Nordic SP dataset, we propose to use Principal Component Analysis (PCA) as follows. Given the matrix $\mathbf{X}$ of $9$ significant features, we form the matrix $
[\mathbf{P}|\mathbf{X}]$. After centering the matrix $
[\mathbf{P}|\mathbf{X}]$, we compute the $24 + 9$ principal components by solving the following optimization problem 

\begin{subequations} \label{pca:opt}
\begin{align} 
    \max_{\boldsymbol{W}}&\quad tr[\boldsymbol{W}^T ([\mathbf{P}|\mathbf{X}]^T[\mathbf{P}|\mathbf{X}]) \boldsymbol{W}] \\ 
    \text{s.t.} &\quad\boldsymbol{w}_i^T\boldsymbol{w}_i = 1, \label{pca:opt1} \\
    &\quad \boldsymbol{w}_i^T\boldsymbol{w}_j = 0,~\forall~i\neq j, \label{pca:opt2}
\end{align}
\end{subequations}

where  $\boldsymbol{W}$ is the matrix consisting of $24 + 9$ weight vectors $\boldsymbol{w}$ and the matrix $\mathbf{\mathbf{Z}} = [\mathbf{P}|\mathbf{X}]\boldsymbol{W} = [\boldsymbol{v}_1,\boldsymbol{v}_2,...,\boldsymbol{v}_{24+9}]$ consists of $24 + 9$ principal components. The orthogonality constraint (\ref{pca:opt2}) ensures that the principal components  $\boldsymbol{v}$ are orthogonal. The principal weight vectors $\boldsymbol{w}$ are eigenvectors of the matrix $[\mathbf{P}|\mathbf{X}]^T[\mathbf{P}|\mathbf{X}]$ and the variance (importance) of each principal component is represented by the corresponding eigenvalue. We denote the $l$ principal components corresponding to the $l$ largest eigenvalues selected from matrix $\mathbf{Z}$ by $\mathbf{Z}(l)$. We will discuss in Section~\ref{sec:optimization} how to choose the number of components $l$.

The proposed forecast-optimized feature-engineering approach in this section will be used in the upcoming Section \ref{sec:multi-forecast} to provide the input to our forecast models.

\section{Multi-forecast selection-shrinkage algorithm}
\label{sec:multi-forecast}
In this section, we first establish the theoretical framework outlining the conditions that models must satisfy to be combined. The ambiguity decomposition \citep{krogh1995ensembles} motivates our selection criterion, which seeks complementary forecasts rather than merely accurate stand-alone models. Then we derive the optimal weights for model combinations, and finally, we will explain the optimization process. Guided by the insights from this theoretical foundation, in Section \ref{sec:Numerical}, we construct candidate models and demonstrate how our algorithm improves overall performance. 

\subsection{Selection phase}
In the regression model $\textbf{p} = \textbf{f}(\textbf{D}) + \eta$ with input matrix $\textbf{D} = [\textbf{P}|\textbf{X}]$ and output variable $\textbf{p}$, function $\textbf{f} : \mathbb{R}^{L \times c} \to \mathbb{R}^{24}$ should be approximated considering $\eta$ as some additive noise. In function $\textbf{f}$ definition, $c = 24 +9$, is the number of columns of $\textbf{D}$, $L$ is the look-back parameter. 

Suppose a collection of $K$ forecast models $\{\textbf{f}_k\}_{k=1}^{K}$ of $\textbf{p}$ is available. A weighted average forecast can be represented as $\bar{\textbf{f}}(\textbf{D}) = \sum_k \omega_k \textbf{f}_k(\textbf{D})$ where $\omega_k$s are positive and sum to one. 

For an input $\textbf{D}$, we define the error of this averaged forecast as $\epsilon(\textbf{D})=(\textbf{p}(\textbf{D}) - \bar{\textbf{f}}(\textbf{D}))^2$ and the error of the $k^{th}$ forecast model as $\epsilon_k(\textbf{D}) = (\textbf{p}(\textbf{D}) - \textbf{f}_k(\textbf{D}))^2$, and its ambiguity as $a_k(\textbf{D}) = (\textbf{f}_k(\textbf{D}) - \bar{\textbf{f}}(\textbf{D}))^2$. The following theorem follows: 

\begin{theorem} \label{thm1}
The error of the averaged forecast model can be decomposed as $\epsilon(\textbf{D}) = \sum_k \omega_k \epsilon_k(\textbf{D}) - \sum_k \omega_k a_k(\textbf{D})$. 
\end{theorem}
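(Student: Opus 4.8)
The plan is to prove the identity by direct algebraic expansion of the right-hand side, since the claim is an exact quadratic decomposition rather than an inequality. Throughout I would suppress the argument $\mathbf{D}$ and write $p = \mathbf{p}(\mathbf{D})$, $f_k = \mathbf{f}_k(\mathbf{D})$, and $\bar f = \bar{\mathbf{f}}(\mathbf{D}) = \sum_k \omega_k f_k$. Since $\mathbf{p}$ is a $24$-dimensional vector, I would read each square as a squared Euclidean norm and each product as an inner product; the identity then holds coordinate by coordinate, and hence for the norm by summing over coordinates, so nothing is lost by carrying the computation through with bilinear inner products.

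First I would expand the weighted error term, $\sum_k \omega_k \|p - f_k\|^2 = \|p\|^2 \sum_k \omega_k - 2\langle p, \sum_k \omega_k f_k\rangle + \sum_k \omega_k \|f_k\|^2$, and invoke the two defining properties of the weights, $\sum_k \omega_k = 1$ and $\sum_k \omega_k f_k = \bar f$, so that it collapses to $\|p\|^2 - 2\langle p, \bar f\rangle + \sum_k \omega_k \|f_k\|^2$. Expanding the weighted ambiguity term the same way gives $\sum_k \omega_k \|f_k - \bar f\|^2 = \sum_k \omega_k \|f_k\|^2 - 2\langle \bar f, \bar f\rangle + \|\bar f\|^2$, which reduces to $\sum_k \omega_k \|f_k\|^2 - \|\bar f\|^2$. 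Subtracting the ambiguity sum from the error sum, the common term $\sum_k \omega_k \|f_k\|^2$ cancels exactly, leaving $\|p\|^2 - 2\langle p, \bar f\rangle + \|\bar f\|^2 = \|p - \bar f\|^2 = \epsilon(\mathbf{D})$, which is the claimed identity.

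The computation is routine, so there is no genuine obstacle; the only points requiring care are that the cancellation hinges solely on $\sum_k \omega_k = 1$ together with $\bar f$ being the $\omega$-weighted mean of the $f_k$ --- positivity of the weights plays no role in the identity itself and matters only for the later bias--variance interpretation --- and that the cross terms must be handled via the bilinearity of the inner product, so that $\langle \bar f, \sum_k \omega_k f_k\rangle = \|\bar f\|^2$ comes out correctly and the vector-valued case goes through exactly as the scalar one.
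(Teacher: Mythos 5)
Your proof is correct and follows essentially the same route as the paper's: both expand the weighted error sum and the weighted ambiguity sum, use $\sum_k \omega_k = 1$ and $\bar{\textbf{f}} = \sum_k \omega_k \textbf{f}_k$, and cancel the common term $\sum_k \omega_k \|\textbf{f}_k\|^2$ to leave $\|\textbf{p}-\bar{\textbf{f}}\|^2$. Your explicit remarks that positivity of the weights is not needed for the identity and that the vector case reduces to the scalar one coordinate-wise are accurate refinements the paper leaves implicit.
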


\begin{proof}
We have (suppressing $\textbf{D}$): 
\begin{align} \label{eq1}
    \sum_k \omega_k \epsilon_k - \sum_k \omega_k a_k &= \sum_k \omega_k (\textbf{p} - \textbf{f}_k)^2 - \sum_k \omega_k (\textbf{f}_k - \bar{\textbf{f}})^2 \notag \\
    & \hspace{-0pt} = \sum_k \omega_k \textbf{p} \cdot \textbf{p} - 2\textbf{p} \cdot \sum_k \omega_k \textbf{f}_k + \sum_k \omega_k \textbf{f}_k \cdot \textbf{f}_k \notag \\
    & \hspace{-0pt} - \sum_k \omega_k \textbf{f}_k \cdot \textbf{f}_k + 2\bar{\textbf{f}} \cdot \sum_k \omega_k \textbf{f}_k - \sum_k \omega_k \bar{\textbf{f}} \cdot \bar{\textbf{f}} \notag \\
    &= \textbf{p} \cdot \textbf{p} - 2\textbf{p} \cdot \bar{\textbf{f}} + \bar{\textbf{f}} \cdot \bar{\textbf{f}}
\end{align}
\end{proof}
The term $\sum_k \omega_k \epsilon_k(\textbf{D})$ in Theorem \ref{thm1} is the weighted average of the individual forecast errors, and the term $\sum_k \omega_k a_k(\textbf{D})$ is the weighted average of the ambiguities. Theorem \ref{thm1} shows that the more the forecast models differ from their averages, the lower the error $\epsilon(\textbf{D})$ will be, provided the individual errors remain constant. Alternatively, the more variance a given model can introduce to the final model, the lower the final model's error. 

\begin{corollary}\label{crl1}
The generalization error can be decomposed as $\epsilon = \bar{\epsilon} - \bar{a}$, where $\bar{\epsilon}$ is the average generalization error of individual models and $\bar{a}$ is the average generalization of ambiguities. 
\end{corollary}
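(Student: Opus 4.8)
The plan is to obtain the generalization-level identity by integrating the pointwise decomposition of Theorem~\ref{thm1} against the data-generating distribution. Theorem~\ref{thm1} is an identity that holds for \emph{every} realization of the input $\textbf{D}$, so it is in fact an identity between random variables; the corollary is simply its expectation, recovering the classical Krogh--Vedelsby ambiguity decomposition \citep{krogh1995ensembles} at the population level.

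Concretely, I would first fix the ensemble weights $\{\omega_k\}$, which by assumption are nonnegative constants summing to one and, crucially, do not depend on $\textbf{D}$. Taking the expectation $\mathbb{E}$ over the joint distribution of the input $\textbf{D}$ and the noisy target $\textbf{p}$ on both sides of the pointwise identity $\epsilon(\textbf{D}) = \sum_k \omega_k \epsilon_k(\textbf{D}) - \sum_k \omega_k a_k(\textbf{D})$ and invoking linearity of expectation gives
\begin{equation*}
\mathbb{E}[\epsilon(\textbf{D})] = \sum_k \omega_k\, \mathbb{E}[\epsilon_k(\textbf{D})] - \sum_k \omega_k\, \mathbb{E}[a_k(\textbf{D})],
\end{equation*}
where each weight pulls out of the expectation precisely because it is input-independent and the sum over $k$ is finite. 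Identifying the three terms with the generalization error of the ensemble $\epsilon := \mathbb{E}[\epsilon(\textbf{D})]$, the weighted-average generalization error of the individual models $\bar{\epsilon} := \sum_k \omega_k \mathbb{E}[\epsilon_k(\textbf{D})]$, and the weighted-average generalization ambiguity $\bar{a} := \sum_k \omega_k \mathbb{E}[a_k(\textbf{D})]$ yields the claimed identity $\epsilon = \bar{\epsilon} - \bar{a}$.

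There is no genuine analytic obstacle here; the result is a one-line consequence of linearity of expectation applied to an identity that already holds pointwise. The only points requiring care are bookkeeping rather than mathematics: first, stating explicitly the distribution over which the expectation is taken, so that $\bar{\epsilon}$ and $\bar{a}$ are unambiguously the expected (generalization) counterparts of the per-input quantities in Theorem~\ref{thm1}; and second, a mild integrability condition, namely finiteness of $\mathbb{E}[\epsilon_k(\textbf{D})]$ and $\mathbb{E}[a_k(\textbf{D})]$ for each $k$, so that splitting the expectation across the finite sums is justified and no $\infty-\infty$ indeterminacy arises. Since $K$ is finite and the weights are bounded in $[0,1]$, these conditions hold automatically whenever the individual models have finite mean-squared error, which we assume throughout.
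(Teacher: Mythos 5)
Your proposal is correct and follows essentially the same route as the paper: integrate the pointwise identity of Theorem~\ref{thm1} over the input distribution, use linearity to pull the fixed weights out of the integral, and identify the resulting terms with $\bar{\epsilon}$ and $\bar{a}$. Your added remarks on input-independence of the weights and integrability are sensible refinements but do not change the argument.
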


\begin{proof}
By integrating over the input distribution, we obtain the generalization error:
\begin{equation}
\begin{aligned}
\epsilon = \int \epsilon(\textbf{D}) \, dP &= \int \bar{\epsilon}(\textbf{D}) \, dP - \int \bar{a}(\textbf{D}) \, dP \\
&\hspace{-0em} = \sum_k \omega_k \int \epsilon_k(\textbf{D}) \, dP - \sum_k \omega_k \int a_k(\textbf{D}) \, dP \\
&= \sum_k \omega_k \epsilon_k - \sum_k \omega_k a_k = \bar{\epsilon} - \bar{a}
\end{aligned}
\end{equation}
\end{proof} 

\begin{remark} \label{rmk1}
     The error $\epsilon$ will decrease by introducing another model if introduced variance is more than introduced bias. 
\end{remark} 

Theorem \ref{thm1} not only explains how combining models can improve performance, but also offers guidance on building ensembles. Rather than focusing on the bias–variance trade-off within a single model, we should think of it at the ensemble level. Starting from a low-bias model, we can introduce new models that add diversity — or variance — to the ensemble. If managed carefully, this trade-off can reduce the overall error more effectively than optimizing individual models alone.

\subsection{Shrinkage phase}
Suppose we have selected $N \leq K$ forecast models that satisfy the property of Remark \ref{rmk1}. We want to find the optimal weights of these forecast models so that we get the lowest error variance for the weighted average forecast. The idea emerges from the fact that in the first stage with Remark \ref{rmk1} we increase the variance in the ensemble level, and with this second step we want to minimize the increased variance.  

We define the error vector $\mathbf{e} = (e_1, e_2, \ldots, e_N)^\top$ (difference between actual and forecast values) and the weight vector $\boldsymbol{\omega} = (\omega_1, \omega_2, \ldots, \omega_N)^\top$. We now present the following theorem, a well-known result from portfolio optimization \cite{markowitz1952portfolio}, which we reinterpret in our context. 
\begin{theorem} \label{thm2}
The minimum error variance (MEV) combination of forecast models is obtained by setting
\begin{equation} \label{opt}
 \quad \boldsymbol{\omega}_{\text{MEV}} = \frac{\mathbf{\Sigma}^{-1} \mathbf{1}}{\mathbf{1}^\top \mathbf{\Sigma}^{-1}\mathbf{1}}\text{,} \quad \sigma_{\text{MEV}}^2 = \frac{1}{\mathbf{1}^\top \mathbf{\Sigma}^{-1}\mathbf{1}} \text{,} 
\end{equation}
where $\mathbf{\Sigma}$ is the error covariance matrix $[cov(e_i,e_j)]$ and $\mathbf{1}$ is $ N \times 1$ unit vector.
\end{theorem}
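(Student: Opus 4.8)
The plan is to recognize Theorem~\ref{thm2} as a constrained quadratic minimization and to solve it by the method of Lagrange multipliers. First I would write the error of the combined forecast as the weighted sum $\boldsymbol{\omega}^\top \mathbf{e}$; because the weights are normalized to sum to one, this is a genuine weighted average of the individual errors, and its variance is $\mathrm{Var}(\boldsymbol{\omega}^\top \mathbf{e}) = \boldsymbol{\omega}^\top \mathbf{\Sigma} \boldsymbol{\omega}$, where $\mathbf{\Sigma} = [\mathrm{cov}(e_i,e_j)]$. The problem to solve is therefore $\min_{\boldsymbol{\omega}} \boldsymbol{\omega}^\top \mathbf{\Sigma} \boldsymbol{\omega}$ subject to $\mathbf{1}^\top \boldsymbol{\omega} = 1$. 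I would use only this normalization constraint; the positivity of the weights assumed for Theorem~\ref{thm1} is relaxed here, since the unconstrained minimum-variance solution may in principle assign negative weights.

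Next I would form the Lagrangian $\mathcal{L}(\boldsymbol{\omega},\lambda) = \tfrac{1}{2}\boldsymbol{\omega}^\top \mathbf{\Sigma} \boldsymbol{\omega} - \lambda(\mathbf{1}^\top \boldsymbol{\omega} - 1)$, where the factor $\tfrac{1}{2}$ is purely cosmetic. Setting the gradient with respect to $\boldsymbol{\omega}$ to zero gives the stationarity condition $\mathbf{\Sigma}\boldsymbol{\omega} = \lambda \mathbf{1}$, hence $\boldsymbol{\omega} = \lambda\,\mathbf{\Sigma}^{-1}\mathbf{1}$. Imposing $\mathbf{1}^\top \boldsymbol{\omega} = 1$ then pins down $\lambda = (\mathbf{1}^\top \mathbf{\Sigma}^{-1}\mathbf{1})^{-1}$, which yields $\boldsymbol{\omega}_{\text{MEV}} = \mathbf{\Sigma}^{-1}\mathbf{1} / (\mathbf{1}^\top \mathbf{\Sigma}^{-1}\mathbf{1})$. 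Substituting this back into the objective and using the symmetry of $\mathbf{\Sigma}^{-1}$ collapses $\boldsymbol{\omega}_{\text{MEV}}^\top \mathbf{\Sigma} \boldsymbol{\omega}_{\text{MEV}}$ to $(\mathbf{1}^\top \mathbf{\Sigma}^{-1}\mathbf{1})^{-1}$, which is the claimed $\sigma_{\text{MEV}}^2$.

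To close the argument I would verify that this stationary point is genuinely the global minimizer and not merely a critical point. This reduces to the positive definiteness of $\mathbf{\Sigma}$: as a covariance matrix it is positive semidefinite, and I would argue it is in fact positive definite because the selection phase (Remark~\ref{rmk1}) retains only complementary, non-redundant models, so that no error is an exact affine combination of the others. Positive definiteness makes the quadratic objective strictly convex on the affine feasible set, so the unique stationary point is the global minimum, and it simultaneously guarantees that $\mathbf{\Sigma}^{-1}$ exists and that $\mathbf{1}^\top \mathbf{\Sigma}^{-1}\mathbf{1} > 0$, so that both expressions in \eqref{opt} are well defined.

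The main obstacle I anticipate is not the algebra, which is routine, but justifying the invertibility of $\mathbf{\Sigma}$ in the forecasting context: highly correlated (near-collinear) forecasts can make $\mathbf{\Sigma}$ ill-conditioned or singular, in which case $\boldsymbol{\omega}_{\text{MEV}}$ is undefined or numerically unstable. This is precisely the reason the selection phase is run first, and I would emphasize that the practical validity of the closed form rests on that diversity requirement; where necessary, one may replace $\mathbf{\Sigma}^{-1}$ by a regularized (shrinkage) estimate, which is consistent with the shrinkage theme of this section.
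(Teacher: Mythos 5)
Your proof is correct, but it takes a genuinely more direct route than the paper. The paper first proves a two\mbox{-}constraint lemma in the spirit of \citet{markowitz1952portfolio}: it minimizes $\boldsymbol{\omega}^\top\mathbf{\Sigma}\boldsymbol{\omega}$ subject to \emph{both} $\boldsymbol{\omega}^\top\mathbf{1}=1$ and a target-error constraint $\boldsymbol{\omega}^\top\mathbf{e}=e^*$, obtains the full efficient frontier $\sigma_{\text{opt}}^2=(B-2e^*A+e^{*2}C)/D$ in terms of the scalars $A,B,C,D$, and then derives Theorem~\ref{thm2} by minimizing that frontier over $e^*$ (setting $e^*=A/C$ and substituting back). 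You instead drop the target-error constraint from the outset and solve the single-constraint problem with one Lagrange multiplier, which reaches $\boldsymbol{\omega}_{\text{MEV}}=\mathbf{\Sigma}^{-1}\mathbf{1}/(\mathbf{1}^\top\mathbf{\Sigma}^{-1}\mathbf{1})$ and $\sigma^2_{\text{MEV}}=(\mathbf{1}^\top\mathbf{\Sigma}^{-1}\mathbf{1})^{-1}$ in a few lines. The two are equivalent because optimizing the frontier over $e^*$ renders the target-error constraint vacuous, so both identify the same global minimum-variance point. What the paper's longer route buys is the entire frontier, i.e.\ the family of optimal weights for any prescribed combined error, which is of independent interest and mirrors the cited portfolio-theory result; what your route buys is brevity and two pieces of rigor the paper leaves implicit: the observation that the combined error equals $\boldsymbol{\omega}^\top\mathbf{e}$ only because the weights sum to one, and the explicit verification that positive definiteness of $\mathbf{\Sigma}$ (tied to the non-redundancy enforced in the selection phase) guarantees both invertibility and that the stationary point is the unique global minimizer rather than a mere critical point. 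Your closing caveat about ill-conditioning under near-collinear forecasts is also well taken and consistent with the paper's own suggestion, in the conclusion, of replacing the in-sample covariance with a shrinkage estimator.
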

We will use the following results to prove the theorem.
\begin{lemma}
Consider the constrained optimization problem 
\begin{align} \label{opt1}
\min_{\boldsymbol{\omega}} \{ \boldsymbol{\omega}^\top \mathbf{\Sigma} \boldsymbol{\omega}~|~\boldsymbol{\omega}^\top \mathbf{e} = e^*,~\boldsymbol{\omega}^\top \mathbf{1} = 1\}.
\end{align}

The solution is given by 
\begin{equation}
\begin{aligned} \label{opt2}
\boldsymbol{\omega}_{\text{opt}} &= \frac{B \mathbf{\Sigma}^{-1} \mathbf{1} - A \mathbf{\Sigma}^{-1} \mathbf{e} + e^* \left( C \mathbf{\Sigma}^{-1} \mathbf{e} - A \mathbf{\Sigma}^{-1} \mathbf{1} \right)}{D}, \\
\sigma_{\text{opt}}^2 &= \frac{B - 2e^* A + e^*{^2} C}{D},
\end{aligned}
\end{equation}
where $e^*$ is the target error for the average model, $A = \mathbf{1}^\top \mathbf{\Sigma}^{-1} \mathbf{e}$ is the weighted mean error, $B = \mathbf{e}^\top \mathbf{\Sigma}^{-1} \mathbf{e}$ is the $F$ ratio, $C = \mathbf{1}^\top \mathbf{\Sigma}^{-1}\mathbf{1}$ and $D = BC - A^2$.
\end{lemma}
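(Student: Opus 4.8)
The plan is to treat this as an equality-constrained quadratic program and solve it with Lagrange multipliers, exploiting that $\mathbf{\Sigma}$ is an error covariance matrix and hence (assumed nonsingular and) positive definite, so the objective $\boldsymbol{\omega}^\top\mathbf{\Sigma}\boldsymbol{\omega}$ is strictly convex and any stationary point of the Lagrangian is the unique global minimizer. Introducing multipliers $\lambda$ and $\mu$ for the two affine constraints, I would form $\mathcal{L}(\boldsymbol{\omega},\lambda,\mu) = \boldsymbol{\omega}^\top\mathbf{\Sigma}\boldsymbol{\omega} - \lambda(\boldsymbol{\omega}^\top\mathbf{e} - e^*) - \mu(\boldsymbol{\omega}^\top\mathbf{1} - 1)$.

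First I would set the gradient in $\boldsymbol{\omega}$ to zero, obtaining the stationarity condition $2\mathbf{\Sigma}\boldsymbol{\omega} = \lambda\mathbf{e} + \mu\mathbf{1}$, and invert $\mathbf{\Sigma}$ to write $\boldsymbol{\omega} = \mathbf{\Sigma}^{-1}(\tilde\lambda\mathbf{e} + \tilde\mu\mathbf{1})$ with the rescaled multipliers $\tilde\lambda = \lambda/2$, $\tilde\mu = \mu/2$. Substituting this into the constraints $\boldsymbol{\omega}^\top\mathbf{e} = e^*$ and $\boldsymbol{\omega}^\top\mathbf{1} = 1$ and recognising the scalars $A = \mathbf{1}^\top\mathbf{\Sigma}^{-1}\mathbf{e}$, $B = \mathbf{e}^\top\mathbf{\Sigma}^{-1}\mathbf{e}$, $C = \mathbf{1}^\top\mathbf{\Sigma}^{-1}\mathbf{1}$ collapses the $N$-dimensional problem to the $2\times 2$ system $\left(\begin{smallmatrix} B & A \\ A & C \end{smallmatrix}\right)(\tilde\lambda,\tilde\mu)^\top = (e^*,1)^\top$.

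Next I would solve this system by Cramer's rule: its determinant is precisely $D = BC - A^2$, yielding $\tilde\lambda = (e^*C - A)/D$ and $\tilde\mu = (B - e^*A)/D$. Back-substituting into $\boldsymbol{\omega} = \mathbf{\Sigma}^{-1}(\tilde\lambda\mathbf{e} + \tilde\mu\mathbf{1})$ and grouping the coefficients of $\mathbf{\Sigma}^{-1}\mathbf{e}$ and $\mathbf{\Sigma}^{-1}\mathbf{1}$ reproduces the stated $\boldsymbol{\omega}_{\text{opt}}$. For the minimal variance, rather than expand the quadratic form directly, I would reuse the stationarity identity: since $\mathbf{\Sigma}\boldsymbol{\omega} = \tilde\lambda\mathbf{e} + \tilde\mu\mathbf{1}$, we get $\sigma^2_{\text{opt}} = \boldsymbol{\omega}^\top\mathbf{\Sigma}\boldsymbol{\omega} = \tilde\lambda(\boldsymbol{\omega}^\top\mathbf{e}) + \tilde\mu(\boldsymbol{\omega}^\top\mathbf{1}) = \tilde\lambda e^* + \tilde\mu$, which simplifies immediately to $(B - 2e^*A + e^{*2}C)/D$.

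The arithmetic is routine once the structure is in place; the genuinely substantive points are the regularity conditions, and these are where I expect the only real obstacle. Positive definiteness of $\mathbf{\Sigma}$ is what certifies the stationary point as a global minimum (not merely a saddle) and licenses the inversion. The step I would flag as requiring care is ensuring $D = BC - A^2 \neq 0$, so that the $2\times 2$ system is nonsingular and the division is legitimate. This follows from the Cauchy–Schwarz inequality in the inner product $\langle\mathbf{u},\mathbf{v}\rangle = \mathbf{u}^\top\mathbf{\Sigma}^{-1}\mathbf{v}$ (well defined because $\mathbf{\Sigma}^{-1}\succ 0$), which gives $A^2 \le BC$ with equality exactly when $\mathbf{e}$ and $\mathbf{1}$ are collinear. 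I would therefore record the standing assumption that $\mathbf{e}$ is not a scalar multiple of $\mathbf{1}$ — i.e. the forecast errors are not all identical — guaranteeing $D > 0$ and a non-degenerate target constraint.
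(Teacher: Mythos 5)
Your proposal is correct and follows essentially the same route as the paper: form the Lagrangian, reduce the stationarity condition to a $2\times 2$ linear system in the (rescaled) multipliers with determinant $D$, solve, and back-substitute. The only differences are cosmetic improvements on your side — the shortcut $\sigma^2_{\text{opt}}=\tilde\lambda e^*+\tilde\mu$ via the stationarity identity, and the explicit Cauchy--Schwarz argument that $D>0$ unless $\mathbf{e}\propto\mathbf{1}$, a nondegeneracy condition the paper leaves implicit.
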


\begin{proof}[Proof of Lemma 1]
The Lagrangian function of the problem (\ref{opt1}) is $\mathcal{L}(\boldsymbol{\omega}, \lambda_1, \lambda_2) = \boldsymbol{\omega}^\top \mathbf{\Sigma} \boldsymbol{\omega} + \lambda_1 \left( e^* - \boldsymbol{\omega}^\top \mathbf{e} \right) + \lambda_2 \left( 1 - \boldsymbol{\omega}^\top \mathbf{1} \right)$. The partial derivatives of $\mathcal{L}$ are 

\begin{align} 
&\frac{\partial \mathcal{L}}{\partial \boldsymbol{\omega}} = 2 \boldsymbol{\omega}^\top \mathbf{\Sigma} - \lambda_1 \mathbf{e^\top} - \lambda_2 \mathbf{1^\top} = 0, \label{eq:omega} \\ 
& \frac{\partial \mathcal{L}}{\partial \lambda_1} = e^* - \boldsymbol{\omega}^\top \mathbf{e} = 0, \label{eq:lambda1} \\ 
&\frac{\partial \mathcal{L}}{\partial \lambda_2} = 1 - \boldsymbol{\omega}^\top \mathbf{1} = 0. \label{eq:lambda2} 
\end{align}

We multiply equation \eqref{eq:omega} by $\mathbf{\Sigma}^{-1}$ and then by $\mathbf{e}$ from the right and using equations \eqref{eq:lambda1} and \eqref{eq:lambda2}, we obtain
\begin{equation} 
\begin{aligned} \label{first}
    2e^* = \lambda_1 \mathbf{e^\top} \mathbf{\Sigma}^{-1} \mathbf{e} + \lambda_2 \mathbf{1^\top} \mathbf{\Sigma}^{-1} \mathbf{e}
          = \lambda_1 B + \lambda_2 A.
\end{aligned}
\end{equation}

By multiplying equation \eqref{eq:omega} by $\mathbf{\Sigma}^{-1}$ and then by $\mathbf{1}$ from right and substituting from equations \eqref{eq:lambda1} and \eqref{eq:lambda2}, we obtain
\begin{equation} 
\begin{aligned} \label{second}
    2 = \lambda_1 \mathbf{e^\top} \mathbf{\Sigma}^{-1}\mathbf{1}  + \lambda_2 \mathbf{1^\top} \mathbf{\Sigma}^{-1} \mathbf{1} 
          = \lambda_1 A + \lambda_2 C.
\end{aligned}
\end{equation}

From equations \eqref{first} and \eqref{second}, we can express $\lambda_1$ and $\lambda_2$ as
\begin{equation}
    \lambda_1 = 2\frac{e^* C-A}{D}, \lambda_2 = 2\frac{B - e^* A}{D}.
\end{equation}
By substituting $ \lambda_1$ and $ \lambda_2$ in equation \eqref{eq:omega} we get the results.
\end{proof}
We are now ready to prove Theorem~\ref{thm2}.
\begin{proof}[Proof of Theorem 2]
Observe that $\sigma_{\text{opt}}^2$ is a convex function of $e^*$. Hence,  we can find the minimum $\sigma_{\text{opt}}^2$ by setting the derivative of equation \eqref{opt2} with respect to $e^*$ equal to zero,
\begin{equation}
    \frac{\partial \sigma_{\text{opt}}^2}{\partial e^*} = \frac{-2A + 2e^*C}{D} = 0.
\end{equation}
Solving for $e^*$ and then substituting in \eqref{opt2}, we obtain the result.
\end{proof}

\subsection{Optimization}
\label{sec:optimization}
Fig. \ref{fig:model_flowchart} illustrates the proposed multi-forecast selection-shrinkage algorithm. In the \textbf{Selection phase}, the forecasting models that satisfy the property in Remark \ref{rmk1} are selected from a collection of forecasting models. Then, in the \textbf{Shrinkage phase}, the optimal weights of the selected forecast models are computed using Theorem \ref{thm2} to obtain the weighted forecast model. If the selected model is a statistical model, we use  $[\textbf{P} |\textbf{X}]$ as its input, whereas for machine learning models, the input data is the reduced feature matrix $\textbf{Z}(l)$. To obtain the optimal number of components $l$, we use grid search, i.e., we compute the RMSE for the final model for $l\in\{1,\ldots,24+9\}$, and we choose the smallest $l$ that minimizes the RMSE. 

\begin{figure}[H]
\centering
\includegraphics[width=\linewidth,height=5cm,keepaspectratio]{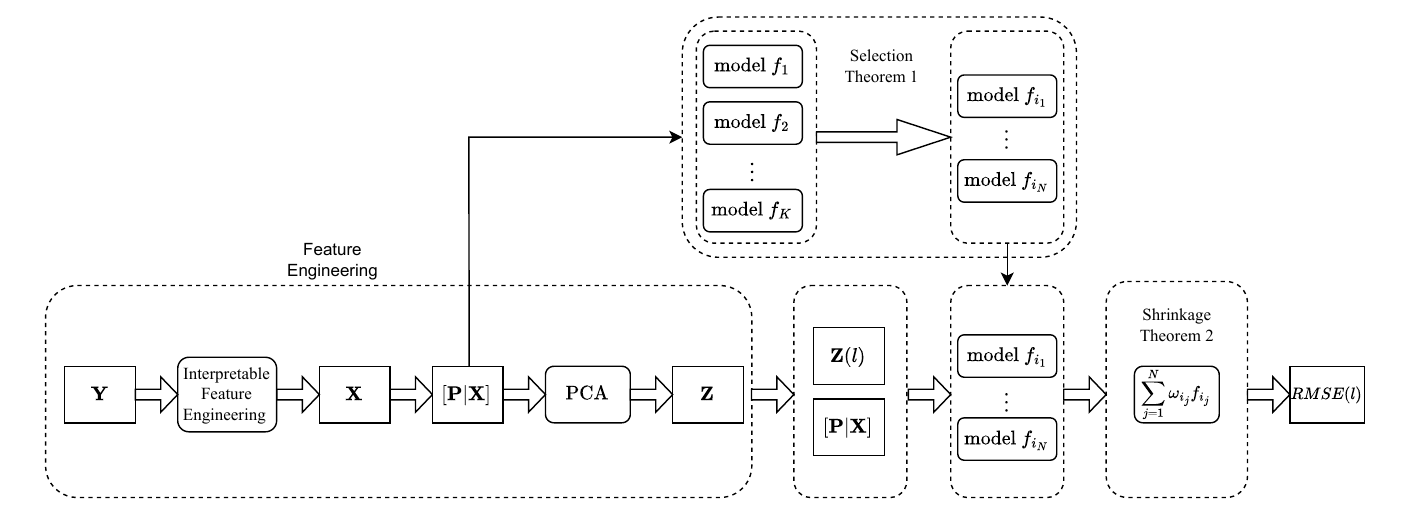}
\caption{Overview of the proposed multi-forecast PCA-selection-shrinkage algorithm.}
\label{fig:model_flowchart}
\end{figure}

\section{Numerical Experiments}
\label{sec:Numerical}
In this section, we demonstrate the effectiveness of our proposed algorithm on the Nordic SP dataset. We first specify and optimize candidate forecasting models for SP prediction. Using Theorem~\ref{thm1} and Remark~\ref{rmk1}, we then identify the subset of models that satisfy the bias--variance trading property at the ensemble level. Next, we show how the proposed shrinkage procedure, with and without PCA, reduces forecast error and improves model stability. We then compare the performance of our approach against two classes of benchmarks: (i) the state-of-the-art single model Temporal Fusion Transformer (TFT) and (ii) several widely used and practical forecast combination techniques. Finally, we perform a robustness check using synthetic data to confirm that the main results hold under resampling and that our conclusions are not sensitive to a particular sample period.

\subsection{Selection phase}
\subsubsection{The VARX model} \label{subsub1}

Our first candidate model is the Vector Autoregressive model with exogenous variables (VARX)~\cite{lutkepohl2005new}, selected for three main reasons. First, VARX models are widely used in electricity price forecasting and provide a well-established baseline~\cite{marcjasz2018selection,lehna2022forecasting}. Second, forecasting SP data at hourly frequency faces an endogeneity issue: all 24 hourly prices are published simultaneously around 1~pm on the day before delivery. Consequently, hourly lagged hourly prices cannot be used as predictors, and we must forecast the entire 24-dimensional price vector at daily frequency, precisely the setting where VARX provides a suitable multivariate framework. VARX($p,q$) is computationally efficient and captures cross-hour dependencies via its multivariate structure. Third, as noted in Remark~\ref{rmk1}, since statistical models such as VARX are theoretically high-bias but low-variance, this characteristic makes them valuable components of an ensemble because they complement the low-bias, high-variance neural network models.

The reduced-form VARX($p,q$) model is given by
\begin{align}
\zeta_t & = \mu + A_1\zeta_{t-1}+\dots+A_p\zeta_{t-p}+B_1X_{t-1}+\dots+B_qX_{t-q} + \epsilon_t \notag\\
  & = \mu + A(L)\zeta_t + B(L)X_t + \epsilon_t,
\end{align}
where $\zeta_{t-i}$ denotes the weekly-differenced system prices $(\Delta_7 \textbf{p}_t)^\top$, and $X_{t-i}$ denotes the weekly-differenced exogenous variables $(\Delta_7 \textbf{x}_{t-i})^\top$. We use the drivers identified in Section~\ref{sec:Section IV} (Table~\ref{tab:Drivers}) as explanatory variables and, based on the Bayesian Information Criterion (BIC), fit a VARX(10,1) model to the training data.

The first row of Table~\ref{tab:Reulst_RMSE} reports the RMSE for each of the 24 delivery periods. Errors are lowest for the first six periods, while hours 7–16 exhibit higher RMSE. The overall average RMSE is 15.368 with a standard deviation of 3.968. While VARX underperforms relative to the neural network models in isolation (reflecting its higher model bias), its inclusion is crucial for capturing the volatility of the price and improving ensemble performance.

\begin{table*}
\centering
\caption{Forecasting RMSE of the proposed C-PCA-SS algorithm, input models, alternative combinations, TFT, and ablations.}
\label{tab:Reulst_RMSE}
\resizebox{\textwidth}{!}{%
\begin{tabular}{c c c c c c c c c c c c c c c c c c c c c c c c}
\toprule
& 0-1 & 1-2 & 2-3 & 3-4 & 4-5 & 5-6 & 6-7 & 7-8 & 8-9 & 9-10 & 10-11 & 11-12\\
\hline
\midrule
VARX(10,1) & 6.62 & 8.44 & 9.32 & 10.22 & 10.79 & 11.33 & 13.6 & 20.17 & 21.13 & 18.75 & 18.4 & 18.21  \\
LSTM & 8.65 & 9.18 & 9.12 & 9.1 & 8.75 & 8.98 & 11.82 & 20.75 & 19.59 & 15.71 & 16.13 & 16.7 \\
LSTM-XGBoost & 7.6 & 8.02 & 8.47 & 8.78 & 8.68 & 9.0 & 11.52 & 17.75 & 15.25 & 14.98 & 16.48 & 16.51 \\
LSTM-CNN & 7.33 & 7.97 & 8.06 & 8.44 & 8.52 & 8.93 & 11.17 & 18.63 & 17.04 & 15.68 & 17.68 & 18.75 \\
\cellcolor{gray!30} \textbf{C-SS} & \cellcolor{gray!30} 6.66 & \cellcolor{gray!30} 7.4 &\cellcolor{gray!30} 7.95 &\cellcolor{gray!30} 8.48 &\cellcolor{gray!30} 8.5 &\cellcolor{gray!30} 8.93 &\cellcolor{gray!30} 10.97 &\cellcolor{gray!30} 17.45 &\cellcolor{gray!30} 15.94 &\cellcolor{gray!30} 14.94 &\cellcolor{gray!30} 15.97 &\cellcolor{gray!30} 15.79 \\
\cellcolor{gray!30} \textbf{C-PCA-SS} & \cellcolor{gray!30} 6.13 & \cellcolor{gray!30} 6.96 & \cellcolor{gray!30} 7.74 & \cellcolor{gray!30} 7.92 & \cellcolor{gray!30} 8.21 & \cellcolor{gray!30} 8.83 & \cellcolor{gray!30} 10.26 & \cellcolor{gray!30} 17.02 & \cellcolor{gray!30} 15.22 & \cellcolor{gray!30} 14.54 & \cellcolor{gray!30} 16.08 & \cellcolor{gray!30} 15.33 \\

\cellcolor{gray!30} \textbf{C-PCA/e-SS} & \cellcolor{gray!30} 7.36 & \cellcolor{gray!30} 8.3 & \cellcolor{gray!30} 8.5 & \cellcolor{gray!30} 9.0 & \cellcolor{gray!30} 9.27 & \cellcolor{gray!30} 9.63 & \cellcolor{gray!30} 11.85 & \cellcolor{gray!30} 19.21 & \cellcolor{gray!30} 18.0 & \cellcolor{gray!30} 16.6 & \cellcolor{gray!30} 16.42 & \cellcolor{gray!30} 16.49 \\
\cellcolor{gray!30} \textbf{PCA-SS} & \cellcolor{gray!30} 6.05 & \cellcolor{gray!30} 6.77 & \cellcolor{gray!30} 7.8 & \cellcolor{gray!30} 8.06 & \cellcolor{gray!30} 8.5 & \cellcolor{gray!30} 9.07 & \cellcolor{gray!30} 10.05 & \cellcolor{gray!30} 17.43 & \cellcolor{gray!30} 15.8 & \cellcolor{gray!30} 15 & \cellcolor{gray!30} 16.21 & \cellcolor{gray!30} 15.83 \\
\cellcolor{gray!30} \textbf{C-PCA-SS-h} & \cellcolor{gray!30} 4.99 & \cellcolor{gray!30} 6.48 & \cellcolor{gray!30} 7.32 & \cellcolor{gray!30} 7.53 & \cellcolor{gray!30} 8.09 & \cellcolor{gray!30} 8.7 & \cellcolor{gray!30} 9.99 & \cellcolor{gray!30} 16.88 & \cellcolor{gray!30} 14.8 & \cellcolor{gray!30} 14.31 & \cellcolor{gray!30} 15.97 & \cellcolor{gray!30} 15.16 \\
SimAV & 7.72 & 8.62 & 8.05 & 8.43 & 8.45 & 9.87 & 10.65 & 17.96 & 16.93 & 14.92 & 15.91 & 16.12 \\
CLS & 6.81 & 7.97 & 8.44 & 8.97 & 8.85 & 9.4 & 11.43 & 17.45 & 16.94 & 15.94 & 15.97 & 16.79 \\
IRMSE & 6.80 & 7.74 & 8.35 & 9.41 & 9.79 & 10.01 & 10.68 & 17.67 & 16.59 & 15.40 & 15.95 & 16.50 \\
NN-Comb & 6.79 & 7.89 & 8.32 & 8.72 & 8.87 & 9.39 & 10.88 & 17.05 & 15.98 & 14.22 & 15.30 & 15.48 \\
TFT & 5.73 & 7.41 & 8.08 & 8.10 & 8.24 & 8.53 & 11.13 & 17.47 & 16.98 & 17.23 & 17.80 & 18.10 \\
\hline
\midrule
& 12-13 & 13-14 & 14-15 & 15-16 & 16-17 & 17-18 & 18-19 & 19-20 & 20-21 & 21-22 & 22-23 & 23-0 & average \\
\hline
\midrule
VARX(10,1) & 17.56 & 17.97 & 18.44 & 18.85 & 18.7 & 16.66 & 15.52 & 16.32 & 16.69 & 15.35 & 14.65 & 15.15 & 15.37 \\
LSTM & 16.53 & 16.62 & 16.55 & 16.58 & 15.7 & 15.58 & 14.85 & 16.16 & 16.71
& 14.34 & 12.38 & 12.83 & 14.14\\
LSTM-XGBoost & 15.51 & 16.18 & 16.58 & 16.21 & 15.62 & 14.53 & 13.6 & 15.44 & 14.55 & 12.31 & 12.51 & 12.55 & 13.28\\
LSTM-CNN & 18.69 & 18.34 & 17.88 & 17.42 & 16.01 & 14.94 & 14.1 & 15.23 & 15.24 & 13.13 & 11.56 & 12.5 & 13.9\\
\cellcolor{gray!30} \textbf{C-SS} & \cellcolor{gray!30} 14.7 & \cellcolor{gray!30} 15.32 & \cellcolor{gray!30} 15.55 & \cellcolor{gray!30} 15.11 & \cellcolor{gray!30} 14.93 & \cellcolor{gray!30} 13.63 & \cellcolor{gray!30} 12.7 & \cellcolor{gray!30} 14.54 & \cellcolor{gray!30} 13.86 & \cellcolor{gray!30} 11.65 & \cellcolor{gray!30} 11.32 & \cellcolor{gray!30} 11.84 & \cellcolor{gray!30} 12.69 \\
\cellcolor{gray!30} \textbf{C-PCA-SS} & \cellcolor{gray!30} 14.76 & \cellcolor{gray!30} 15.27 & \cellcolor{gray!30} 15.39 & \cellcolor{gray!30} 14.99 & \cellcolor{gray!30} 14.57 & \cellcolor{gray!30} 13.21 & \cellcolor{gray!30} 11.88 & \cellcolor{gray!30} 13.1 & \cellcolor{gray!30} 13.55 & \cellcolor{gray!30} 11.45 & \cellcolor{gray!30} 10.9 & \cellcolor{gray!30} 11.31 & \cellcolor{gray!30} 12.28 \\

\cellcolor{gray!30} \textbf{C-PCA/e-SS} & \cellcolor{gray!30} 16.49 & \cellcolor{gray!30} 16.71 & \cellcolor{gray!30} 16.35 & \cellcolor{gray!30} 14.95 & \cellcolor{gray!30} 14.40 & \cellcolor{gray!30} 13.21 & \cellcolor{gray!30} 11.7 & \cellcolor{gray!30} 12.9 & \cellcolor{gray!30} 13.18 & \cellcolor{gray!30} 11.3 & \cellcolor{gray!30} 10.63 & \cellcolor{gray!30} 12.54 & \cellcolor{gray!30} 13.12\\
\cellcolor{gray!30} \textbf{PCA-SS} & \cellcolor{gray!30} 15.04 & \cellcolor{gray!30} 15.46 & \cellcolor{gray!30} 15.11 & \cellcolor{gray!30} 15.1 & \cellcolor{gray!30} 14.4 & \cellcolor{gray!30} 13.45 & \cellcolor{gray!30} 12.02 & \cellcolor{gray!30} 12.95 & \cellcolor{gray!30} 13.27 & \cellcolor{gray!30} 11.02 & \cellcolor{gray!30} 11 & \cellcolor{gray!30} 11.57 & \cellcolor{gray!30} 12.37\\
\cellcolor{gray!30} \textbf{C-PCA-SS-h} & \cellcolor{gray!30} 14.7 & \cellcolor{gray!30} 15.16 & \cellcolor{gray!30} 15.13 & \cellcolor{gray!30} 14.93 & \cellcolor{gray!30} 14.35 & \cellcolor{gray!30} 13.19 & \cellcolor{gray!30} 11.63 & \cellcolor{gray!30} 12.7 & \cellcolor{gray!30} 13.18 & \cellcolor{gray!30} 11.15 & \cellcolor{gray!30} 10.44 & \cellcolor{gray!30} 11.2 & \cellcolor{gray!30} 11.99 \\
SimAV & 16.53 & 16.77 & 15.76 & 15.85 & 15.15 & 14.50 & 13.97 & 14.58 & 15.40 & 13.17 & 11.95 & 12.83 & 13.33 \\
CLS & 15.1 & 15.32 & 15.8 & 15.95 & 15.2 & 13.85 & 13.71 & 15.54 & 14.3 & 12.55 & 11.73 & 11.84 & 13.16 \\
IRMSE & 15.76 & 15.95 & 16.07 & 17.21 & 15.67 & 14.72 & 14.10 & 14.43 & 14.08 & 11.88 & 10.92 & 11.77 & 13.22 \\
NN-Comb & 15.39 & 15.81 & 15.95 & 16.22 & 15.83 & 15.25 & 13.54 & 15.02 & 16.18 & 14.36 & 12.90 & 13.92 & 13.30 \\
TFT & 18.59 & 16.77 & 15.35 & 15.22 & 17.33 & 14.15 & 12.02 & 13.22 & 14.16 & 11.17 & 10.35 & 12.25 & 13.14 \\
\bottomrule
\end{tabular}
}
\end{table*}

\subsubsection{The LSTM and LSTM-XGboost models} 
\label{subsub2}
Following Remark \ref{rmk1}, the next forecast model is a low-bias and potentially high-variance model. We choose the Long-Short-Term Memory (LSTM) model, which has been widely used in the recent literature ~\cite{lehna2022forecasting,lago2021forecasting}. By optimizing the number of layers, the number of neurons, and the hyperparameters listed in Table~\ref{tab:LSTM_XGBoost}, we obtained an LSTM($1\times58$) model consisting of one layer with 58 perceptrons. To better capture spatial relationships between different hours, we feed our in-sample residuals of the LSTM to an XGBoost forecast model. In this way, the whole model better captures the concurrent relationships between hours. It reduces prediction errors in both in-sample and out-of-sample evaluations. The out-of-sample residual predictions by XGBoost have been combined with LSTM out-of-sample predictions to obtain final forecasts. Table~\ref{tab:LSTM_XGBoost} shows our hyperparameters of the LSTM-XGBoost forecast model. 

The second and third rows of Table \ref{tab:Reulst_RMSE} show the RMSE of the LSTM and the LSTM-XGBoost models for different delivery periods. Except for 4 hours with negligible difference, the LSTM-XGBoost performs better than LSTM, reducing both errors and error variance. The LSTM-XGBoost model also outperforms VARX(10,1), except for a negligible difference in the first delivery period. The average RMSE (standard deviation) for all delivery periods for LSTM is 14.138 (4.083) and for LSTM-XGBoost is 13.277 (3.712).

\begin{table}[H]
\centering
\begin{minipage}[t]{0.48\textwidth}
    \centering
    \caption{The LSTM-XGBoost forecast model hyperparameters}
    \begin{tabular}{@{}ll@{}}
        \toprule
        \textbf{Hyperparameter} & \textbf{Value} \\ \midrule
        \multicolumn{2}{l}{\textbf{LSTM Parameters}} \\
        activation & tanh \\
        recurrent\_activation & linear \\
        use\_bias & True \\
        activity\_regularizer & L1.L2 \\
        recurrent\_regularizer & L1.L2 \\
        return\_sequences & False \\
        stateful & True \\
        dropout & 0.2 \\
        optimizer & RMSprop \\
        loss & huber \\ 
        look\_back & 20 days \\ \midrule
        \multicolumn{2}{l}{\textbf{XGBoost Parameters}} \\
        objective & reg:squarederror \\
        n\_estimators & 100 \\
        learning\_rate & 0.1 \\
        max\_depth & 5 \\ \bottomrule
    \end{tabular}
    \vspace{0cm}
    \footnotesize Note: For unspecified parameters, default values from TensorFlow 2.16.2 are used.
    \label{tab:LSTM_XGBoost}
\end{minipage}%
\hfill
\begin{minipage}[t]{0.48\textwidth}
    \centering
    \caption{The CNN model structure and hyperparameters}
    \begin{tabular}{@{}p{5cm}@{}}
        \toprule
        \textbf{Layer Configuration} \\ \midrule
        Conv1D(filters=64, kernel\_size=3, activation='tanh') \\
        MaxPooling1D(pool\_size=2) \\
        Dropout(rate=0.2) \\
        Conv1D(filters=128, kernel\_size=3, activation='tanh') \\
        MaxPooling1D(pool\_size=2) \\
        Dropout(rate=0.2) \\
        Flatten() \\
        Dense(units=64, activation='tanh') \\
        Dropout(rate=0.5) \\
        Dense(units=n\_hours, activation='linear') \\ \bottomrule
    \end{tabular}
    \vspace{0cm}
    \footnotesize Note: For unspecified parameters, default values from TensorFlow 2.16.2 are used.
    \label{tab:CNN_Model}
\end{minipage}
\end{table}

\subsubsection{The LSTM-CNN model} \label{subsub3}
As the fourth forecast model, we use a Convolutional Neural Network (CNN) and develop an LSTM-CNN model, following~\cite{lehna2022forecasting}, in the same way as the LSTM-XGBoost forecast model. Table~\ref{tab:CNN_Model} shows the CNN model structure and the hyperparameter values used in the LSTM-CNN model. The fourth row of Table \ref{tab:Reulst_RMSE} shows the performance of the LSTM-CNN model for different delivery periods. The LSTM-CNN model performs better in some delivery periods, but its performance is worse than that of the LSTM-XGBoost on average, and it has a higher variance than the LSTM-XGBoost model. The average RMSE over all delivery periods for the LSTM-CNN is 13.9, with a standard deviation of 3.982. 

\subsubsection{The ensemble level bias-variance property of Remark \ref{rmk1}}
In order to check the bias-variance property for the above forecast models, we minimize the in-sample RMSE of the combined forecast models with respect to model weights. For our data set, VARX and LSTM-XGBoost are the two forecast models that satisfy the property of Remark \ref{rmk1}, resulting in non-zero coefficients. In Fig. \ref{fig:model_flowchart}, this phase is represented as the selection phase, taking $[\mathbf{P}|\mathbf{X}]$ as input from the feature engineering phase and producing the selected models.

\subsection{Shrinkage phase}
 As is shown in Fig. \ref{fig:model_flowchart}, we apply PCA to the data set $[\mathbf{P}|\mathbf{X}]$ and obtain $\textbf{Z}$, then we choose the first $l$ columns of this matrix and form the matrix $\textbf{Z}(l)$ for each $l$. By using Theorem~\ref{thm2} and jointly optimizing the number of components, $l$, we obtain the joint variables $(l^*, \omega_{\text{LSTM-XGBoost}}) = (22, 0.7515)$. An important fact is that shrinkage coefficients are robust to small variations. The Clustering-PCA-Selection-Shrinkage (\textbf{C-PCA-SS}) row in Table~\ref{tab:Reulst_RMSE} shows the result in this case. As we can see, using the (\textbf{C-PCA-SS}) method reduces RMSE in \textbf{all} delivery periods compared to the input models. The proposed multi-forecast selection-shrinkage algorithm, augmented with PCA, improves the forecast accuracy, achieving an average RMSE of 12.28 with a standard deviation of 3.25. 

We also repeated the optimization of joint variables based on predicting a single hour (rather than the full vector). While this approach outperformed the vector-based one across all delivery periods, the performance gains were offset by the significantly higher computational cost. The Clustering-PCA-Selection-Shrinkage-hourly (\textbf{C-PCA-SS-h}) row in Table~\ref{tab:Reulst_RMSE} shows the obtained results for this studied case. 

\subsection{Ablation Study}
We present the performance of alternative models in this subsection to show the efficacy and importance of each sub-process we have introduced in (\textbf{C-PCA-SS}). The first alternative is to exclude PCA from the process. We used Theorem~\ref{thm2} to calculate the optimal weights for the two forecast models. In the case of not applying PCA to the data set, we obtain $\omega_{\text{LSTM-XGBoost}}=0.7529$ and $\omega_{\text{VARX}}=0.2471$.  
Notably, the results of the LSTM-XGBoost and VARX shrinkage forecast models are robust to variations in these weights. The fifth row of Table \ref{tab:Reulst_RMSE} shows the performance of the Clustering, Selection, and Shrinkage (\textbf{C-SS}) forecast model for the 24-delivery periods. The shrinkage forecast model results in a noticeable reduction in RMSE for all but two delivery periods relative to the input models, while also reducing both the mean and variance. The average RMSE (standard deviation) for all delivery periods is 12.689 (3.647).

To compare the performance of the model with respect to selecting the number of components, we repeat the process with the number of components obtained by the elbow method. In this case, the number of components retained is 10. The Clustering-PCA(elbow)-Selection-Shrinkage (\textbf{C-PCA/e-SS}) row in Table~\ref{tab:Reulst_RMSE} shows the result in this case. As shown in the table, integrating PCA with the downstream task of RMSE minimization provides a great improvement against the widely used elbow method for determining the optimal number of components. 

Finally, to show the effectiveness of clustering-based feature selection, we use only significant features obtained without partitioning the data into 3 clusters. The PCA-Selection-Shrinkage (\textbf{PCA-SS}) row in Table~\ref{tab:Reulst_RMSE} shows the result for this model, and as is evident, capturing short-term price deviations by the relevant features enhances model performance. 

To avoid overfitting, we regularized the forecast models, specifically the machine learning models. Doing so, however, increases bias. Nonetheless, we compensate for the bias using the shrinkage method at the ensemble level. For example, while the VARX model does not perform well relative to machine learning models, through the proposed shrinkage method, it does help LSTM-XGBoost perform better. 

Having established the effectiveness of the proposed selection-shrinkage algorithm and its PCA-augmented variants, we now compare these results against other advanced benchmarks. First, we evaluate the Temporal Fusion Transformer (TFT) as a recent state-of-the-art deep learning model to assess how our proposed approach performs relative to this recent deep learning model. Second, we examine widely used forecast combination methods, including (1) simple averaging, (2) constrained least squares, (3) inverse RMSE weighting, and (4) neural network–based combinations. These additional benchmarks provide a comprehensive assessment of the strengths of our proposed approach.

\subsection{Further benchmark assessment}
\subsubsection{Temporal Fusion Transformer (TFT) model}
The Temporal Fusion Transformer (TFT) is a state-of-the-art deep learning architecture for multi-horizon time series forecasting, first introduced by \cite{lim2021temporal}, and it has recently gained attention in electricity price forecasting~\cite{jiang2024probabilistic, khan2024transformer, putz2024feasibility}. Prior studies provide mixed evidence regarding its performance: \cite{ganesh2023forecasting} finds that TFT does not outperform simpler models such as LSTMs or fully connected networks, while \cite{deng2024seasonality} reports that TFT delivers results comparable to, but not consistently better than, BiLSTM, LSTM, and Temporal Convolutional Network (TCN) models.  

Motivated by these findings, we implemented a TFT model based on its original design and optimized the hyperparameters listed in Table~\ref{tab:tft_params}.\footnote{Because the hyperparameter space is high-dimensional, finding a globally optimal configuration remains challenging.} The resulting model achieved an average RMSE of 13.141 with a standard deviation of 3.926 across all delivery periods (Table~\ref{tab:Reulst_RMSE}). Although each of our base models is comparatively simple and underperforms TFT in isolation, our proposed selection--shrinkage procedure combines them into an ensemble that achieves a consistently lower RMSE than TFT. This result underscores the value of our systematic approach, which leverages complementary model strengths to outperform even state-of-the-art deep learning baselines.

\begin{table}[H]
    \centering
    \caption{Key Hyperparameters for TFT Model}
    \label{tab:tft_params}
    \footnotesize
    \begin{tabular}{@{}ll@{}}
        \toprule
        \textbf{Hyperparameter} & \textbf{Value} \\ \midrule
        \multicolumn{2}{l}{\textbf{Model Architecture}} \\
        \texttt{hidden\_size} & 27 \\
        \texttt{lstm\_layers} & 1 \\
        \texttt{num\_attention\_heads} & 2 \\
        \texttt{full\_attention} & True \\
        \texttt{feed\_forward} & GRN \\
        \texttt{dropout} & 0.2 \\
        \texttt{regularization} & L1/L2\\
        \texttt{hidden\_continuous\_size} & 27 \\
        \texttt{categorical\_embedding\_sizes} & None \\
        \texttt{norm\_type} & LN \\
        \texttt{input\_chunk\_length} & 20 \\
        \texttt{output\_chunk\_length} & 24 \\
        \multicolumn{2}{l}{\textbf{Training Parameters}} \\
        \texttt{batch\_size} & 30 \\
        \texttt{n\_epochs} & 100 \\
        \texttt{loss\_fn} & Huber \\
        \texttt{optimizer} & Adam \\
        \texttt{random\_state} & 65 \\
        \bottomrule
    \end{tabular}

    \vspace{1mm}
    \footnotesize
    \textit{Note: The model also uses EarlyStopping and ReduceLROnPlateau callbacks. GRN: Gated Residual Network. LN: Layer Normalization.}
\end{table}

\subsubsection{Alternative combination methods}
To benchmark our selection-shrinkage algorithm against established practice, we evaluate four alternative forecast combination methods. To ensure a fair and controlled comparison, we apply our feature selection procedure before evaluating all combination methods. The methods are: (i) simple averaging; (ii) Constrained Least Squares (CLS); (iii) inverse  RMSE (IRMSE) weighting; and (iv) a neural network combiner. We briefly describe each method below and report comparative results in Table~\ref{tab:Reulst_RMSE}.

\subsubsection{Simple Averaging}
In this case the the forecasted day-ahead price will be the simple average of the forecasted values of each of the single models:

\begin{equation}
    \textbf{p}^{^{SA}}_d = \frac{\textbf{p}^{VARX}_d + \textbf{p}^{LSTM}_d + \textbf{p}^{LSTM-XGBoost}_d + \textbf{p}^{LSTM-CNN}_d}{4}, 
\end{equation}

where, for example, $\textbf{p}^{VARX}_d$ is the hourly day ahead price vector for day $d$ predicted by VARX. The SimAV row of the Table~\ref{tab:Reulst_RMSE} shows the results of this method.

\subsubsection{Constrained Least Squares (CLS)}
The Constrained Least Squares (CLS) combination method addresses two well–known limitations of Ordinary Least Squares (OLS) in forecast pooling: (i) strong correlations among base forecasts inflate the variance of OLS weights, and (ii) unconstrained OLS can assign negative weights that are hard to interpret in this context \citep{theil1961pure,lawson1974solving}. The CLS method remedies both, thus stabilizing estimation and improving interpretability.

Let $\mathcal{M}$ denote the set of base models and $\mathbf{p}^{(m)}_d$ their forecast vectors for day $d$. The CLS combiner is
\begin{align}
\mathbf{p}^{\text{CLS}}_d &= \sum_{m\in\mathcal{M}} \omega_m\, \mathbf{p}^{(m)}_d + \mathbf{e}_d, 
& \text{s.t. }\; \mathbf{1}^\top \boldsymbol{\omega}=1,\;\boldsymbol{\omega}\ge 0, \label{eq:cls}
\end{align}
where, in our application $m\in\{\text{VARX},\ \text{LSTM},\ \allowbreak \text{LSTM--XGBoost},\ \allowbreak \text{LSTM--CNN}\}$.

Weights $\boldsymbol{\omega}$ are estimated from in-sample forecasts using constrained least squares and then applied to out–of–sample forecasts. The results are reported in the \textit{CLS} row of Table~\ref{tab:Reulst_RMSE}.

\subsubsection{IRMSE}  

One of the most widely used and empirically successful forecast combination schemes is the inverse RMSE weighting method \citep{wang2023forecast, diebold1987structural}. The reasoning is straightforward: better-performing models in the past should be given larger influence in the combined forecast. It is implemented by giving weights proportional to the inverse of each model's RMSE, with biases towards lower prediction error models. In our context, for $i$ corresponding to one of 
\textit{VARX}, \textit{LSTM}, \textit{LSTM-XGBoost}, or \textit{LSTM-CNN}, 
the weight of model $i$ is given by:


\begin{equation}
  \omega_i = \frac{\text{a-RMSE}_i^{-1}}{\sum_j \text{a-RMSE}_j^{-1}}.
\end{equation}

where a-RMSE means average over 24 delivery periods. 

The weights $\omega_i$ can be calculated both dynamically (online) and statically. We prefer the second one, and calculate the weights based on the in-sample data, and use them for the out-of-sample evaluation. The IRMSE row of Table~\ref{tab:Reulst_RMSE} shows the results of this method.

\subsubsection{Neural network combining method}  

A more sophisticated combination approach regresses in-sample targets on base models' predictions using a neural network \cite{zhang2003time}. Unlike linear averaging, neural networks can learn nonlinear interactions among forecasts and the target \cite{wang2023forecast}. The trade-offs are reduced interpretability and a higher risk of overfitting, necessitating careful regularization and validation \cite{makridakis2020m4} and, unlike constrained linear regression, they do not enforce non-negativity of implicit weights while introducing substantial parameter and tuning burdens \cite{goodfellow2016deep}. In our implementation, we use an LSTM combining machine with the architecture as in Section \ref{subsub2} and the hyperparameters reported in Table~\ref{tab:LSTM_XGBoost}; results appear as “NN-Comb” in Table~\ref{tab:Reulst_RMSE}.

Table~\ref{tab:Reulst_RMSE} indicates that, among the four benchmark combination schemes, the CLS method delivers the best performance. The Neural Network combiner (NN-Comb), despite its added complexity, does not improve upon CLS or IRMSE weighting and only marginally outperforms simple averaging. Crucially, none of these alternatives performs better than our proposed selection–shrinkage method.

\subsection{Robustness check of the proposed forecast model}
\begin{table}
\centering
\caption{RMSE Results of Forecast Models on Synthetic Test Data.}
\label{tab:Reulst_Artificial_RMSE}
\resizebox{\textwidth}{!}{%
\begin{tabular}{c c c c c c c c c c c c c c c c c c c c c c c c}
\toprule
& 0-1 & 1-2 & 2-3 & 3-4 & 4-5 & 5-6 & 6-7 & 7-8 & 8-9 & 9-10 & 10-11 & 11-12 \\
\hline
\midrule
VARX(10,1) & 10.48 & 10.26 & 9.93 & 9.91 & 10.06 & 10.25 & 10.55 & 11.36 & 12 & 11.86 & 11.67 & 11.62  \\
LSTM-XGBoost & 9.74 & 9.66 & 9.4 & 9.16 & 9.47 & 9.93 & 10.46 & 11.86 & 11.89 & 11.39 & 11.38 & 10.74 \\
\textbf{C-PCA-SS} & 9.26 & 9.0 & 9.05 & 8.91 & 9.1 & 9.72 & 10.17 & 11.05 & 11.12 & 10.8 & 10.82 & 10.02 \\
\hline
\midrule
& 12-13 & 13-14 & 14-15 & 15-16 & 16-17 & 17-18 & 18-19 & 19-20 & 20-21 & 21-22 & 22-23 & 23-0 \\
\hline
\midrule
VARX(10,1)& 11.23 & 10.73 & 10.72 & 10.64 & 10.8 & 11  & 11.31 & 11.9 & 12.26 & 12.24 & 11.87 & 11.44 \\
LSTM-XGBoost & 10.43 & 10.33 & 10.16 & 9.98 & 10.21 & 10.96 & 11.57 & 11.52 & 11.2 & 11 & 10.79 & 10.12 \\
\textbf{C-PCA-SS} & 9.8 & 9.73 & 9.6 & 9.3 & 9.15 & 10.11 & 10.3 & 10.83 & 10.1 & 10 & 9.92 & 9.1 \\
\bottomrule
\vspace{-0.5cm}
\end{tabular}
}
\end{table}
We created synthetic data for the period April and May 2024 by sampling from the historical data, while preserving the statistical properties of the original dataset. We did so by choosing a daily SP vector $\bf{p}_d$ and a feature vector $\bf{x}_d$ from the same week of the year and day of the week from one of the previous years, at random. We generate $50$ synthetic data sets in this way.

Table~\ref{tab:Reulst_Artificial_RMSE} shows the average of the results over the $50$ synthetic datasets. The results show that the general findings on the usefulness of our proposed PCA-augmented selection-shrinkage approach remain unchanged.

\section{Conclusion} \label{sec:conc}
This paper develops a systematic framework for forecasting the Nordic System Price (SP) that combines interpretable driver identification with a principled ensemble methodology. We first statistically characterize the SP, documenting seasonal structure, cross-delivery-period dependence, and significant drivers, and then introduce a forecast-optimized feature-engineering pipeline that integrates $k$-means clustering, the MSTD method, and SARIMA. To address imperfect multicollinearity, we apply PCA and select the number of components by minimizing the downstream RMSE rather than relying on heuristic rules. 

Building on these inputs, we propose a PCA-augmented selection–shrinkage framework that combines complementary forecasting models using theoretically grounded, variance-minimizing weights. On Nordic data (2015–2024), the approach improves accuracy across all 24 delivery periods and consistently outperforms individual models, alternative combination schemes, and the state-of-the-art Temporal Fusion Transformer (TFT). In our main specification (\textbf{C-PCA-SS}), the ensemble achieves an average RMSE of 12.28 and standard deviation of 3.25, with robustness checks on 50 synthetic datasets confirming that these gains are stable. Importantly, these improvements are obtained at a computational cost comparable to the studied baseline models. 

For practitioners and policy makers, the proposed framework offers interpretable insights into price drivers and a more accurate day-ahead forecast of SP, supporting risk management and the design of hedge instruments such as EPADs. Although demonstrated on the Nordic SP, the methodology is general and can be applied to other interconnected electricity markets where cross-period dependencies are significant.

Two promising extensions remain for future research. First, the in-sample covariance matrix used in Theorem~\ref{thm2} could be replaced with structured or shrinkage covariance estimators tailored to the characteristics of the base models. 
Second, integrating spike forecasting modules into the proposed approach could further reduce RMSE and strengthen its performance.

\section*{Acknowledgments}
This research was supported by VINNOVA under project number 2022-01-425. 
All data used in this study are available through the Nord Pool data portal, accessible at \url{https://www.nordpoolgroup.com/}.

\begin{appendices} 
\section{The Canova-Hansen (CH) test} \label{app:1}
The CH test is conducted by means of the following regression model:
\begin{equation}
\label{equ:CH-regression}
(1 - B^d)y_t = \mu + f'_t \gamma + e_t,
\end{equation}
where $d$ is the order of integration of the long run of the series and $f'_t$ is a term that fits the deterministic seasonal component by means of trigonometric functions. In the case of the weekly seasonality, it is defined as:
\begin{equation}
f'_t = \left( \cos(\omega t), \sin(\omega t), \cos(2\omega t), \sin(2\omega t), \cos(3\omega t), \sin(3\omega t) \right)
\end{equation}
Under the alternative hypothesis of seasonal non-stationarity, the coefficients of the seasonal regressors are assumed to be time-variant. In a special case, they follow a random walk process:
\begin{equation}
\label{equ:Random-walk}
 \gamma_t = \gamma_{t-1} + u_t,
\end{equation}
where $u_t$ is an i.i.d. process independent of the $e_t$ with zero mean and a constant variance. If the null hypothesis of the stability in the seasonal coefficient is to be satisfied, then the covariance matrix of the process should be equal to zero. 

Canova and Hansen introduce the following statistics: a high value indicates rejection of the null hypothesis at frequencies specified by A.

\begin{equation}
L = \frac{1}{n^2} \sum_{i=1}^{n} \hat{F}'_i A (A'\hat{\Omega}A)^{-1} A' \hat{F}_i = 
\frac{1}{n^2} \operatorname{tr}\!\!\left[(A'\hat{\Omega}A)^{-1} A' \!\left(\sum_{i=1}^{n} \hat{F}_i \hat{F}'_i\right)\! A \right].
\end{equation}

where $\hat{F}_i =\sum_{t=1}^{i}f_t \hat{et_t}$ and $\hat{e}_t$ s are OLS residuals from equation (\ref{equ:CH-regression}) and A is $(s-1) \times a$ matrix that selects the $a$ elements of $\gamma_i$ that is to be tested for nonstationarity. If we want to test the stationarity alternative simultaneously at all seasonal frequencies, then $a = 6$ and $A =I_6$. When we want to test the stability hypothesis at a specific frequency $k\omega$ where $k = 1,2,3$, then $a=2$ and the A matrix would be either $A = (I_2,\mathbf{0},\mathbf{0})$ for $k = 1$ and $A = (\mathbf{0},I_2,\mathbf{0})$ for $k = 2$ and $A = (\mathbf{0},\mathbf{0},I_2)$ for $k = 3$, where $\mathbf{0}$ is $2\times2$ null matrix and the equation (\ref{equ:Random-walk}) should be modified as:

\begin{equation}
 A'\gamma_t = A'\gamma_{t-1} + u_t.
\end{equation}

 Finally, $\hat{\Omega}$ is defined in equation (\ref{equ:omega}), which is a semi-parametric heteroskedasticity and autocorrelation-consistent covariance estimator of the long-run variance of the process. 
 
 \begin{equation}
 \label{equ:omega}
 \hat{\Omega} = \sum_{k=-m}^{m}W(k/m) \frac{1}{n} \sum_{i=1}^{n}f_{i+k} \hat{e}_{i+k}f'_t \hat{e}_i,
\end{equation}

where W(.) is any kernel function like Bartlett or quadratic spectral that produces a positive semi-definite covariance matrix estimation \cite{canova1995seasonal}. 

\end{appendices}

\raggedright
\bibliography{references}

\end{document}